\newcommand{\hl}[2]{#1}
\newcommand{\hlim}[1]{#1}
\newcommand{\hli}[1]{#1}
\newtheorem{theorem}{Theorem}
\newtheorem{claim}{Claim}
\newtheorem{corollary}{Corollary}
\newtheorem{remark}{Remark}
\newtheorem{lemma}{Lemma}
\newtheorem{observation}{Observation}
\newcommand{\Cautoref}[1]{\hyperref[#1]{\Cref{#1}}}
\newcommand{\Cautorefs}[2]{\hyperref[#1]{#2 \ref{#1}}}
\DeclareMathOperator*{\argmin}{arg\,min}
\DeclareMathOperator*{\argmax}{arg\,max}
\tikzset{% customization of pattern
	% based on <m.wibrow@gm...> - 2013-03-24 07:20: 
	hatch distance/.store in=\hatchdistance,
	hatch distance=5pt,
	hatch thickness/.store in=\hatchthickness,
	hatch thickness=5pt
}
\pgfqpoint{\hatchdistance}{\hatchdistance}}% above right
\begin{document}

\begin{frontmatter}

%% Title, authors and addresses

%% use the tnoteref command within \title for footnotes;
%% use the tnotetext command for theassociated footnote;
%% use the fnref command within \author or \address for footnotes;
%% use the fntext command for theassociated footnote;
%% use the corref command within \author for corresponding author footnotes;
%% use the cortext command for theassociated footnote;
%% use the ead command for the email address,
%% and the form \ead[url] for the home page:
 \title{Single-machine scheduling with an external resource}
%% \tnotetext[label1]{}
%% \author{Name\corref{cor1}\fnref{label2}}
%% \ead{email address}
%% \ead[url]{home page}
%% \fntext[label2]{}
%% \cortext[cor1]{}
%% \address{Address\fnref{label3}}
%% \fntext[label3]{}

%% use optional labels to link authors explicitly to addresses:
 \author[wupertal]{Dirk Briskorn}
 \author[SKEMA]{Morteza Davari\footnote[1]{Corresponding author}}
 \author[Leuven]{Jannik Matuschke}
 \address[wupertal]{Chair of Production and Logistics, University of Wuppertal, Germany}
\address[SKEMA]{SKEMA Business School, Universit\'{e} C\^{o}te d'Azur, France}
%\address[Skema]{Skema Business School, France}
\address[Leuven]{Research Centre for Operations Management, KU Leuven, Belgium}

\begin{abstract}
This paper studies the complexity of single-machine scheduling with an external resource, which is rented for a non-interrupted period. Jobs that need this external resource are executed only when the external resource is available. There is a cost associated with the scheduling of jobs and a cost associated with the duration of the renting period of the external resource. \hli{We look at four classes of problems with an external resource: a class of problems where the renting period is budgeted and the scheduling cost needs to be minimized, a class of problems where the scheduling cost is budgeted and the renting period needs to be minimized, a class of two-objective problems where both, the renting period and the scheduling cost, are to be minimized, and a class of problems where a linear combination of the scheduling cost and the renting period is minimized.} We provide a thorough complexity analysis (NP-hardness proofs and  \hli{(pseudo-)polynomial} algorithms) for different members of these \hli{four} classes. 
\end{abstract}

\begin{keyword}
	scheduling \sep single-machine scheduling \sep external resource \sep complexity \sep pseudo-polynomial algorithm
\end{keyword}

\end{frontmatter}

%% \linenumbers

\section{Introduction}

\hl{In the modern world where businesses face the challenge of the rapid growth of competitors, only those who lead their business in a more effective way manage to survive and reach success. Outsourcing, which has become a very popular trend, is one of the techniques that can help the business to reach advantages above opponents. This trend is very interesting because ``as more companies become involved in outsourcing, opportunities are opening up to owners of small and medium enterprises" \citep{Brown2005}. In the context of scheduling, outsourcing can be seen as using external resources. Examples of such resources are heavy or very expensive machinery, human experts, high-tech equipment, etc.}{R2}

We study single-machine scheduling problems with an external resource. We assume that the machine is available throughout the planning horizon and can process job at a time. However, some of the jobs require an external (and relatively expensive) resource (such as a crane, loader, human expert, etc.) and only one job can use the external resource at each moment in time. The external resource can be rented only once for an uninterrupted period. Both, scheduling of jobs and renting the external resource, incur costs. The renting cost is a linear function of the renting period. We investigate \hli{four} variants of problems. 
\begin{itemize}
	\item A variant where the renting cost is budgeted and the scheduling cost is to be minimized. 
	\item A variant where the scheduling cost is budgeted and the renting cost is to be minimized. 
	\item A two-objective variant where both, scheduling cost and renting cost, are to be minimized. 
	\item \hli{A variant where a linear combination of the scheduling cost is budgeted and the renting cost is to be minimized.} 
\end{itemize}
For the sake of simplicity, the two terms `external resource' and `resource' are used interchangeably in the remainder of this paper.

This paper is devoted to study the complexity of the above problems. Below, we review the complexity of a number of relevant classical scheduling problems. The single machine scheduling problem to minimize the total weighted completion time is polynomially solvable even with  a serial-parallel precedence graph \citep{Lawler1978} or with two-dimensional partial orders  \citep{AM09}.  On the other hand, single machine scheduling to minimize the total weighted completion time becomes strongly NP-hard as soon as release dates are present \citep{Lenstra1977}. Similarly, the single machine scheduling problem to minimize the maximum lateness is polynomially solvable even in the presence of precedence constraints \citep{Lawler1973} but strongly NP-hard with release dates \citep{Lenstra1977}. Moreover, the single machine scheduling problem with an objective function of weighted number of tardy jobs is known to be weakly NP-hard \citep{Karp:72,Lawler1969} whereas the problem with an objective function of total weighted tardiness is already strongly NP-hard \citep{Lenstra1977}. 

The literature on scheduling with external resources is rather scarce. The most relevant problem in the project scheduling literature is perhaps the \emph{resource renting problem} (RRP). The RRP as initially proposed by \citet{Nubel2001} aims to minimize the costs associated with renting resources throughout a project. These costs include fixed handling or procurement cost and variable renting cost. Unlike the setting in this paper where the external resource is rented for an uninterrupted period, the RRP considers renting resources that can be rented for as many as needed disjoint intervals. The problem has been recently extended by \citet{Vandenheede2016} who combined the RRP and the total adjustment cost problem and by \citep{Kerkhove2017} who studied a variant of the RRP with overtime. 

The RRP is closely associated with the \emph{resource availability cost problem} where resources are no longer to be rented but to be utilized. The assumption is that the resources, once utilized, are available for the whole duration of the project. The decision variables are the resource utilizations and the starting times. The utilization of a resource imposes some expenses in the resource availability cost problem, which needs to be minimized \citep{Rodrigues2010}. The problem is also known as the \emph{resource investment problem} \citep{Drexl2001}. To the best of our knowledge, the problem was first introduced by \citet{Mohring1984b} motivated by a bridge construction project. 

A budget imposed on the length of the renting period can be seen as a type of maximum delay constraints: For any pair of jobs $i, j$ that require the resource, the time between the start of $i$ and the end of $j$ must not exceed the budget. Such maximum delay constraints have been investigated by \citet{Wikum1994} in the context of the \emph{single-machine generalized precedence-constrained scheduling problem}. However, in this problem, maximum delay constraints are always combined with a non-negative minimum delay, thus enforcing an order among the two jobs. Thus, complexity results for this problem do not extend to external resource renting.

\hl{Two-objective scheduling problems, with one traditional objective and one non-traditional objective, are not new in the scheduling literature. For instance, 
\citet{Wan2010} study scheduling problems where the sum of a traditional cost-measure and a \emph{time-usage cost} is to be minimized. In this setting, a usage cost has to be paid for any time-slot in which the machine is active, with the usage cost varying over time.
The authors consider total completion time, maximum lateness/tardiness, total weighted number of tardy jobs, and total tardiness as traditional costs and identify special cases that can be solved efficiently. 
\citet{Chen2018} discuss a similar setting, but with the possibility to preempt jobs. They provide a polynomial time approximation scheme for minimizing the sum of time-usage costs and weighted total completion time.}{R1C2}

In this paper, we discuss the complexity of the \hli{four} classes of single-machine scheduling problems with an external resource and different objective functions. The remainder of this text is structured as follows: we formally define different variants of our problem in \Cautoref{sec:probdef}, discuss the complexity of \hli{these variants in \Cautorefs{sec:complex}{Sections}, \ref{sec:other} and \ref{sec:comb}} and finally summarize the results and discuss future research possibilities in \Cautoref{sec:conclusion}. 

\section{Problem definition}
\label{sec:probdef}

We consider a set $J$ of $n$ jobs with $p_j$, $d_j$, and $w_j$ representing the processing time, the due date, and the weight of job $j\in J$, respectively. Throughout the paper we assume $p_j$ and $w_j$ to be integer for each job $j\in J$. We let \mbox{$P =\sum_{j\in J} p_j$} and $W= \sum_{j\in J} w_j$ denote the sum of processing times and the sum of weights, respectively. There is a subset $J^r\subseteq J$ of jobs that require an external resource. We refer to jobs in $J^r$ as \emph{r-jobs} (resource jobs) and to jobs in $J^o =J\setminus J^r$ as \emph{o-jobs} (ordinary jobs). We assume that the external resource must be rented from the start of the first r-job to the completion of the last r-job. Let $C_j$ be the completion time of job $j$. The length of the renting period, which is denoted by $er$, is $er=\max_{j\in J^r}{C_j} - \min_{j\in J^r} \{C_j - p_j\}$. Now, for any single-machine scheduling problem $1||\gamma$ with objective function $\gamma$, there are \hli{four} natural counter-part problems with an external resource:
\begin{itemize}
	\item Problem $1|er|\gamma$ is to find a sequence of jobs that minimizes scheduling cost $\gamma$ among all sequences with a length of the renting period of at most $K^r$.
	
	\item Problem $1|\gamma|er$ is to find a sequence of jobs that minimizes the length of the renting period among all sequences with a scheduling cost of at most $K^{\gamma}$.
	
	\item Problem $1||(\gamma,er)$ is to find the sequences in the Pareto-front with respect to minimization of both, scheduling cost and the length of the renting period. 
	
	\item \hli{Problem $1||\gamma+er$ is to find the sequence that minimizes the sum of the scheduling cost and renting cost $\lambda\cdot er$ where $\lambda\geq 0$ denotes the renting cost per time unit.}
\end{itemize}

For a given sequence $\sigma$ of jobs we denote by $C^\sigma_j$ the completion time of job $j$, by $L^\sigma_j = C^\sigma_j - d_j$ its lateness and we let $U^\sigma_j$ be the indicator for tardiness, i.e., $U_j = 1$ if $C^\sigma_j>d_j$ and $U^\sigma_j = 0$ otherwise. We omit the superscript $\sigma$ whenever the sequence is clear from the context.

In what follows, we explore the complexity of the above problems when $\gamma$ is one of the following objective functions: 
\begin{itemize}
	\item Total completion time ($\sum C_j$)
	\item Total weighted completion time ($\sum w_jC_j$)
	\item Maximum lateness ($\max L_j $)
	\item Weighted number of tardy jobs ($\sum w_jU_j$)
\end{itemize}
\hli{Except $1||\sum w_jC_j + er$ and $1||\sum C_j + er$, which are polynomially solvable, all the other problems are shown to be NP-hard. Despite being NP-hard, they allow for pseudo-polynomial algorithms with the running time depending on the total processing time $P$ of all jobs. These results are described in \Cautorefs{sec:complex}{Sections}, \ref{sec:other} and \ref{sec:comb}.} A summary of \hl{classic results and}{R1C1}  complexity orders of our algorithms is given in \Cautoref{tbl:summery}.

\begin{table}[t]
	\centering
	\small
	\begin{tabular*}{\linewidth}{p{.2\linewidth}p{.2\linewidth}p{.6\linewidth}}
		\hline 
	 Problem	& $\gamma$ & Complexity  \\ 
		\hline 
		\multirow{4}{*}{ $1||\gamma$} & $\sum C_j$	& $\mathrm{O}(n \log n)$   \citep{Smith1956}\\		
		& $\sum w_jC_j$		& $\mathrm{O}(n \log n)$ \citep{Smith1956}\\	
		& $\max L_j$	& $\mathrm{O}(n \log n)$ \citep{Lawler1973} \\ 
		& $\sum w_jU_j$ & $\mathrm{O}(nP)$ \citep{Lawler1969} \\ \hline
		\multirow{4}{*}{ $1|er|\gamma$} & $\sum C_j$		& $\mathrm{O}(n^2P)$  \\		
		& $\sum w_jC_j$		& $\mathrm{O}(nP\min\{W,P\})$ \\	
		& $\max L_j$	& $\mathrm{O}(nP)$  \\ 
		& $\sum w_jU_j$ & $\mathrm{O}(nP^4)$\\ \hline 
		\multirow{4}{*}{ $1|\gamma|er$} & $\sum C_j$		& $\mathrm{O}(n^2P)$  \\		
		& $\sum w_jC_j$		& $\mathrm{O}(nP\min\{W,P\})$ \\	
		& $\max L_j$	& $\mathrm{O}(nP)$  \\ 
		& $\sum w_jU_j$ & $\mathrm{O}(nP^4\log P)$\\ \hline 
		\multirow{4}{*}{ $1||(\gamma,er)$} & $\sum C_j$		& $\mathrm{O}(nP^2)$  \\		
		& $\sum w_jC_j$		& $\mathrm{O}(nP^2)$ \\	
		& $\max L_j$	& $\mathrm{O}(nP^2)$  \\ 
		& $\sum w_jU_j$ & $\mathrm{O}(nP^5)$\\ \hline 
		\multirow{4}{*}{ $1||\gamma+er$} & $\sum C_j$		& $\mathrm{O}(n\log n)$  \\		
		& $\sum w_jC_j$		& $\mathrm{O}(n\log n)$ \\	
		& $\max L_j$	& $\mathrm{O}(nP^2)$  \\ 
		& $\sum w_jU_j$ & $\mathrm{O}(nP^5)$\\ \hline 
	\end{tabular*} 
	\caption{Summary of classic results and complexity orders of our algorithms}
	\label{tbl:summery}
\end{table}

Note that these results suggest that all variants are polynomially solvable whenever the total processing time $P$ is polynomial in the number $n$ of jobs. Note, furthermore, that the case with identical processing times, that is $p_j=p$ for each job $j$, can be easily reduced to the case with $p=1$. The latter is solvable in polynomial time as $P = n$ in this case. Hence, each problem is solvable in polynomial time under identical processing times.

%hli{We review the complexity of a number of relevant classical scheduling problems. The single machine scheduling problem to minimize the total weighted completion time is polynomially solvable even with  a serial-parallel precedence graph \citep{Lawler1978} or with two-dimensional partial orders  \citep{AM09}.  On the other hand, single machine scheduling to minimize the total weighted completion time becomes strongly NP-hard as soon as release dates are present \citep{Lenstra1977}. Similarly, the single machine scheduling problem to minimize the maximum lateness is polynomially solvable even in the presence of precedence constraints \citep{Lawler1973} but strongly NP-hard with release dates \citep{Lenstra1977}. Moreover, the single machine scheduling problem with an objective function of weighted number of tardy jobs is known to be weakly NP-hard \citep{Karp:72,Lawler1969} whereas the problem with an objective function of total weighted tardiness is already strongly NP-hard \citep{Lenstra1977}. }
 
\section{Complexity results for $1|er|\gamma$}
\label{sec:complex}

In this section, we discuss the complexity of $1|er|\gamma$. Throughout this section, we sometimes refer to a sequence $\sigma$ as \emph{feasible} which means  it respects the resource budget.

Without loss of generality, we assume $J = \{1, \dots, n\}$ (later, we will assume this numbering to reflect an ordering of the jobs according to some attribute) and define $J[a, b] := \{j \in J : a \leq j \leq b\}$ for $a, b \in J$ .

We denote the total processing time of a job set $S$ by $p(S) = \sum_{j\in S} p_j$ and its total weight by $w(S) = \sum_{j\in S} w_j$. Also, we use $\mathrm{TWC}(\sigma)$ as the total weighted completion time and $L_{\max}(\sigma)$ as the maximum lateness for sequence $\sigma$. Note that, to avoid excess of notations, we let $\sigma$ not only represent a sequence, but also imply the sequence's set of jobs. Thus, $p(\sigma)$ denotes to total processing time of jobs present in $\sigma$.

\subsection{Total weighted completion time}
\label{subsec:twc}

In this section, we review the complexity of $1|er|\sum w_j C_j$. We first prove that even the unweighted problem $1|er|\sum C_j$ is already NP-hard (\Cautoref{totalcompletion}) and then we propose a pseudo-polynomial algorithm for $1|er|\sum w_j C_j$ (\Cautoref{totalwc}). 

\begin{theorem}
	\label{totalcompletion}
	$1|er|\sum C_j$ is NP-hard.
\end{theorem}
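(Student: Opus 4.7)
I plan to reduce from the Partition problem, which is weakly NP-hard. Given an instance with positive integers $a_1, \ldots, a_m$ summing to $2A$, I will construct an instance of $1|er|\sum C_j$ comprising $m$ o-jobs (one per Partition element, with processing times derived from $a_i$) plus a small collection of r-jobs with carefully chosen sizes, together with a budget $K^r$ and a cost threshold $T$. The budget will be tuned so that only o-jobs whose total processing time is at most $A$ can fit inside the renting window (strictly between the first and the last r-job).

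The strategy proceeds in three steps. First, I will characterize feasible optimal schedules structurally: within each block---before the window, inside the window, and after the window---the jobs must be ordered by SPT. Via an exchange argument, and by choosing the ``large'' r-job $r_\ell$ to have processing time of order $n \cdot \max_j p_j$, I will show that no o-job is scheduled after the window in any optimal schedule, reducing the decision to a bipartition of the o-jobs into a ``before'' set $B$ and an ``inside'' set $I$ with $p(I) \leq A$. Second, I will derive a closed-form expression for $\sum C_j$ in terms of this partition, obtaining a formula that combines the SPT-sums within each block with a cross term of the form $(|I|+2)(p(B)+1)$ capturing how jobs in $B$ delay everything scheduled later. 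Third, I will set the scheduling cost threshold $T$ such that $T$ is achievable if and only if a subset of o-jobs sums to exactly $A$, which is precisely the Partition condition.

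The main obstacle is engineering the construction so that the threshold $T$ cleanly separates YES and NO instances of Partition. A naive reduction---taking the o-jobs to have exactly the Partition processing times and the two r-jobs to be of sizes $1$ and $M$---can fail, because the cost function $\sum C_j$ may be minimized by configurations in which $p(I)$ is strictly less than $A$ (for instance, by placing only the single largest o-job inside the window), rendering the Partition subset condition invisible to the optimum. To overcome this, I expect to need either a careful scaling of the o-job processing times (for example, $p_j = C + a_j$ for a large common constant $C$) or the addition of auxiliary ``padding'' jobs, so that the SPT-within-block contributions become essentially independent of the specific choice of jobs in each block, leaving the packing condition $p(I) \leq A$ as the decisive factor for optimality.
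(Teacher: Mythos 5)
There is a genuine gap, and it is located exactly where you flag the ``main obstacle'': a reduction from plain \textsc{Partition} cannot be completed in the way you describe, because for the objective $\sum C_j$ the cost of a block structure is \emph{not} a function of the subset sums $p(B)$, $p(I)$ (and block cardinalities) alone. Writing the cost as $\sum_k c_k\, p_{\sigma(k)}$ with positional coefficients $c_k$ decreasing along the sequence, the within-block SPT sums depend on \emph{which} job occupies \emph{which} position. Your fix of setting $p_j = C + a_j$ for a large common $C$ does force the two blocks to have equal (or otherwise fixed) cardinalities, but it only makes the $C$-part of the cost assignment-independent; the $a_j$-parts still carry coefficients ranging over $1,\dots,|B|$ and $1,\dots,|I|$, and the variation of $\sum_k c_k a_{\sigma(k)}$ across different assignments with the \emph{same} value of $p(I)$ is of order $m\cdot\sum_i a_i$, which swamps the $O(m)$ signal that distinguishes $p(I)=A$ from $p(I)=A\pm 1$. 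Concretely, a YES instance of \textsc{Partition} whose only witnessing subsets have unfavourable positional cost would map to a scheduling instance whose optimum exceeds any threshold $T$ tight enough to exclude NO instances, so the equivalence fails in one direction. Padding jobs do not escape this either, since they shift all coefficients uniformly without decoupling the $a_j$ from their positions.

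The way out is to change the source problem rather than the gadget. The minimum over all assignments of the position-weighted sum $\sum_{k}(m+1-k)\bigl(p_{\sigma(k)}+p_{\sigma(k+m+1)}\bigr)$ is, by a rearrangement argument on strictly increasing $a_1<\dots<a_{2m}$, attained exactly when for each $k$ one of $\{a_{2k-1},a_{2k}\}$ is placed before the window and the other inside it; this is precisely the structure of \textsc{Even-Odd-Partition}, and the paper reduces from that problem for this reason. Its construction otherwise matches yours closely (scaled o-jobs $p_j=B^2+a_j$, a zero-length r-job and a huge r-job delimiting the window, a budget forcing exactly $m$ o-jobs inside), and it sets the threshold at the rearrangement minimum so that meeting it forces both the even--odd pairing and the exact subset sum (via a term $(m+1)\delta$ penalizing any slack in the start of the window). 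If you keep plain \textsc{Partition} as the source, you would need the threshold to equal $\min\{\text{positional cost of }S : p(S)=A\}$, which is itself an optimization over the unknown witness and cannot be written down in the reduction.
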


\begin{proof}
	We prove the NP-hardness of $1|er|\sum C_j$ by a reduction from \textsc{Even-Odd-Partition} which is known to be NP-hard, see \citep{Garey1988}.
	
\textsc{Even-Odd-Partition}: Given integers $a_1,\ldots,a_{2m}$ with $a_{k-1}<a_{k}$ for $k = 2, \dots ,2m$ and with total value $2B$, is there a subset of $m$ of these numbers with total value of $B$ such that for each $k=1,...,m$ exactly one of the  pair $\{a_{2k-1},a_{2k}\}$ is in the subset? 

In the following, we assume $B \ge 2m(m+1) -2$ for our instance. Note that we can always avoid $B < 2m(m+1) -2$ by increasing the value of each integer $a_k$, $k=1,\ldots,2m$, by $2(m+1)$ and increasing the value of $B$ by $2m(m+1)$, accordingly.
	
	Given such an instance of \textsc{Even-Odd-Partition}, we construct an instance of $1|er|\sum C_j$ with $2m+2$ jobs as follows:
	\begin{itemize}
		\item $J=\{1,\ldots,2m+2\}$ and $J^r=\{2m+1,2m+2\}$, 
		\item $p_j=B^2+a_j$ for each $j=1,\ldots,2m$, 
		\item $p_{2m+1}=0$ and $p_{2m+2}=C+D+1$, and
		\item $K^r=p_{2m+2}+mB^2+B$
	\end{itemize}
	where 
	$$C=\sum_{k=1}^m(m+1-k)(p_{2k-1}+p_{2k})+(mB^2+B)(m+1)$$
 and 
	 $$D=\sum_{j=1}^{2m}p_j = 2mB^2 + 2B.$$
	 
	We claim that there is a feasible schedule with total completion time of no more than $$2(C+D)+1$$ if and only if the answer to the instance of 
	\textsc{Even-Odd-Partition} 	
	is yes.
	
	First, consider a job sequence $\sigma$ with total completion time of no more than $2(C+D)+1$.
	
	\begin{claim}
	Job $2m+2$ is the last job in $\sigma$ and job $2m+1$ is not started before $mB^2+B$.
	\end{claim}
	\begin{proof}
	    Assume that job $2m+2$ is not the last job. Then, at least two jobs have a completion time of at least $p_{2m+2}=C+D+1$ and, thus, total completion time is at least $2(C+D)+2$. 
	    
	    Due to $C_{2m+2}^{\sigma}=\sum_{j=1}^{2m+2}p_j$ and due to feasibility of $\sigma$, job $2m+1$ is not started before
	    \begin{align*}
	        D+p_{2m+1}+p_{2m+2}-(p_{2m+2}+mB^2+B)=D-mB^2-B=mB^2+B.
	    \end{align*}
	\end{proof}

	\begin{figure*}
		
		\centering
		\resizebox{\linewidth}{!}{%
		
		\begin{tikzpicture}[
		%We set the scale and define some styles
		axis/.style={thick, ->, >=stealth'},
		important line/.style={very thick},
		dashed line/.style={dashed,  thick},
		every node/.style={color=black,}
		]

		\draw[dotted] (-.1,1.5)  -- (16.5,1.5);	
		
		\node[fill=red!10, rectangle, draw=black, minimum height=1.5cm,minimum width=10cm] at (11,0.75) {\scriptsize$\sigma(2m+2) =2m+2$};
		\node[fill=green!10, rectangle, draw=black, minimum height=1.5cm,minimum width=.7cm, label=] at (.35,0.75) {\rotatebox[]{90}{\scriptsize$\sigma(1)$}};
		\node[fill=green!10, rectangle, draw=black, minimum height=1.5cm,minimum width=.4cm, label=] at (3.1,0.75) {\rotatebox[]{90}{\scriptsize$\sigma(m)$}};
		\node[fill=green!10, rectangle, draw=black, minimum height=1.5cm,minimum width=.6cm, label=] at (3.6,0.75) {\rotatebox[]{90}{\scriptsize$\sigma(m+2)$}};
		\node[fill=green!10, rectangle, draw=black, minimum height=1.5cm,minimum width=.8cm, label=] at (5.6,0.75) {\rotatebox[]{90}{\scriptsize$\sigma(2m+1)$}};
		% J curve is drawn
		
		\node[] at (1.7,0.75) {{\scriptsize$...$}};
		\node[] at (4.5,0.75) {{\scriptsize$...$}};
		
		\node[] (D1) at (3.5,3) {{\scriptsize$\sigma(m+1) =2m+1$}};
		\draw[<-] (D1) -- (3.3,1.4);
		
		\draw[important line] (3.3,1.5) -- (3.3,0.0);
		
		\draw[dotted] (0,2) -- (0,-0.3) node(yline)[below] {\scriptsize$0$};
		\draw[dotted] (3.3,2) -- (3.3,-0.3) node(yline)[below] {\scriptsize$mB^2+B$};
		\draw[dotted] (6,2) -- (6,-0.3) node(yline)[below] {\scriptsize$D=2mB^2+2B$};
		\draw[dotted] (16,2) -- (16,-0.3) node(yline)[below] {\scriptsize$C+2D+1$};

		\draw[decoration={brace,raise=5pt},decorate]
		(3,1.45) -- node[above=6pt] {\scriptsize$\delta$} (3.3,1.45);
		\draw[decoration={brace,raise=5pt},decorate]
		(0,1.85) -- node[above=6pt] {\scriptsize Jobs before $2m+1$} (3.3,1.85);
		\draw[decoration={brace,raise=5pt},decorate]
		(3.3,1.85) -- node[above=6pt] {\scriptsize Jobs after $2m+1$} (6,1.85);
		\draw[decoration={brace,raise=5pt},decorate]
		(6,1.85) -- node[above=6pt] {\scriptsize $C+D+1$} (16,1.85);
		
		\draw[axis] (-.1,-.2)  -- (16.5,-.2);
		
		\end{tikzpicture}
}

		\caption{The schedule for sequence $\sigma$ in the proof of \Cautoref{totalcompletion}}
		\label{fig:seqsigma}
	\end{figure*}
	
	\begin{claim}
	Exactly $m$ jobs are scheduled between $2m+1$ and $2m+2$ in $\sigma$.
	\end{claim}
	\begin{proof}
	On the one hand, no more than $m$ jobs can be scheduled between $2m+1$ and $2m+2$ since total processing time of the jobs following job $2m+1$ for any $B>1$ amounts to at least
	$$p_{2m+2} + (m+1)B^2 > p_{2m+2} + mB^2+B = K^r.$$
	On the other hand, if less than $m$ jobs are scheduled between $m+1$ and $m+2$, the total processing time $\mathrm{TC}(\sigma)$ exceeds $2(C+P)+1$ since
	\begin{flalign*}
	\mathrm{TC}(\sigma)>&\underbrace{\sum_{j=1}^{2m}jB^2+2B}_{\text{A LB for }\sum_{j=1}^{2m}C_j}+\underbrace{(m+1)B^2}_{\text{A LB for }C_{2m+1}}+\underbrace{C_{\max}}_{C_{2m+2}}\\
	=\quad&\sum_{j=1}^{m}(m+1-j)B^2+\sum_{j=1}^{m}(m+m+1-j)B^2+(m+1)B^2\\ 
	& +2B+C_{\max}\\
	=\quad&\sum_{j=1}^{m}2(m+1-j)B^2+(mB^2)(m+1)+B^2+2B+C_{\max}\\
	=\quad&\sum_{j=1}^{m}2(m+1-j)(B^2+2B)+(mB^2)(m+1)\\
	&+\underbrace{B^2 +2B - 2m(m+1)B }_{\ge 0 \quad (\text{since } B \ge 2m(m+1) -2)}\ +\ C_{\max}\\
	\geq\quad&\sum_{j=1}^{m}2(m+1-j)(B^2+2B)+(mB^2)(m+1)+C_{\max}\\
	=\quad&\underbrace{\sum_{j=1}^{m}2(m+1-j)(B^2+B)+(mB^2+mB)(m+1)}_{> C}\ +\ C_{\max}\\
	>\quad&C+C_{\max}=C+D+p_{2m+2}=C+D+C+D+1 \\
	=\quad& 2(C+D)+1. 
	\end{flalign*}%
    \end{proof}
    
	Following the above two claims, we conclude that $\sigma{(m+1)} = 2m+1$ and $\sigma{(2m+2)} = 2m+2$. The schedule for sequence $\sigma$ is depicted in \Cautoref{fig:seqsigma}. We derive the total completion time $\textrm{TC}(\sigma)$ of $\sigma$ as follows:
	\begin{flalign*} 
	\textrm{TC}(\sigma) &\ =   \sum_{k = 1}^{m} {C_{\sigma(k)}} \ +\underbrace{mB^2+B+\delta}_{C_{2m+1}}+ \sum_{k = m+1}^{2m+1} {C_{\sigma(k)}} +\underbrace{C_{\max}}_{C_{2m+2}} 
	\end{flalign*}
	where $$\delta=\sum_{k=1}^m p_{\sigma(k)}-(mB^2+B)$$ is the difference between the starting time of job $2m+1$ according to $\sigma$ and its earliest starting time. Note that $\delta\geq 0$ due to feasibility of $\sigma$. Since $C_{\sigma(k)}=\sum_{s=1}^{k} p_{\sigma(s)}$, $\textrm{TC}(\sigma)$ can be rewritten as 
	\begin{flalign*}
	\textrm{TC}(\sigma) = &\  \sum_{k=1}^{m} {(2m+1-k)p_{\sigma(k)}}+\sum_{k=1}^m(m+1-k)p_{\sigma(k+m+1)}\\
	&+{mB^2+B+\delta}+{C_{\max}} \\
	= &\  \sum_{k=1}^m(m+1-k)p_{\sigma(k)}+\sum_{k=1}^m(m+1-k)p_{\sigma(k+m+1)} \\ &+(m+1)(mB^2+B+\delta)+C_{\max}\\
	=&\ \sum_{k=1}^m(m+1-k)\left(p_{\sigma(k)}+p_{\sigma(k+m+1)}\right) +(m+1)\delta\\
	&+\underbrace{(m+1)(mB^2+B)+C+2D+1}_{\text{constant}}.
	\end{flalign*}
	
We observe that $\textrm{TC}(\sigma) \le 2(C+D)+1$ only if $$\sum_{k=1}^m(m+1-k)\left(p_{\sigma(k)}+p_{\sigma(k+m+1)}\right) + (m+1)\delta \le \sum_{k=1}^m(m+1-k)(p_{2k-1}+p_{2k})$$ holds. This inequality holds only if $\delta = 0$ and for each $k = 1,\ldots,m$, one of the jobs $2k$ or $2k-1$ is assigned to position $k$ and the other to position $k+m+1$ in $\sigma$ (recall that numbers are ordered increasingly in \textsc{Even-Odd-Partition}). Thus, we conclude that the subsets of jobs before and after job $2m+1$ constitute a yes-certificate for the corresponding instance of \textsc{Even-Odd-Partition}.

Second, if a yes-certificate for the instance of \textsc{Even-Odd-Partition} is given we can construct a sequence with the structure discussed above and, thus, yielding total completion time of at most $2(C+D)+1$. This completes the proof.
\end{proof}

We now show that $1|er|\sum w_jC_j$ can be solved in pseudo-polynomial time. We show, in \Cautoref{lem:twc-opt-structure}, that there always exists an optimal sequence with a special structure consisting of five blocks that are internally ordered according to the \emph{weighted shortest processing time} (WSPT) rule and then we exploit this structure to find an optimal sequence using \emph{dynamic program}s (DPs) in \Cautoref{lem:twc_nP2} and \Cautoref{lem:twc_nPW}.

\hl{
Without loss of generality, we assume the jobs to be numbered according to WSPT (i.e., $J = \{1, \dots, n\}$ with $w_1 / p_1 \geq \dots \geq w_n / p_n$).  We let $\alpha = \min J^r$ and $\beta = \max J^r$ be the r-jobs with lowest and highest WSPT index, respectively, and define $H = J[\alpha, \beta] \cap J^o$ as the set of o-jobs whose WSPT index is between $\alpha$ and $\beta$. For $j \in J$, we further let $t_j = p(J[1,j-1])$ be the total processing time of the jobs preceding $j$ in WSPT order.

For any two job sets $X, Y \subseteq H$ with $X \cap Y = \emptyset$, we define a corresponding sequence $\sigma_{X, Y}$ as follows. The sequence consists of five blocks and within each block the jobs are sorted by increasing WSPT index. The first block is $J[1, \alpha-1]$; the second block is $X$; the third block is $J[\alpha, \beta] \setminus (X \cup Y)$; the fourth block is $Y$; the fifth block is $J[\beta+1, n]$. \Cautoref{fig:fiveblock} depicts such a sequence.%
}{R2C2}

\begin{figure}
	\centering
	\begin{tikzpicture}[
		%We set the scale and define some styles
		axis/.style={thick, ->, >=stealth'},
		important line/.style={very thick},
		dashed line/.style={dashed,  thick},
		every node/.style={color=black,}
		]

		\node[fill=green!10, rectangle, draw=black, minimum height=.5cm,minimum width=2cm] at (1,0.5) {\scriptsize$J[1,\alpha-1]$};
		\node[fill=blue!10, rectangle, draw=black, minimum height=.5cm,minimum width=.5cm] (out1) at (2.25,0.5) {\scriptsize$X$};
		\node[fill=blue!10, rectangle, draw=black, minimum height=.5cm,minimum width=.7cm] (out2) at (5.15,0.5) {\scriptsize$Y$};
		\node[preaction={fill=green!10}, fill=green!10,  pattern=north west lines, pattern color=red!10, rectangle, draw=black, minimum height=.5cm,minimum width=2.3cm] at (3.65,0.5) {\scriptsize$J[\alpha,\beta]\setminus (X\cup Y)$};
		\node[fill=green!10, rectangle, draw=black, minimum height=.5cm,minimum width=2.5cm] at (6.75,0.5) {\scriptsize$J[\beta+1,n]$};
		
		\draw[dotted] (0,1) -- (0,-0.1) node(yline)[below] {\scriptsize$0$};
		\draw[dotted] (8,1) -- (8,-0.1) node(yline)[below] {\scriptsize$P$};

		\draw[axis] (-.1,0)  -- (8.5,0);
		\end{tikzpicture}
		\caption{sequence $\sigma_{X,Y}$}
	\label{fig:fiveblock}
\end{figure}
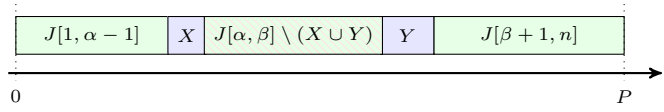

\begin{lemma}\label{lem:twc-opt-structure}
	For each instance of $1|er|\sum w_jC_j$ there exists $X^*, Y^* \subseteq H$ with $\max X^* < \min Y^*$ such that $\sigma_{X^*, Y^*}$ is optimal.
\end{lemma}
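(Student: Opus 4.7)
My plan is to start from an arbitrary optimal sequence $\sigma^*$ and perform a sequence of cost-non-increasing, feasibility-preserving transformations until it takes the form $\sigma_{X^*,Y^*}$ with $\max X^* < \min Y^*$. At every step I will write $B, M, A \subseteq J^o$ for the sets of o-jobs scheduled before the first r-job, strictly between the first and last r-job, and after the last r-job in the current sequence; in particular, the length of the rental period equals $p(J^r) + p(M)$.

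\emph{Stage 1} performs local WSPT sorting and normalizes $M$ to lie inside $H$. Adjacent-pair exchanges within any one of the zones $B$, $J^r \cup M$, $A$ neither increase $\sum w_j C_j$ nor alter the rental length, so I may assume each zone is WSPT-sorted. Once the middle zone is WSPT-sorted, any o-job in $M$ with WSPT index below $\alpha$ lies before $\alpha$ in the sequence, and any with index above $\beta$ lies after $\beta$; reclassifying such jobs into $B$ and $A$ respectively (and re-sorting the enlarged zones) is cost-free but drops the rental length to $p(J^r) + p(M \cap H) \le K^r$. Henceforth I may assume $M \subseteq H$.

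\emph{Stage 2} is an exchange argument showing $J[1,\alpha-1] \subseteq B$; a symmetric argument gives $J[\beta+1,n] \subseteq A$. Since $M \subseteq H$ and $J[1,\alpha-1] \cap H = \emptyset$, every $j \in J[1,\alpha-1]$ lies in $B$ or $A$. If $J[1,\alpha-1] \cap A \neq \emptyset$, I pick $j$ with smallest WSPT index in this set; any $k \in A$ with $k < j$ would also lie in $J[1,\alpha-1] \cap A$, so by minimality $j$ is the first job of WSPT-sorted $A$. Moving $j$ to its WSPT position in $B$ leaves the middle zone and hence the rental length unchanged and produces cost change
\begin{equation*}
\Delta = \sum_{k \in B \cap J[j+1, n]} (p_j w_k - w_j p_k) + \sum_{k \in J^r \cup M}(p_j w_k - w_j p_k) - p_j \cdot w(A \setminus \{j\}).
\end{equation*}
Every $k$ in the first sum has WSPT index $>j$, every $k$ in the second has WSPT index $\ge \alpha > j$, hence $w_k/p_k \le w_j/p_j$ and the corresponding summands are non-positive; the third term is obviously non-positive, so $\Delta \le 0$. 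Iterating empties $J[1,\alpha-1] \cap A$, and after the symmetric argument the sequence is $\sigma_{X,Y}$ with $X = B \cap H$ and $Y = A \cap H$.

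\emph{Stage 3} enforces $\max X < \min Y$. With $X \cup Y$ fixed, block~3 and therefore the rental length are determined, so any re-partition of $X \cup Y$ between blocks~2 and~4 preserves feasibility. I replace block~3 by a single ``super-job'' of processing time $p(J^r)+p(M)$ and weight $w(J^r)+w(M)$; up to an additive constant, the resulting contribution to $\sum w_j C_j$ equals the total weighted completion time of the classical single-machine instance on $(X \cup Y) \cup \{\text{super-job}\}$. WSPT solves that instance optimally by placing each o-job with $w_j/p_j$ above the super-job's ratio before it, and each with $w_j/p_j$ below after it; since $H$ is WSPT-ordered, the resulting $X^*, Y^*$ form a prefix and a suffix of $X \cup Y$ and therefore satisfy $\max X^* < \min Y^*$. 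The main technical obstacle, I expect, is the inequality in Stage~2: the non-positivity of $\Delta$ relies crucially on choosing $j$ of minimum WSPT index within $J[1,\alpha-1] \cap A$, which is precisely what guarantees that no job of higher $w/p$ than $j$ is displaced later by the move.
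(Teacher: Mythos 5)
Your proof is correct and rests on essentially the same idea as the paper's: merge the jobs inside the rental interval into a single aggregate job and invoke WSPT optimality to decide which jobs of $H$ go before and after it, which automatically yields the prefix/suffix structure $\max X^* < \min Y^*$ (the paper folds your Stages 1--2 into this single comparison by applying WSPT to the entire merged instance at once). One small slip: the term $-p_j\,w(A\setminus\{j\})$ in your expression for $\Delta$ should be $0$, because the completion times of the jobs in $A\setminus\{j\}$ do not change when $j$ moves from the front of $A$ into $B$ (the same set of jobs still precedes them); since this spurious term is non-positive and the first two sums already give $\Delta\le 0$, the conclusion is unaffected.
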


\begin{proof}
	Let $\sigma$ be an optimal feasible sequence. Let $i_1 := \min \{i : \sigma(i) \in J^r\}$ and $i_2 := \max \{i : \sigma(i) \in J^r\}$ be the first and last occurrence, respectively, of an r-job in the sequence. Let $S = \{\sigma(i_1), \dots, \sigma(i_2)\}$. Note that $\sum_{j \in S} p_j \leq K^r$ by feasibility of $\sigma$. Hence, any sequence that schedules the jobs of $S$ consecutively is feasible. 	In particular, rearranging the jobs within $S$ according to WSPT maintains feasibility of $\sigma$ without increasing total weighted completion time. Therefore, we can assume, without loss of generality,~$\alpha = \sigma(i_1) < \dots < \sigma(i_2) = \beta$ (i.e., the jobs in $S$ are scheduled according to WSPT and, in particular, $S \subseteq J[\alpha, \beta]$).
	
Now consider the job set $J' := J \setminus S \cup \{j'\}$ where the jobs of $S$ are merged into the single job $j'$ with processing time $p_{j'} = p(S)$ and weight $w_{j'} = w(S)$. Let $$X^* := \{j \in H \setminus S : w_j / p_j \geq w_{j'} / p_{j'}\}$$ and $$Y^* := H \setminus (S \cup X^*).$$ Note that $\max X^* < \min Y^*$ by construction and that $\sigma_{X^*, Y^*}$ is a feasible sequence for $J$. Further note that both $\sigma_{X^*, Y^*}$ and $\sigma$ induce sequences $\sigma'_{X^*, Y^*}$ and $\sigma'$ for $J'$, respectively. In particular, $\sigma'_{X^*, Y^*}$ orders jobs in $J'$ according to WSPT and therefore $$\mathrm{TWC}(\sigma'_{X^*, Y^*}) \leq \mathrm{TWC}(\sigma').$$ Moreover, 
	\begin{align*}
	\mathrm{TWC}(\sigma_{X^*, Y^*}) & = \mathrm{TWC}(\sigma'_{X^*, Y^*}) - \sum_{j \in S} w_j \cdot p(S[j+1, n] )\\
	& \leq \mathrm{TWC}(\sigma') - \sum_{j \in S} w_j \cdot p(S[j+1, n] ) \\
	&= \mathrm{TWC}(\sigma),
	\end{align*}
which establishes that $\sigma_{X^*, Y^*}$ is also an optimal sequence for $J$.
\end{proof}

\begin{lemma}
	\label{lem:twc_nP2}
	$1|er|\sum w_jC_j$ can be solved in $\mathrm{O}(nP^2)$-time.
\end{lemma}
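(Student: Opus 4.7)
The plan is to invoke Lemma~\ref{lem:twc-opt-structure} to restrict attention to sequences of the form $\sigma_{X, Y}$ with $X, Y \subseteq H$ disjoint, $\max X < \min Y$, and satisfying the resource budget $p(J[\alpha, \beta]) - p(X) - p(Y) \leq K^r$, and then to find the optimal such pair via dynamic programming.

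First, I would scan jobs $j = 1, \dots, n$ in WSPT order and maintain a state $(j, x, y)$ where $x \in \{0, 1, \dots, P\}$ tracks $p(X \cap [1, j])$ and $y \in \{0, 1, \dots, P\}$ tracks $p(Y \cap [1, j])$. The phase is kept implicit: the DP enters the Y-phase as soon as any job is placed in $Y$, after which no further assignments to $X$ are permitted. The quantity $p(M \cap [1, j]) = p(J[\alpha, j]) - x - y$ is derivable from the state. At each step I would decide the placement of $j+1$: fixed in block $1$ or $5$ if $j+1 \notin J[\alpha, \beta]$; forced into $M$ if $j+1 \in J^r$; or a choice among $\{X, M, Y\}$ (subject to the phase constraint) if $j+1 \in H$. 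Each transition updates $(x, y)$ in constant time.

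The main obstacle is that the completion time of an $M$-job depends on the final $P_X$, and that of a $Y$-job on the final $P_X$ and $P_M$, neither of which is known at the commit time. My plan to handle this is to compare $\sigma_{X, Y}$ to the WSPT sequence $\sigma_0$ on $J$ and express $\textrm{TWC}(\sigma_{X, Y}) = \textrm{TWC}(\sigma_0) + \Delta$, where $\Delta$ is a sum of swap contributions $p_k w_i - p_i w_k$ over the pairs $(i, k)$ with $i < k$ in WSPT order whose relative position is reversed in $\sigma_{X, Y}$. A short case analysis shows that the reversed pairs are exactly those with $i \in M$, $k \in X$ (SwapMX) and those with $i \in Y$, $k \in M$ (SwapYM). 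Each such swap contribution can be attributed to the commit of its later endpoint $k$, and the key technical point—which I expect to be the main focus of the proof—is to verify that the required partial sums of processing times and weights (such as $p(M \cap [1, k-1])$ and $w(M \cap [1, k-1])$ at an $X$-commit, or $p(Y \cap [1, k-1])$ and $w(Y \cap [1, k-1])$ at an $M$-commit in Y-phase) can be read off from the coordinates $(j, x, y)$ together with pre-computed instance-wide quantities $p(J[\alpha, j])$ and $w(J[\alpha, j])$.

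Finally, the optimum would be recovered as $\textrm{TWC}(\sigma_0) + \min\{f(n, P_X, P_Y) : p(J[\alpha, \beta]) - P_X - P_Y \leq K^r\}$ over all feasible terminal states. The DP has $O(nP^2)$ states, each with $O(1)$ transitions, yielding the claimed $O(nP^2)$ runtime.
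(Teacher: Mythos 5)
Your reduction to sequences $\sigma_{X,Y}$ and the inversion identity $\mathrm{TWC}(\sigma_{X,Y})=\mathrm{TWC}(\sigma_0)+\sum (w_ip_k-w_kp_i)$ over reversed pairs are both correct, and your identification of the reversed pairs (those with $i\in M$, $k\in X$ and those with $i\in Y$, $k\in M$, where $M=J[\alpha,\beta]\setminus(X\cup Y)$) is right. The gap is in the step you yourself flag as the key technical point. Charging the pairs $(i,k)$ with $i\in M$, $k\in X$ to the commit of $k$ yields the term $p_k\cdot w(M\cap J[\alpha,k-1])-w_k\cdot p(M\cap J[\alpha,k-1])$. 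The second summand is fine, but $w(M\cap J[\alpha,k-1])=w(J[\alpha,k-1])-w(X\cap J[\alpha,k-1])$ is \emph{not} determined by the state $(k-1,x,y)$: two partial sets $X$ with the same total processing time $x$ can have different total weights, so the DP cannot evaluate this charge. The same problem occurs with $w(Y\cap J[\alpha,k-1])$ at an $M$-commit in the Y-phase. Re-attributing each pair to its earlier endpoint does not help: the charge then involves $p(X\cap J[k+1,\beta])$, i.e.\ the \emph{final} value of $p(X)$, which is unknown during the forward scan unless you guess it in advance; guessing both $p(X)$ and $p(Y)$ on top of the $(j,x,y)$ state blows the running time up to roughly $\mathrm{O}(nP^4)$.

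The paper circumvents exactly this obstacle by decoupling the two sides at a split index $\kappa$ with $\max X<\kappa\le\min Y$: the weighted completion times of the jobs in $J[\alpha,\kappa-1]$ then depend only on $X$, and those of the jobs in $J[\kappa,\beta]$ only on $Y$, so $X$ and $Y$ can be optimized by two separate DPs. Within each of these, the final total $\rho=p(X)$ is fixed as an outer parameter ($P$ choices), after which every job's \emph{actual} completion time is expressible from processing-time data alone ($C_j=t_\alpha+\varrho$ if $j\in X$, and $C_j=p(J[1,j])+\rho-\varrho$ otherwise), with no weight sums of partially built sets needed. Each DP costs $\mathrm{O}(nP)$ per $\rho$, i.e.\ $\mathrm{O}(nP^2)$ overall, and the two sides are recombined in $\mathrm{O}(nP)$ by a dominance/two-pointer argument over the pairs $(\rho_1,\rho_2)$. (Tracking an explicit weight coordinate in the state, which is what your accounting would actually require, is precisely the paper's alternative $\mathrm{O}(nPW)$ algorithm.) To repair your proof you would need one of these two devices; as written, the claim that the required weight sums ``can be read off from the coordinates $(j,x,y)$'' is false, and the argument does not go through.
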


\begin{proof}
%Now we try to find $X^*$ and $Y^*$ whose $\sigma_{X^*, Y^*}$ is optimal.
For each $\kappa, \rho \in \mathbb{N}$ with $\alpha < \kappa \leq \beta$ and $\rho \le K^r$, let us define 
\begin{align*}
\cal{X}_{\kappa,\rho} &= \{X \subseteq H : \max X < \kappa, p(X) = \rho \}, \\ 
\cal{Y}_{\kappa,\rho} &= \{Y \subseteq H : \min Y \ge \kappa, p(Y) = \rho \}, \\
f_\kappa(X) &= \sum_{j = \alpha}^{\kappa - 1} w_j C^{\sigma_{X,\emptyset}}_{j} \quad \text{ and } \quad g_\kappa(Y) = \sum_{j= \kappa }^{\beta} w_{j}C^{\sigma_{\emptyset,Y}}_{j}. 
\end{align*}
%where $\sigma$ is the  (possibly infeasible) sequence obtained by ordering all jobs based on WSPT.
Also let 
\begin{alignat*}{3}
X_{\kappa,\rho} &\in \argmin_{X \in \mathcal{X}_{\kappa,\rho}} \{f_\kappa(X)\} & \quad \text{ and } \quad & \bar{X}_{\kappa,\rho} = J[\alpha,\kappa-1] \setminus X_{\kappa,\rho},\\ Y_{\kappa,\rho} &\in \argmin_{Y \in \mathcal{Y}_{\kappa,\rho}} \{g_\kappa(Y)\} & \quad \text{ and } \quad & \bar{Y}_{\kappa,\rho} = J[\kappa,\beta] \setminus Y_{\kappa,\rho}.
\end{alignat*}

Based on \Cautoref{lem:twc-opt-structure}, it suffices to find $X^*, Y^* \subseteq H$ with $\max X^* < \min Y^*$ such that $\sigma_{X^*, Y^*}$ is optimal. The first step is thus to compute ${X}_{\kappa,\rho}$ and ${Y}_{\kappa,\rho}$ for all pairs $(\kappa,\rho)$ and then compute $X^* = X_{\kappa^*,\rho_1^*}$ and $Y^* = Y_{\kappa^*,\rho_2^*}$, where 
\begin{align*}
(\kappa^*,\rho_1^*,\rho_2^*) \in \argmin_{(\kappa,\rho_1,\rho_2) \in \Xi} \{f_\kappa(X_{\kappa,\rho_1})+ g_\kappa(Y_{\kappa,\rho_2})\}
\end{align*}
and 
\begin{align*}
\Xi = \{(\kappa,\rho_1,\rho_2) \;|\; {\cal{X}}_{\kappa,\rho_{1}},{\cal{Y}}_{\kappa,\rho_2} \neq \emptyset, \; p(J[\alpha,\beta]) - \rho_1 - \rho_2 \le K^r \}.
\end{align*}
Given a tuple $(\kappa,\rho_1,\rho_2)$, \Cautoref{fig:sequencefortuple} depicts the associated sequence $\sigma_{X_{\kappa,{\rho_1}},Y_{\kappa,{\rho_2}}}$.

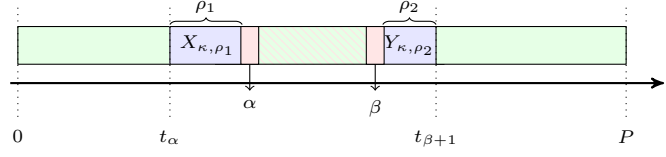
\begin{figure}
	\centering
	\begin{tikzpicture}[
	%We set the scale and define some styles
	axis/.style={thick, ->, >=stealth'},
	important line/.style={very thick},
	dashed line/.style={dashed,  thick},
	every node/.style={color=black,}
	]

	\node[fill=green!10, rectangle, draw=black, minimum height=.5cm,minimum width=2cm] at (1,0.5) {};
	\node[preaction={fill=green!10}, fill=green!10,  pattern=north west lines, pattern color=red!10, rectangle, draw=black, minimum height=.5cm,minimum width=2.1cm] at (3.65,0.5) {};
	\node[fill=blue!10, rectangle, draw=black, minimum height=.5cm,minimum width=1cm] (out1) at (2.5,0.5) {\scriptsize$X_{\kappa,{\rho_1}}$};
	\node[fill=blue!10, rectangle, draw=black, minimum height=.5cm,minimum width=.7cm] (out2) at (5.15,0.5) {\scriptsize$Y_{\kappa,{\rho_2}}$};

	\node[fill=green!10, rectangle, draw=black, minimum height=.5cm,minimum width=2.5cm] at (6.75,0.5) {};
	\node[fill=red!10, draw=black, rectangle, minimum height=.5cm,minimum width=.2cm] (n1) at (3.05,0.5) {};
	\node[fill=red!10,rectangle, draw=black,  minimum height=.5cm,minimum width=.2cm] (nk2) at (4.7,0.5) {};
	
	\draw[->] (n1) -- (3.05,-0.1) node(yline)[below] {\scriptsize$\alpha$};
	\draw[->] (nk2) -- (4.7,-0.1) node(yline)[below] {\scriptsize$\beta$};

%	\draw[dotted] (3.65,1) -- (3.65,-0.5) node(yline)[below] {\scriptsize$t_{\kappa-1}$};
	\draw[dotted] (2,1) -- (2,-0.5) node(yline)[below] {\scriptsize$t_{\alpha}$};
	\draw[dotted] (5.5,1) -- (5.5,-0.5) node(yline)[below] {\scriptsize$t_{\beta+1}$};
	
	\draw[dotted] (0,1) -- (0,-0.5) node(yline)[below] {\scriptsize$0$};
	\draw[dotted] (8,1) -- (8,-0.5) node(yline)[below] {\scriptsize$P$};
	
	\draw[decoration={brace,raise=5pt},decorate]
	(2,.6) -- node[above=6pt] {\scriptsize $\rho_1$} (2.95,.6);
	\draw[decoration={brace,raise=5pt},decorate]
	(4.8,.6) -- node[above=6pt] {\scriptsize $\rho_2$} (5.5,.6);

	\draw[axis] (-.1,0)  -- (8.5,0);
	\end{tikzpicture}
	\caption{sequence $\sigma_{X_{\kappa,{\rho_1}},Y_{\kappa,{\rho_2}}}$ for tuple $(\kappa,\rho_1,\rho_2)$}
	\label{fig:sequencefortuple}
\end{figure}

We propose two dynamic programs to obtain ${X}_{\kappa,\rho}$ and ${Y}_{\kappa,\rho}$ for each pair $(\kappa, \rho)$. The first dynamic program (DP1) computes for fixed $\rho \le K^r$ the corresponding sets ${X}_{\kappa,\rho}$ for each choice of $\kappa$.
%\todo{I've changed this paragraph a bit. I hope it now gives a clearer intuition how the DP works. I also changed picture 4(c) a bit to make clear where the completion time comes from.}
The DP is based on the following observation: Since, within each block of the sequence, jobs are ordered according to WSPT, the completion time $C_j$ of each job $j$ is determined entirely by the fact whether or not $j \in {X}_{\kappa,\rho}$ and by the total processing time $\varrho = p({X}_{\kappa,\rho} \cap J[\alpha, j])$ of jobs with index at most $j$ in ${X}_{\kappa,\rho}$. If $j \in {X}_{\kappa,\rho}$, then $C_j = t_{\alpha} + \varrho$ (see \Cautoref{fig:seqjDP1b}), otherwise $C_j = p(J[1,j]) +\rho - \varrho$ (see \Cautoref{fig:seqjDP1c}). Thus, iterating over the jobs in WSPT order, for each $j \in J[\alpha, \beta-1]$ and each $\varrho \leq \rho$, the DP constructs a set $X \subseteq J[\alpha, j]$ with $p(X) = \varrho$ so as to minimize the total weighted completion time of the jobs in $J[\alpha, j]$.

Formally, the DP considers states $(j, \varrho)$ with $j \in J[\alpha, \beta-1]$ and $\varrho \leq \rho$. We introduce a cost function $\theta_{1,\rho}(j,\varrho)$ which denotes the total weighted completion time of jobs sequenced so far (i.e., jobs in $J[\alpha,j]$). This cost function $\theta_{1,\rho}(j,\varrho)$ is computed recursively as follows:
\begin{align*}
\theta_{1,\rho}(\alpha-1,\varrho) = &\left\lbrace \begin{array}{ll}
0 & \text{ if } \varrho = 0 \\
\infty & \text{ otherwise}
\end{array} \right., \\
\theta_{1,\rho}(j,\varrho) = \min &\left\lbrace 
\begin{array}{l}
\left\lbrace \begin{array}{ll}
	\theta_{1,\rho}(j-1,\varrho-p_j) + w_j \cdot (t_{\alpha} + \varrho) & \text{if } j\in J^o\\
	\infty & \text{if } j\in J^r
\end{array}	\right. \\
	\theta_{1,\rho}(j-1,\varrho) + w_j \cdot \left(p(J[1,j]) + \rho - \varrho\right)
\end{array} \!\!\!\!\!\! \right\rbrace.
\end{align*}

This recursion runs in $\mathrm{O}(nP)$. We immediately see that $f_\beta(X_{\beta,\rho}) = \theta_{1,\rho}(\beta,\rho)$ and the corresponding set ${X}_{\beta,\rho}$ can be retrieved, in $\mathrm{O}(n)$ time, by traversing the state space backward starting from state $(\beta-1,\rho)$ and each time choosing the state leading to the minimum associated cost. Interestingly, as a byproduct of the above DP, we obtain ${X}_{\kappa,\rho}$ for all $\kappa$ with $\alpha < \kappa \leq \beta$ simply by traversing the state space backward starting from $(\kappa-1,\rho)$. This works since the cost values for states do not depend on $\kappa$. 
However, note that the cost function $\theta_{1,\rho}$ does depend on the target processing time $\rho$ for the jobs to be included in $X$.
Thus, we must run DP1 for each choice of $\rho \leq K^r$. Therefore, all subsets $X_{\kappa,\rho}$ are obtained in $\mathrm{O}(nP^2)$ time.

	\begin{figure}
	\centering
	\begin{subfigure}[b]{\linewidth}
		\centering
		\begin{tikzpicture}[
		%We set the scale and define some styles
		axis/.style={thick, ->, >=stealth'},
		important line/.style={very thick},
		dashed line/.style={dashed,  thick},
		every node/.style={color=black,}
		]

		\node[fill=green!10, rectangle, draw=black, minimum height=.5cm,minimum width=2cm] at (1.5,0.5) {};
		\node[preaction={fill=green!10}, fill=green!10,  pattern=north west lines, pattern color=red!10, rectangle, draw=black, minimum height=.5cm,minimum width=2cm] at (4.5,0.5) {};
		%\node[fill=green!5, rectangle, draw=black, dotted, minimum height=.5cm,minimum width=.5cm] at (4.75,0.5) {\scriptsize$j$};
		
		\draw[dotted] (.5,1) -- (.5,-0.5) node(yline)[below] {\scriptsize$t_{\alpha}$};
		\draw[dotted] (3.5,1) -- (3.5,-0.5) node(yline)[below] {\scriptsize$t_{\alpha}+\rho$};
		\draw[dotted] (7.5,1) -- (7.5,-0.5) node(yline)[below] {\scriptsize$t_{\kappa}$};

		\draw[axis] (-.1,0)  -- (8.5,0);
		\end{tikzpicture}
		\caption{The situation before sequencing job $j$.}
		\label{fig:seqjDP1a}
	\end{subfigure}
	\vfill\vfill
	\begin{subfigure}[b]{\linewidth}
		\centering
		\begin{tikzpicture}[
		%We set the scale and define some styles
		axis/.style={thick, ->, >=stealth'},
		important line/.style={very thick},
		dashed line/.style={dashed,  thick},
		every node/.style={color=black,}
		]

		\node[fill=green!10, rectangle, draw=black, minimum height=.5cm,minimum width=2cm] at (1.5,0.5) {};
		\node[preaction={fill=green!10}, fill=green!10,  pattern=north west lines, pattern color=red!10, rectangle, draw=black, minimum height=.5cm,minimum width=2cm] at (4.5,0.5) {};
		\node[fill=green!10, rectangle, draw=black, minimum height=.5cm,minimum width=.5cm] at (2.75,0.5) {\scriptsize$j$};

		\draw[dotted] (.5,1) -- (.5,-0.5) node(yline)[below] {\scriptsize$t_{\alpha}$};
		\draw[dotted] (3.5,1) -- (3.5,-0.5) node(yline)[below] {\scriptsize$t_{\alpha}+\rho$};
		\draw[dotted] (7.5,1) -- (7.5,-0.5) node(yline)[below] {\scriptsize$t_{\kappa}$};

		\draw[decoration={brace,raise=5pt},decorate]
		(.5,.7) -- node[above=6pt] {\scriptsize $\varrho$} (3,.7);

		\draw[axis] (-.1,0)  -- (8.5,0);
		\end{tikzpicture}
		\caption{Job $j$ is assigned to $X_{\kappa,\rho}$ and is completed before $t_{\alpha}+\rho$.}
		\label{fig:seqjDP1b}
	\end{subfigure}
	\vfill\vfill
	\begin{subfigure}[b]{\linewidth}
		\centering
		\begin{tikzpicture}[
		%We set the scale and define some styles
		axis/.style={thick, ->, >=stealth'},
		important line/.style={very thick},
		dashed line/.style={dashed,  thick},
		every node/.style={color=black,}
		]

		\node[fill=green!10, rectangle, draw=black, minimum height=.5cm,minimum width=2cm] at (1.5,0.5) {};
		\node[preaction={fill=green!10}, fill=green!10,  pattern=north west lines, pattern color=red!10, rectangle, draw=black, minimum height=.5cm,minimum width=2cm] at (4.5,0.5) {};
		\node[fill=green!10, rectangle, draw=black, minimum height=.5cm,minimum width=.5cm] at (5.75,0.5) {\scriptsize$j$};
		%	\node[fill=green!5, rectangle, draw=black, dotted, minimum height=.5cm,minimum width=.5cm] at (4.75,0.5) {\scriptsize$j$};
		
		\draw[dotted] (.5,1) -- (.5,-0.5) node(yline)[below] {\scriptsize$t_{\alpha}$};
		\draw[dotted] (3.5,1) -- (3.5,-0.5) node(yline)[below] {\scriptsize$t_{\alpha}+\rho$};
		\draw[dotted] (7.5,1) -- (7.5,-0.5) node(yline)[below] {\scriptsize$t_{\kappa}$};
		\draw[dotted] (6,1) -- (6,-0.5) node(yline)[below] {\scriptsize$p(J[1,j]) + \rho - \varrho$};
		
		\draw[decoration={brace,raise=5pt},decorate]
		(2.5,.7) -- node[above=6pt] {\scriptsize $\rho-\varrho$} (3.5,.7);

		\draw[axis] (-.1,0)  -- (8.5,0);
		\end{tikzpicture}
		\caption{Job $j$ is assigned to $\bar{X}_{\kappa,\rho}$ and is completed after $t_{\alpha}+\rho$.}
		\label{fig:seqjDP1c}
	\end{subfigure}
	
	\caption{Deciding on the position of job $j \in H$ in DP1 and DP5}
	\label{fig:seqjDP1}
\end{figure}
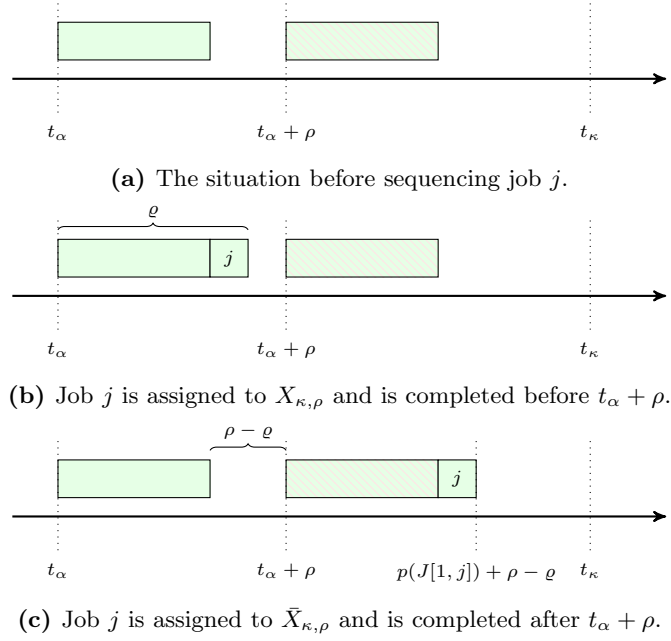

By a symmetric argument we can design DP2 to compute $Y_{\kappa,\rho}$ for all $\kappa$ and $\rho$ in time $\mathrm{O}(nP^2)$. \Cautoref{fig:seqjDP2a} to \Cautoref{fig:seqjDP2c} support the intuition about how completion time of job $j$ is determined by the fact whether or not $j \in {Y}_{\kappa,\rho}$ and by the total processing time $\varrho = p({Y}_{\kappa,\rho} \cap J[j, \beta])$.

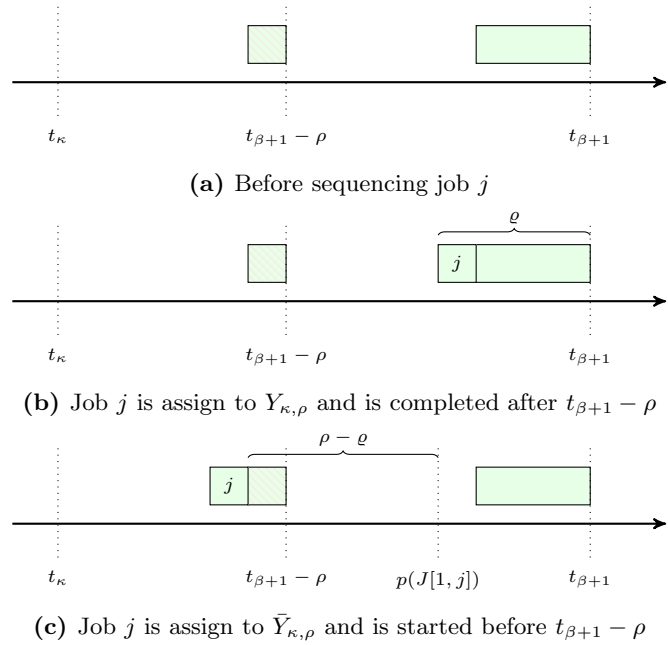
\begin{figure}
	\centering
	\begin{subfigure}[b]{\linewidth}
		\centering
		\begin{tikzpicture}[
		%We set the scale and define some styles
		axis/.style={thick, ->, >=stealth'},
		important line/.style={very thick},
		dashed line/.style={dashed,  thick},
		every node/.style={color=black,}
		]

		\node[preaction={fill=green!10}, fill=green!10,  pattern=north west lines, pattern color=red!10, rectangle, draw=black, minimum height=.5cm,minimum width=.5cm] at (3.25,0.5) {};
		\node[fill=green!10,  rectangle, draw=black, minimum height=.5cm,minimum width=1.5cm] at (6.75,0.5) {};

		\draw[dotted] (.5,1) -- (.5,-0.5) node(yline)[below] {\scriptsize$t_{\kappa}$};
		\draw[dotted] (3.5,1) -- (3.5,-0.5) node(yline)[below] {\scriptsize$t_{\beta+1}-\rho$};
		\draw[dotted] (7.5,1) -- (7.5,-0.5) node(yline)[below] {\scriptsize$t_{\beta+1}$};

		\draw[axis] (-.1,0)  -- (8.5,0);
		\end{tikzpicture}
		\caption{Before sequencing job $j$}
		\label{fig:seqjDP2a}
	\end{subfigure}
	\vfill\vfill
\begin{subfigure}[b]{\linewidth}
	\centering
	\begin{tikzpicture}[
	%We set the scale and define some styles
	axis/.style={thick, ->, >=stealth'},
	important line/.style={very thick},
	dashed line/.style={dashed,  thick},
	every node/.style={color=black,}
	]

	\node[preaction={fill=green!10}, fill=green!10,  pattern=north west lines, pattern color=red!10, rectangle, draw=black, minimum height=.5cm,minimum width=.5cm] at (3.25,0.5) {};
	\node[fill=green!10,  rectangle, draw=black, minimum height=.5cm,minimum width=1.5cm] at (6.75,0.5) {};
	\node[fill=green!10, rectangle, draw=black, minimum height=.5cm,minimum width=.5cm] at (5.75,0.5) {\scriptsize$j$};

	\draw[dotted] (.5,1) -- (.5,-0.5) node(yline)[below] {\scriptsize$t_{\kappa}$};
	\draw[dotted] (3.5,1) -- (3.5,-0.5) node(yline)[below] {\scriptsize$t_{\beta+1}-\rho$};
	\draw[dotted] (7.5,1) -- (7.5,-0.5) node(yline)[below] {\scriptsize$t_{\beta+1}$};
	
	\draw[decoration={brace,raise=5pt},decorate]
	(5.5,.7) -- node[above=6pt] {\scriptsize $\varrho$} (7.5,.7);

	\draw[axis] (-.1,0)  -- (8.5,0);
	\end{tikzpicture}
	\caption{Job $j$ is assign to $Y_{\kappa,\rho}$ and is completed after $t_{\beta+1}-\rho$}
	\label{fig:seqjDP2b}
\end{subfigure}
	\vfill\vfill
	\begin{subfigure}[b]{\linewidth}
		\centering
		\begin{tikzpicture}[
		%We set the scale and define some styles
		axis/.style={thick, ->, >=stealth'},
		important line/.style={very thick},
		dashed line/.style={dashed,  thick},
		every node/.style={color=black,}
		]

	\node[preaction={fill=green!10}, fill=green!10,  pattern=north west lines, pattern color=red!10, rectangle, draw=black, minimum height=.5cm,minimum width=.5cm] at (3.25,0.5) {};
	\node[fill=green!10,  rectangle, draw=black, minimum height=.5cm,minimum width=1.5cm] at (6.75,0.5) {};
		\node[fill=green!10, rectangle, draw=black, minimum height=.5cm,minimum width=.5cm] at (2.75,0.5) {\scriptsize$j$};

		\draw[dotted] (.5,1) -- (.5,-0.5) node(yline)[below] {\scriptsize$t_{\kappa}$};
		\draw[dotted] (3.5,1) -- (3.5,-0.5) node(yline)[below] {\scriptsize$t_{\beta+1}-\rho$};
		\draw[dotted] (7.5,1) -- (7.5,-0.5) node(yline)[below] {\scriptsize$t_{\beta+1}$};
		\draw[dotted] (5.5,1) -- (5.5,-0.5) node(yline)[below] {\scriptsize$p(J[1,j])$};
		
		\draw[decoration={brace,raise=5pt},decorate]
		(3,.7) -- node[above=6pt] {\scriptsize $\rho-\varrho$} (5.5,.7);

		\draw[axis] (-.1,0)  -- (8.5,0);
		\end{tikzpicture}
		\caption{Job $j$ is assign to $\bar{Y}_{\kappa,\rho}$ and is started before $t_{\beta+1}-\rho$}
		\label{fig:seqjDP2c}
	\end{subfigure}

	\caption{Deciding on the position of job $j \in H$ in DP2 and DP6}
	\label{fig:seqjDP2}
\end{figure}

Finally, we show that searching over all $(\kappa,\rho_1,\rho_2) \in \Xi$ to find $X^*$ and $Y^*$ can be done in $\mathrm{O}(nP)$ time. We say $X_{\kappa,\rho}$ dominates $X_{\kappa,\rho'}$ if $\rho > \rho'$ and $f(X_{\kappa,\rho}) \le f(X_{\kappa,\rho'})$ and $Y_{\kappa,\rho}$ dominates $Y_{\kappa,\rho'}$ if $\rho > \rho'$ and $g(Y_{\kappa,\rho}) \le g(Y_{\kappa,\rho'})$. For each $\kappa$, we compile a set $\mathcal{X}_\kappa'$ of non-dominated sets $X_{\kappa,\rho}$ and a set $\mathcal{Y}_\kappa'$ of non-dominated sets $Y_{\kappa,\rho}$, both of which are sorted in decreasing order of $\rho$. Then for each $\kappa$, we scan through $\mathcal{X}_\kappa'$, each time choose $X_{\kappa,\rho} \in \mathcal{X}_\kappa'$ and only pair it with $Y_{\kappa,\rho'} \in \mathcal{Y}_\kappa'$ with 
$$ \rho' = \min \left\lbrace \bar{\rho} \mid Y_{\kappa,\bar{\rho}} \in \mathcal{Y}_\kappa', p(J[\alpha,\beta]) - \rho - \bar{\rho} \le K^r \right\rbrace. $$ Among the pairs, we choose the one which minimizes $f(X_{\kappa,\rho})+g(Y_{\kappa,\rho'})$. Generating and scanning through the dominating sets both are done in $\mathrm{O}(nP)$ time.
\end{proof}

\begin{lemma}
	\label{lem:twc_nPW}
	$1|er|\sum w_jC_j$ can be solved in $\mathrm{O}(nPW)$-time.
\end{lemma}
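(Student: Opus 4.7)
The plan is to mirror the proof of Lemma 2, replacing the central DP1 by a variant whose state space scales with $W$ rather than with $P$. Specifically, I keep everything around this replacement: the structural characterization of optimal solutions via \Cautoref{lem:twc-opt-structure}, the reduction to the subproblems of finding $X_{\kappa,\rho}$ and $Y_{\kappa,\rho}$, and the $\mathrm{O}(nP)$-time combination step based on dominating sets. Only the algorithm that produces the $f_\kappa$- and $g_\kappa$-values, together with the corresponding optimal sets, needs to be redesigned.

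The first step is to isolate the dependence of $f_\kappa(X)$ on the target $\rho = p(X)$. Writing $\varrho_j := p(X \cap J[\alpha,j])$, job $j \in X$ has $C_j = t_\alpha + \varrho_j$ while job $j \in J[\alpha,\kappa-1] \setminus X$ has $C_j = p(J[1,j]) + \rho - \varrho_j$. Summing and collecting the $\rho$-term gives
\[ f_\kappa(X) \;=\; \tilde{A}_\kappa(X) \;+\; \rho \cdot \bigl( W_\kappa - w(X) \bigr), \]
where $W_\kappa := w(J[\alpha,\kappa-1])$ and $\tilde{A}_\kappa(X) := \sum_{j \in X} w_j(t_\alpha + \varrho_j) + \sum_{j \in J[\alpha,\kappa-1] \setminus X} w_j(p(J[1,j]) - \varrho_j)$ carries no explicit $\rho$.

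Next, I set up a three-dimensional DP (call it DP3) with states $(j,\omega,\varrho)$ tracking cumulative weight $\omega = w(X \cap J[\alpha,j])$ and cumulative processing time $\varrho = p(X \cap J[\alpha,j])$; its cost $\theta_3(j,\omega,\varrho)$ stores the minimum of the $j$-prefix of $\tilde{A}$ over all $X \subseteq J[\alpha,j] \cap H$ with the specified $\omega$ and $\varrho$. The recursion adapts the include/exclude structure of DP1,
\[ \theta_3(j,\omega,\varrho) = \min\bigl\{ \theta_3(j-1,\omega-w_j,\varrho-p_j) + w_j(t_\alpha + \varrho),\; \theta_3(j-1,\omega,\varrho) + w_j(p(J[1,j]) - \varrho) \bigr\} \]
for $j \in J^o$, with only the exclude branch available for $j \in J^r$. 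Since this recursion does not depend on $\kappa$, a single forward sweep fills the entire table in $\mathrm{O}(nPW)$ time: each update is $\mathrm{O}(1)$ and there are $\mathrm{O}(nWP)$ states.

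Finally, for each $(\kappa,\rho)$ I read off $f_\kappa(X_{\kappa,\rho}) = \min_\omega \theta_3(\kappa-1,\omega,\rho) + \rho(W_\kappa - \omega)$ in $\mathrm{O}(W)$ time, for $\mathrm{O}(nPW)$ in total; a symmetric DP supplies $g_\kappa(Y_{\kappa,\rho})$ within the same bound, and the dominance-based combination from \Cautoref{lem:twc_nP2} adds only $\mathrm{O}(nP)$, so the overall running time is $\mathrm{O}(nPW)$. The main subtlety is the cost decomposition above: once it is in place, handling the $\rho$-dependence \emph{outside} the DP is exactly what lets $\omega$ take the role previously played by $\rho$ in DP1, after which the rest of the argument is routine bookkeeping.
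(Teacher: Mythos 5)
Your proposal is correct and follows essentially the same route as the paper: the paper's DP3/DP4 likewise replace the $\rho$-parametrized DP1 by a single three-dimensional dynamic program over $(j,\text{processing time},\text{weight})$, keeping the structural lemma, the $X_{\kappa,\rho}/Y_{\kappa,\rho}$ framework, and the $\mathrm{O}(nP)$ dominance-based combination unchanged. The only difference is bookkeeping: the paper tracks the weight $\omega$ of the \emph{excluded} jobs and charges their right-shift online via an extra $\omega p_j$ term whenever a job enters $X$, whereas you track the weight of the \emph{included} jobs and absorb the entire $\rho$-dependence into the closed-form correction $\rho\,(W_\kappa - w(X))$ applied at readoff --- the two are equivalent since $\sum_{j\in X}\omega_{j-1}p_j=\sum_{j'\notin X}w_{j'}(\rho-\varrho_{j'})$.
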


\begin{proof}
	We define ${\cal{X}}_{\kappa,\rho},{\cal{Y}}_{\kappa,\rho},f_{\kappa}(X), g_{\kappa}(Y)$ and compute $X_{\kappa,\rho},Y_{\kappa,\rho},\bar{X}_{\kappa,\rho},\bar{Y}_{\kappa,\rho}, X^*$ and $Y^*$ similarly to the proof of \Cautoref{lem:twc_nP2}. 
	
We propose two DPs to obtain ${X}_{\kappa,\rho}$ and ${Y}_{\kappa,\rho}$. The first DP (DP3) computes the corresponding sets ${X}_{\kappa,\rho}$ for each choice of $\kappa$ and for each choice of $\rho$.
	In DP3, we use states $(j,\varrho,\omega)$ that stores the current $j$, the total processing time of jobs added to $X_{\kappa,\rho}$ so far, and the total weight of jobs in $\bar{X}_{\kappa,\rho}$. We check jobs in $J[\alpha,\kappa-1]$ one by one in WSPT order and decide whether to add job $j \in H$ to $X_{\kappa,\rho}$ or not (see \Cautoref{fig:seqjDP3a}). If we decide to add job $j$ to $X_{\kappa,\rho}$, then $C_j = t_{\alpha} + \varrho$ (see \Cautoref{fig:seqjDP3b}), otherwise job $j$ is temporarily set to be completed at $p(J[1,j])$ (see \Cautoref{fig:seqjDP3c}) but could be shifted to the right if more jobs are to be added to $X_{\kappa,\rho}$. The extent of such a shift depends on the jobs in $J[j+1,\kappa-1]$ that will be eventually added to $X_{\kappa,\rho}$. However, since such information is not available at state $(j,\varrho,\omega)$, when adding job $j$ to $\bar{X}_{\kappa,\rho}$, we only consider its temporary completion time while computing its cost $w_j \cdot p(J[1,j])$ and later when more information is available, we add extra costs: whenever a job $j$ is added to ${X}_{\kappa,\rho}$, for which a cost of $w_j \cdot (t_{\alpha}+\varrho)$ is incurred, jobs in $\bar{X}_{\kappa,\rho}$ also move $p_j$ time units to the right that induces an extra cost of $\omega p_j$ (recall that $\omega$ is the weight of jobs added to $\bar{X}_{\kappa,\rho}$ so far). We introduce a cost function $\theta_2(j,\varrho,\omega)$ which is the total weighted completion time of jobs sequenced so far (i.e., $J[\alpha,j]$). This cost function %$\theta_2(j,\varrho,\omega)$
	is computed recursively as follows:
	\begin{align*}
	&\theta_2(\alpha-1,\varrho,\omega) = \left\lbrace \begin{array}{ll}
	0 & \text{ if } \varrho = 0, \omega = 0 \\
	\infty & \text{ otherwise}
	\end{array} \right., \\
	&\theta_2(j,\varrho,\omega) = \\
	&\min \left\lbrace 
	\begin{array}{l}
	\left\lbrace \begin{array}{ll}
	\theta_2(j-1,\varrho-p_j,\omega) + w_j(t_{\alpha} + \varrho)+\omega p_j & \text{if } j\in J^o\\
	\infty & \text{if } j\in J^r
	\end{array}	\right. \\
	\theta_2(j-1,\varrho,\omega-w_j) + w_j p(J[1,j])
	\end{array} \!\!\!\!\!\! \right\rbrace.
	\end{align*}

	This recursion runs in $\mathrm{O}(nPW)$ time. 
	We see that $f_\beta(X_{\beta,K^r}) = \theta_{2}(\beta,K^r)$ and the corresponding set ${X}_{\beta,K^r}$ can be retrieved, in $\mathrm{O}(n)$ time, by traversing  the state space backward starting from $(\beta-1,K^r,w^*)$ with $$w^* := \argmin_{w \in [0,w(J[\alpha,\beta])]} \left\lbrace \theta_2(\beta-1,K^r,w)\right\rbrace,$$ each time choosing the state with minimum cost. Interestingly, as a byproduct of the above DP, we obtain ${X}_{\kappa,\rho}$ for all $\kappa$ with $\alpha < \kappa \leq \beta$ and all $\rho$ with $0 < \rho \le K^r$ simply by traversing the state space backward starting from $(\kappa-1,\rho)$. This works since the cost values for states do not depend on $\kappa$ and $\rho$. Therefore, all subsets ${X}_{\kappa,\rho}$ combined are obtained in $\mathrm{O}(nPW)$ time.
	
	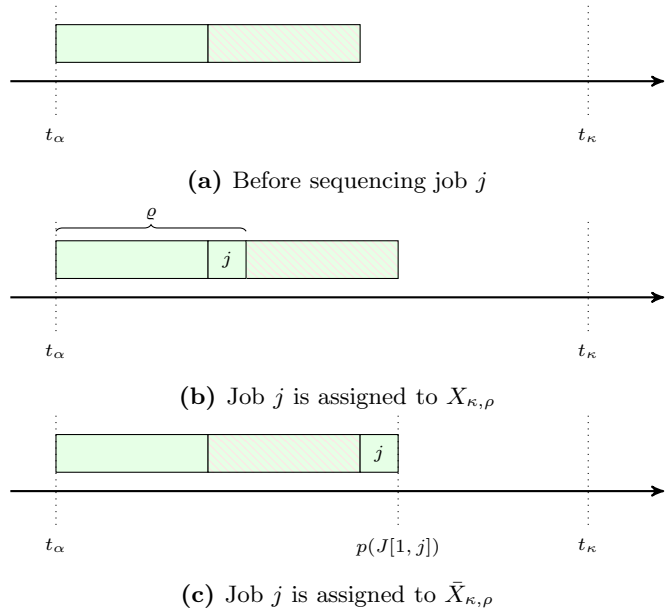
\begin{figure}
		\centering
		\begin{subfigure}[b]{\linewidth}
			\centering
			\begin{tikzpicture}[
			%We set the scale and define some styles
			axis/.style={thick, ->, >=stealth'},
			important line/.style={very thick},
			dashed line/.style={dashed,  thick},
			every node/.style={color=black,}
			]

			\node[fill=green!10, rectangle, draw=black, minimum height=.5cm,minimum width=2cm] at (1.5,0.5) {};
			\node[preaction={fill=green!10}, fill=green!10,  pattern=north west lines, pattern color=red!10, rectangle, draw=black, minimum height=.5cm,minimum width=2cm] at (3.5,0.5) {};
			%\node[fill=green!5, rectangle, draw=black, dotted, minimum height=.5cm,minimum width=.5cm] at (4.75,0.5) {\scriptsize$j$};
			
			\draw[dotted] (.5,1) -- (.5,-0.5) node(yline)[below] {\scriptsize$t_{\alpha}$};
			\draw[dotted] (7.5,1) -- (7.5,-0.5) node(yline)[below] {\scriptsize$t_{\kappa}$};

			\draw[axis] (-.1,0)  -- (8.5,0);
			\end{tikzpicture}
			\caption{Before sequencing job $j$}
			\label{fig:seqjDP3a}
		\end{subfigure}
		\vfill\vfill
		\begin{subfigure}[b]{\linewidth}
			\centering
			\begin{tikzpicture}[
			%We set the scale and define some styles
			axis/.style={thick, ->, >=stealth'},
			important line/.style={very thick},
			dashed line/.style={dashed,  thick},
			every node/.style={color=black,}
			]

			\node[fill=green!10, rectangle, draw=black, minimum height=.5cm,minimum width=2cm] at (1.5,0.5) {};
			\node[preaction={fill=green!10}, fill=green!10,  pattern=north west lines, pattern color=red!10, rectangle, draw=black, minimum height=.5cm,minimum width=2cm] at (4,0.5) {};
			\node[fill=green!10, rectangle, draw=black, minimum height=.5cm,minimum width=.5cm] at (2.75,0.5) {\scriptsize$j$};

			\draw[dotted] (.5,1) -- (.5,-0.5) node(yline)[below] {\scriptsize$t_{\alpha}$};
			\draw[dotted] (7.5,1) -- (7.5,-0.5) node(yline)[below] {\scriptsize$t_{\kappa}$};

			\draw[decoration={brace,raise=5pt},decorate]
			(.5,.7) -- node[above=6pt] {\scriptsize $\varrho$} (3,.7);

			\draw[axis] (-.1,0)  -- (8.5,0);
			\end{tikzpicture}
			\caption{Job $j$ is assigned to $X_{\kappa,\rho}$}
			\label{fig:seqjDP3b}
		\end{subfigure}
		\vfill\vfill
		\begin{subfigure}[b]{\linewidth}
			\centering
			\begin{tikzpicture}[
			%We set the scale and define some styles
			axis/.style={thick, ->, >=stealth'},
			important line/.style={very thick},
			dashed line/.style={dashed,  thick},
			every node/.style={color=black,}
			]

			\node[fill=green!10, rectangle, draw=black, minimum height=.5cm,minimum width=2cm] at (1.5,0.5) {};
			\node[preaction={fill=green!10}, fill=green!10,  pattern=north west lines, pattern color=red!10, rectangle, draw=black, minimum height=.5cm,minimum width=2cm] at (3.5,0.5) {};
			\node[fill=green!10, rectangle, draw=black, minimum height=.5cm,minimum width=.5cm] at (4.75,0.5) {\scriptsize$j$};
			%	\node[fill=green!5, rectangle, draw=black, dotted, minimum height=.5cm,minimum width=.5cm] at (4.75,0.5) {\scriptsize$j$};
			
			\draw[dotted] (.5,1) -- (.5,-0.5) node(yline)[below] {\scriptsize$t_{\alpha}$};
			\draw[dotted] (7.5,1) -- (7.5,-0.5) node(yline)[below] {\scriptsize$t_{\kappa}$};
			\draw[dotted] (5,1) -- (5,-0.5) node(yline)[below] {\scriptsize$p(J[1,j])$};

			\draw[axis] (-.1,0)  -- (8.5,0);
			\end{tikzpicture}
			\caption{Job $j$ is assigned to $\bar{X}_{\kappa,\rho}$}
			\label{fig:seqjDP3c}
		\end{subfigure}
		
		\caption{Deciding on the position of job $j \in H$ in DP3}
		\label{fig:seqjDP3}
	\end{figure}

By a symmetric argument we can design DP4 to compute $Y_{\kappa,\rho}$ for each $\alpha \leq \kappa < \beta$ and $1\le\rho<K^r$ in time $\mathrm{O}(nPW)$. \Cautoref{fig:seqjDP4a} to \Cautoref{fig:seqjDP4c} support the intuition about how completion time of job $j$ is determined.

	\begin{figure}
		\centering
		\begin{subfigure}[b]{\linewidth}
			\centering
			\begin{tikzpicture}[
			%We set the scale and define some styles
			axis/.style={thick, ->, >=stealth'},
			important line/.style={very thick},
			dashed line/.style={dashed,  thick},
			every node/.style={color=black,}
			]

			\node[fill=green!10, rectangle, draw=black, minimum height=.5cm,minimum width=2cm] at (6.5,0.5) {};
			\node[preaction={fill=green!10}, fill=green!10,  pattern=north west lines, pattern color=red!10, rectangle, draw=black, minimum height=.5cm,minimum width=2cm] at (4.5,0.5) {};
			%\node[fill=green!5, rectangle, draw=black, dotted, minimum height=.5cm,minimum width=.5cm] at (4.75,0.5) {\scriptsize$j$};
			
			\draw[dotted] (.5,1) -- (.5,-0.5) node(yline)[below] {\scriptsize$t_{\kappa}$};
			\draw[dotted] (7.5,1) -- (7.5,-0.5) node(yline)[below] {\scriptsize$t_{\beta+1}$};

			\draw[axis] (-.1,0)  -- (8.5,0);
			\end{tikzpicture}
			\caption{Before sequencing job $j$}
			\label{fig:seqjDP4a}
		\end{subfigure}
		\vfill\vfill
		\begin{subfigure}[b]{\linewidth}
			\centering
			\begin{tikzpicture}[
			%We set the scale and define some styles
			axis/.style={thick, ->, >=stealth'},
			important line/.style={very thick},
			dashed line/.style={dashed,  thick},
			every node/.style={color=black,}
			]

			\node[fill=green!10, rectangle, draw=black, minimum height=.5cm,minimum width=2cm] at (6.5,0.5) {};
			\node[preaction={fill=green!10}, fill=green!10,  pattern=north west lines, pattern color=red!10, rectangle, draw=black, minimum height=.5cm,minimum width=2cm] at (4,0.5) {};
			\node[fill=green!10, rectangle, draw=black, minimum height=.5cm,minimum width=.5cm] at (5.25,0.5) {\scriptsize$j$};

			\draw[dotted] (.5,1) -- (.5,-0.5) node(yline)[below] {\scriptsize$t_{\kappa}$};
			\draw[dotted] (7.5,1) -- (7.5,-0.5) node(yline)[below] {\scriptsize$t_{\beta+1}$};

			\draw[decoration={brace,raise=5pt},decorate]
			(5,.7) -- node[above=6pt] {\scriptsize $\varrho$} (7.5,.7);

			\draw[axis] (-.1,0)  -- (8.5,0);
			\end{tikzpicture}
			\caption{Job $j$ is assigned to $Y_{\kappa,\rho}$}
			\label{fig:seqjDP4b}
		\end{subfigure}
		\vfill\vfill
		\begin{subfigure}[b]{\linewidth}
			\centering
			\begin{tikzpicture}[
			%We set the scale and define some styles
			axis/.style={thick, ->, >=stealth'},
			important line/.style={very thick},
			dashed line/.style={dashed,  thick},
			every node/.style={color=black,}
			]

			\node[fill=green!10, rectangle, draw=black, minimum height=.5cm,minimum width=2cm] at (6.5,0.5) {};
			\node[preaction={fill=green!10}, fill=green!10,  pattern=north west lines, pattern color=red!10, rectangle, draw=black, minimum height=.5cm,minimum width=2cm] at (4.5,0.5) {};
			\node[fill=green!10, rectangle, draw=black, minimum height=.5cm,minimum width=.5cm] at (3.25,0.5) {\scriptsize$j$};
			%	\node[fill=green!5, rectangle, draw=black, dotted, minimum height=.5cm,minimum width=.5cm] at (4.75,0.5) {\scriptsize$j$};
			
			\draw[dotted] (.5,1) -- (.5,-0.5) node(yline)[below] {\scriptsize$t_{\kappa}$};
			\draw[dotted] (7.5,1) -- (7.5,-0.5) node(yline)[below] {\scriptsize$t_{\beta+1}$};
			\draw[dotted] (3.5,1) -- (3.5,-0.5) node(yline)[below] {\scriptsize$p(J[1,j])$};

			\draw[axis] (-.1,0)  -- (8.5,0);
			\end{tikzpicture}
			\caption{Job $j$ is assigned to $\bar{Y}_{\kappa,\rho}$}
			\label{fig:seqjDP4c}
		\end{subfigure}
		
		\caption{Deciding on the position of job $j \in H$ in DP4}
		\label{fig:seqjDP4}
	\end{figure}
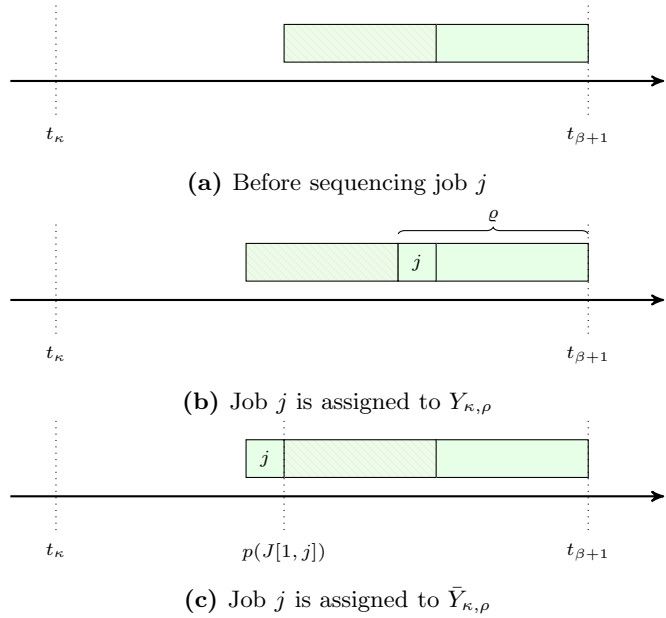

Finally, we argue that searching over all $(\kappa,\rho_1,\rho_2) \in \Xi$ to find $X^*$ and $Y^*$ can be done in $\mathrm{O}(nP)$ (see the final paragraph in the proof of \Cautoref{lem:twc_nP2}), the proof is concluded.
\end{proof}

From \Cautoref{lem:twc_nP2} and \Cautoref{lem:twc_nPW} we infer the following theorem.

\begin{theorem}
	\label{totalwc}
	$1|er|\sum w_jC_j$ can be solved in $\mathrm{O}(nP \min\{P,W\})$-time.
\end{theorem}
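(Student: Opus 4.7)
The plan is to obtain the theorem as an immediate consequence of the two preceding lemmas. Lemma \ref{lem:twc_nP2} establishes that $1|er|\sum w_jC_j$ admits an algorithm with running time $\mathrm{O}(nP^2)$, based on dynamic programs DP1 and DP2 that enumerate candidate sets $X_{\kappa,\rho}$ and $Y_{\kappa,\rho}$ indexed by processing-time targets. Lemma \ref{lem:twc_nPW} gives an alternative pair of dynamic programs DP3 and DP4 that instead index states by the accumulated weight of the complement, yielding a running time of $\mathrm{O}(nPW)$. Both algorithms terminate with the optimal $X^*, Y^*$ guaranteed to exist by the structural result in Lemma \ref{lem:twc-opt-structure}.

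Given an instance, the approach is to first compute $P = \sum_{j \in J} p_j$ and $W = \sum_{j \in J} w_j$ in $\mathrm{O}(n)$ time, compare them, and then invoke whichever of the two algorithms attains the smaller running-time bound: run the DP1/DP2 procedure when $P \leq W$, and run the DP3/DP4 procedure otherwise. The resulting overall running time is $\mathrm{O}(nP \cdot \min\{P, W\})$, as claimed. The preprocessing (sorting jobs according to WSPT, identifying $\alpha$, $\beta$, and $H$, and carrying out the final $\mathrm{O}(nP)$ search over triples $(\kappa, \rho_1, \rho_2) \in \Xi$ described at the end of each proof) is in each case dominated by the main DP cost.

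No obstacle arises beyond this combination; the theorem is essentially a corollary of Lemmas \ref{lem:twc_nP2} and \ref{lem:twc_nPW}. The only point worth making explicit in the writeup is that the two algorithms solve the same problem (both return an optimal sequence of the form $\sigma_{X^*,Y^*}$), so selecting between them based on the relative magnitudes of $P$ and $W$ does not affect correctness, only efficiency.
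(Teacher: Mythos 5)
Your proposal is correct and matches the paper exactly: the paper states the theorem as an immediate consequence of Lemmas \ref{lem:twc_nP2} and \ref{lem:twc_nPW}, and the intended argument is precisely to run whichever of the two dynamic-programming algorithms has the better bound given the relative sizes of $P$ and $W$. Your additional remark that both algorithms return an optimal sequence of the same form, so the choice affects only efficiency, is a sensible clarification but not a deviation from the paper's reasoning.
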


The following corollary is immediate. 
\begin{corollary}
	$1|er|\sum C_j$ can be solved in $\mathrm{O}(n^2P)$-time.
\end{corollary}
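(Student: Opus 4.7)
The plan is to obtain the result as an immediate application of \Cautoref{totalwc} by viewing $1|er|\sum C_j$ as the unit-weight special case of $1|er|\sum w_j C_j$. Concretely, I would set $w_j = 1$ for every $j \in J$, so that $\sum w_j C_j = \sum C_j$ and the sum of weights becomes $W = \sum_{j \in J} w_j = n$. The structural result of \Cautoref{lem:twc-opt-structure} and the dynamic-programming algorithms of \Cautoref{lem:twc_nP2} and \Cautoref{lem:twc_nPW} apply verbatim to this weighted instance, so the algorithm whose running time is $\mathrm{O}(nP\min\{P,W\})$ in \Cautoref{totalwc} solves the unweighted problem as well.

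Substituting $W = n$ into the bound gives $\mathrm{O}(nP\min\{P,n\}) \subseteq \mathrm{O}(nP \cdot n) = \mathrm{O}(n^2 P)$, which is precisely the claim. Since the reduction from $\sum C_j$ to $\sum w_j C_j$ is trivial (no instance transformation is needed beyond assigning unit weights) and the bound inside the $\min$ is invoked as the $W$-branch from \Cautoref{lem:twc_nPW} via DP3/DP4, there is no real obstacle here; the only thing to verify is that the $W$-dependent DP still runs in the stated time when $W = n$, which is automatic because the third coordinate of the DP state $(j,\varrho,\omega)$ now ranges over only $n+1$ values instead of $W+1$. Thus the corollary reduces to a one-line bookkeeping argument after \Cautoref{totalwc}.
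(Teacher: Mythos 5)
Your proposal is correct and is exactly the argument the paper intends when it calls the corollary ``immediate'' after \Cautoref{totalwc}: set $w_j=1$ so that $W=n$ and the bound $\mathrm{O}(nP\min\{P,W\})$ specializes to $\mathrm{O}(n^2P)$. No further comment is needed.
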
	

\subsection{Maximum lateness}
We first show the NP-hardness of $1|er|\max L_j$ and then we propose a pseudo-polynomial time approach to solve $1|er|\max L_j$. 
\begin{theorem}
	\label{thm:maxlateness}
	$1|er|\max L_j$ is NP-hard.
\end{theorem}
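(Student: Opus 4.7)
The plan is to prove NP-hardness of $1|er|\max L_j$ by polynomial-time reduction from \textsc{Partition} (or equivalently \textsc{Even-Odd-Partition}, paralleling the proof of \Cautoref{totalcompletion}). Given an instance with positive integers $a_1,\dots,a_n$ summing to $2B$, I would construct an instance of $1|er|\max L_j$ consisting of $n$ o-jobs $j_i$ with $p_{j_i}=a_i$, together with a constant number of r-jobs (and possibly one or two auxiliary anchor jobs). Their processing times, due dates and the renting budget $K^r$ would be calibrated so that a schedule with $\max L_j\leq L^\ast$ exists if and only if the \textsc{Partition} instance is a yes-instance; for a suitable setup the threshold $L^\ast=0$ should suffice.

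The structural idea is to use the tight due date of a first r-job $R_1$ to upper-bound by $B$ the total processing time of o-jobs placed before the renting window, and to use $K^r$ together with the due date of a second r-job $R_2$ to upper-bound the o-jobs lying inside the window. What is then needed is a mechanism that symmetrically upper-bounds by $B$ the total processing of o-jobs scheduled after the window. Since the o-jobs sum to exactly $2B$, all three upper bounds would have to be tight, forcing a partition of the o-jobs into a set before the window summing to $B$ and a set after the window summing to $B$, which is exactly a yes-certificate for \textsc{Partition}. For the backward direction, a yes-certificate would translate into an explicit schedule (jobs of $J_1$ first, then the r-jobs in the window, then jobs of $J_2$), and a direct computation of completion times would verify that every deadline is met.

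The main obstacle is that minimizing $\max L_j$ always prefers to push jobs earlier, so a deadline on $R_1$ naturally bounds the before-window processing but does \emph{not} bound the after-window processing. To overcome this I would introduce an anchoring element—either a third r-job with a tight late due date that pins the end of the renting window above a prescribed threshold, or an o-job with an intermediate due date that must appear immediately after the window—so that its deadline propagates backward through the schedule and forces the end of the window to lie sufficiently late, which combined with the fixed makespan $P$ upper-bounds the after-window processing. Getting this anchor right (so that the forward direction still permits a feasible schedule for every yes-certificate, while the reverse direction tightens to exact equality $p(\textrm{before})=p(\textrm{after})=B$) is the delicate part and will require a careful case analysis on the position of the r-jobs and o-jobs relative to the renting window.
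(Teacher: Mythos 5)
You have correctly identified the crux --- bounding the total processing time of the o-jobs on the far side of the renting window --- but the mechanisms you propose for resolving it do not work, and this is precisely the step you leave open. A due date only imposes an \emph{upper} bound on a job's completion time; it can never force a job to be scheduled late. So neither ``a third r-job with a tight late due date that pins the end of the renting window above a prescribed threshold'' nor ``an o-job with an intermediate due date that must appear immediately after the window'' can deliver a \emph{lower} bound on where the window ends: any deadline placed on a job sitting after the window again only upper-bounds the window's end, which is the wrong direction. As sketched, your reduction forces $p(\text{before})\le B$, but nothing prevents a zero-lateness schedule from putting all $2B$ units of o-jobs after the window, so the ``only if'' direction fails.

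The paper's construction sidesteps the three-block structure entirely. Take exactly two unit-length r-jobs with $p_{m+1}=1$, $d_{m+1}=B+1$, $p_{m+2}=1$, $d_{m+2}=2B+2$, give every o-job due date $2B+1$, and set $K^r=B+2$. The makespan is $2B+2$, so in any schedule with $\max L_j\le 0$ the \emph{unique} job whose due date reaches the makespan, namely $m+2$, must occupy the last position; hence the renting window ends exactly at $2B+2$ and there are no after-window o-jobs at all. The budget then yields the missing \emph{lower} bound on the start $S_{m+1}$ of the first r-job, $S_{m+1}\ge 2B+2-K^r=B$, while $d_{m+1}=B+1$ gives the matching upper bound $S_{m+1}\le B$. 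Thus the o-jobs preceding $m+1$ sum to exactly $B$, and the partition is between the o-jobs before the window and those inside it. In short, the lower bound you are missing comes from the renting budget itself once the last r-job is pinned to the final slot by being the only job that can legally finish there --- not from any additional anchor job.
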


\begin{proof}
We prove the NP-hardness of $1|er|\max L_j$ by a reduction from \textsc{Partition} which is known to be NP-hard, see \citep{Garey1979}.
	
\textsc{Partition}: Given integer numbers $a_1,\ldots,a_m$, is there a subset of $\{a_1,\ldots,a_m\}$ with total value of $B = \frac{1}{2}\sum_{i=1}^{m} a_i$?
	
Given an instance of \textsc{Partition}, we construct an instance of $1|er|\max L_j$ with $m+2$ jobs as follows:
\begin{itemize}
	\item $J=\{1,\ldots,m+2\}$ and $J^r=\{m+1,m+2\}$,
	\item $p_j=a_j$ and $d_j=2B+1$ for each $j=1,\ldots,m$,
	\item $p_{m+1}=1$, $d_{m+1}=B+1$, $p_{m+2}=1$ and $d_{m+2}=2B+2$, and
	\item $K^r=B+2$.
\end{itemize}  
	
We claim that there is a feasible schedule with maximum lateness of at most zero if and only if the answer to the instance of \textsc{Partition} is yes. Notice that zero is also a lower bound to maximum lateness since no due date exceeds the makespan of $2B+2$. 
	
Let us consider a schedule with lateness zero. Job $m+2$ is scheduled last since  it is the only job with due date $2B+2$. Job $m+1$ is started exactly at $B$ since it cannot be started before $B$ due to feasibility and it cannot be started after $B$ without being tardy. Hence, we conclude that the subsets of jobs before and after job $m+1$ both have a total processing time of $B$ and, thus, constitute a yes-certificate for the corresponding instance of \textsc{Partition}. \Cautoref{fig:zerolateness} depicts the structure of the schedule.

Second, if a yes-certificate for the instance of \textsc{Partition} is given we can construct a sequence with the structure discussed above and, thus, yielding maximum lateness of at most zero. This completes the proof.
\end{proof}

	\begin{figure*}
		\centering
		\resizebox{\linewidth}{!}{%
		\begin{tikzpicture}[
		%We set the scale and define some styles
		axis/.style={thick, ->, >=stealth'},
		important line/.style={very thick},
		dashed line/.style={dashed,  thick},
		every node/.style={color=black,}
		]

		%	\draw[dotted] (-.1,1.5)  -- (14.5,1.5);	
		
		\node[fill=red!10, rectangle, draw=black, minimum height=.5cm,minimum width=1cm] at (13.5,0.5) {\scriptsize$m+2$};
		\node[fill=red!10, rectangle, draw=black, minimum height=.5cm,minimum width=1cm] at (6.5,0.5) {\scriptsize$m+1$};
		\node[fill=green!10, rectangle, draw=black, minimum height=.5cm,minimum width=6cm, label=] at (10,0.5) {\scriptsize Jobs after $m+1$};
		\node[fill=green!10, rectangle, draw=black, minimum height=.5cm,minimum width=6cm, label=] at (3,0.5) {\scriptsize Jobs before $m+1$};
		\draw[dotted] (0,1) -- (0,-0.5) node(yline)[below] {\scriptsize$0$};
		\draw[dotted] (7,1) -- (7,-0.5) node(yline)[below] {\scriptsize$d_{m+1} = B+1$};
		\draw[dotted] (13,1) -- (13,-0.1) node(yline)[below] {\scriptsize$d_j =2B+1$};
		\draw[dotted] (14,1) -- (14,-0.5) node(yline)[below] {\scriptsize$d_{m+2} =2B+2$};
		\draw[decoration={brace,raise=5pt},decorate]
		(0,1) -- node[above=6pt] {\scriptsize  $B$} (6,1);
		\draw[decoration={brace,raise=5pt},decorate]
		(7,1) -- node[above=6pt] {\scriptsize $B$} (13,1);
		\draw[axis] (-.1,0)  -- (14.5,0);
		\end{tikzpicture}
	}
		\caption{The schedule with zero lateness as described in theorem \Cautoref{thm:maxlateness}}
		\label{fig:zerolateness}
	\end{figure*}
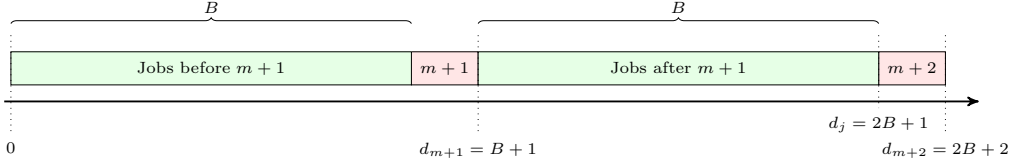

We now show that $1|er|\max L_j$ can be solved in pseudo-polynomial time. 
First, we will show that there always exists an optimal sequence with a special structure consisting of five blocks that are internally ordered according to the \emph{earliest due date} (EDD) rule. 
%This can structure can be exploited to find an optimal sequence using a dynamic program.

Without loss of generality, we assume the jobs to be numbered according to EDD (i.e., $d_1 \le \dots \le d_n$). We also let $\alpha = \min J^r$, $\beta = \max J^r$, and $H = J[\alpha, \beta] \setminus J^r$. Let $X, Y \subseteq H$ with $\max X < \min Y$. We construct a sequence $\sigma_{X, Y}$ as outlined in \Cautoref{subsec:twc}, but now with the jobs within each block being ordered according to EDD.

\begin{lemma}\label{lem:lmax-opt-structure}
For each instance of $1|er|\max L_j$ there exists $X^*, Y^* \subseteq H$ with $\max X^* < \min Y^*$ such that $\sigma_{X^*, Y^*}$ is optimal.
\end{lemma}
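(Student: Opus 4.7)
The plan is to mirror the proof of \Cautoref{lem:twc-opt-structure}, replacing WSPT by EDD and introducing a suitable modified due date for the merged r-block. Let $\sigma$ be an optimal feasible sequence, $i_1,i_2$ the first and last positions of r-jobs in $\sigma$, and $S=\{\sigma(i_1),\dots,\sigma(i_2)\}$; feasibility gives $p(S)\le K^r$. A standard EDD exchange argument applied inside $S$ shows that rearranging the jobs of $S$ according to EDD does not increase $L_{\max}$ and preserves feasibility, since $S$ remains contiguous. Any o-jobs of $S$ whose EDD index lies outside $[\alpha,\beta]$ sit at the outer ends of the EDD-sorted block and can be pulled just before or just after $S$ without changing the completion time of any remaining job, so we may further assume $S\subseteq J[\alpha,\beta]$.

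Next I would contract $S$ into a super-job $j'$ on $J':=(J\setminus S)\cup\{j'\}$ with
\[
p_{j'} := p(S),\qquad d_{j'} := p(S) \;-\; \max_{j\in S}\bigl(p(S[1,j]) - d_j\bigr),
\]
where $S$ is listed in EDD order and $S[1,j]$ denotes the EDD-prefix of $S$ ending at $j$. The definition of $d_{j'}$ is tailored so that, for any contiguous placement of the EDD-sorted block $S$ starting at time $t$, the maximum lateness attained inside $S$ equals the lateness of $j'$ viewed as a single job with processing time $p(S)$ and due date $d_{j'}$ finishing at $t+p(S)$. Bounding the maximum over $S$ from above by using $p(S[1,j])\le p(S)$ and from below by evaluating at the last job of $S$ in EDD order yields the sandwich $d_\alpha\le d_{j'}\le d_\beta$.

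With this in hand, I would set
\[
X^* := \{j\in H\setminus S : d_j\le d_{j'}\},\qquad Y^* := H\setminus(S\cup X^*),
\]
breaking ties in the EDD numbering so that $\max X^* < \min Y^*$. The sandwich $d_\alpha\le d_{j'}\le d_\beta$ guarantees that the induced sequence $\sigma'_{X^*,Y^*}$ on $J'$ is a valid EDD ordering of $J'$: jobs in $J[1,\alpha-1]$ have due dates at most $d_\alpha\le d_{j'}$, and jobs in $J[\beta+1,n]$ have due dates at least $d_\beta\ge d_{j'}$, so $j'$ fits cleanly between the two central o-job groups. Since EDD is optimal for $1\|L_{\max}$, we obtain $L_{\max}(\sigma'_{X^*,Y^*})\le L_{\max}(\sigma')$, where $\sigma'$ is the contraction of $\sigma$. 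By the choice of $d_{j'}$, contracting preserves $L_{\max}$ on both sides, so $L_{\max}(\sigma_{X^*,Y^*}) = L_{\max}(\sigma'_{X^*,Y^*})$ and $L_{\max}(\sigma) = L_{\max}(\sigma')$, and chaining the (in)equalities yields $L_{\max}(\sigma_{X^*,Y^*})\le L_{\max}(\sigma)$.

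The main obstacle is pinning down the correct modified due date $d_{j'}$ and verifying the sandwich $d_\alpha\le d_{j'}\le d_\beta$, since this inequality is precisely what ensures that the EDD-optimal order on $J'$ produces the five-block structure of $\sigma_{X^*,Y^*}$ rather than letting $j'$ interleave with $J[1,\alpha-1]$ or $J[\beta+1,n]$. Once that is in place, the remaining ingredients, namely the internal EDD exchange inside $S$ and EDD optimality of $1\|L_{\max}$, are standard.
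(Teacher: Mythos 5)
Your proposal is correct and follows essentially the same route as the paper: contract the contiguous r-block $S$ into a super-job $j'$ whose modified due date is exactly the paper's $d_{j'}=\min_{k}\bigl\{d_{\sigma(i_1+k)}+\sum_{i=i_1+k+1}^{i_2}p_{\sigma(i)}\bigr\}$ (your formula $p(S)-\max_{j\in S}(p(S[1,j])-d_j)$ is an algebraically equivalent rewriting), define $X^*,Y^*$ by comparing due dates to $d_{j'}$, and invoke EDD optimality on the contracted instance. Your explicit verification of the sandwich $d_\alpha\le d_{j'}\le d_\beta$ is a detail the paper leaves implicit but is indeed what makes the contracted sequence a genuine EDD order on $J'$.
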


\begin{proof}
	Let $\sigma$ be an optimal sequence. Let $i_1 := \min \{i : \sigma(i) \in J^r\}$, $i_2 := \max \{i : \sigma(i) \in J^r\}$  and $S = \{\sigma(i_1), \dots, \sigma(i_2)\}$. Similarly to the proof of \Cautoref{lem:twc-opt-structure}, we see that $\sum_{j \in S} p_j \leq K^r$ by feasibility of $\sigma$ and any sequence that schedules jobs in $S$ consecutively is feasible. Therefore, rearranging the jobs within $S$ according to EDD maintains feasibility of $\sigma$ without increasing maximum lateness. We can thus assume, without loss of generality,~$\alpha = \sigma(i_1) < \dots < \sigma(i_2) = \beta$ (i.e., the jobs in $S$ are scheduled according to EDD and, in particular, $S \subseteq J[\alpha, \beta]$).
	
Now consider the job set $J' := J \setminus S \cup \{j'\}$ where jobs in $S$ are merged into a single job $j'$ with due date 
%	$$d_{j'} = \min \left\lbrace d_{\sigma(i_1)}+\sum_{i=i_1+1}^{i_2} p_{\sigma(i)}, d_{\sigma(i_1+1)} +\sum_{i=i_1+2}^{i_2} p_{\sigma(i)}, ..., d_{\sigma(i_2)} \right\rbrace.$$ 
	$$d_{j'} = \min \left\lbrace d_{\sigma(i_1+k)}+\sum_{i=i_1+k+1}^{i_2} p_{\sigma(i)}\mid k=0,\ldots,i_2-i_1 \right\rbrace.$$ 
Thus, the lateness of job $j'$ captures the maximum lateness among jobs $\sigma(i_1),\ldots,\sigma(i_2)$ if they are scheduled consecutively in EDD. Furthermore, we let $$X^* := \{j \in H \setminus S : d_j \le d_{j'} \}$$ and $$Y^* := H \setminus (S \cup X^*).$$ 
	
Note that $\max X^* < \min Y^*$ by construction and that $\sigma_{X^*, Y^*}$ is a feasible sequence for $J$. Further, note that both $\sigma_{X^*, Y^*}$ and $\sigma$ induce sequences $\sigma'_{X^*, Y^*}$ and $\sigma'$ for job set $J'$, respectively. In particular, $\sigma'_{X^*, Y^*}$ orders jobs in $J'$ according to EDD and, therefore,
\begin{align*}
\underbrace{L_{\max}(\sigma_{X^*, Y^*})  = L_{\max}(\sigma'_{X^*, Y^*})}_{L_{j'}(\sigma'_{X^*, Y^*}) = \max_{j\in S} L_j(\sigma_{X^*, Y^*})} \leq \underbrace{L_{\max}(\sigma') = L_{\max}(\sigma)}_{L_{j'}(\sigma') = \max_{j\in S} L_j(\sigma)},
\end{align*}
which shows the optimality of $\sigma_{X^*, Y^*}$.
\end{proof}

\begin{theorem}
	\label{thm:lmax_nP2}
	$1|er|\max L_j$ can be solved in $\mathrm{O}(nP)$-time.
\end{theorem}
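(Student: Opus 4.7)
The plan is to adapt the dynamic programming template of \Cautoref{lem:twc_nP2} to the max-lateness objective. By \Cautoref{lem:lmax-opt-structure}, there is an optimal schedule of five-block form $\sigma_{X^*, Y^*}$ with $\max X^* < \min Y^*$; fixing a split point $\kappa \in J[\alpha, \beta+1]$ (so that $X \subseteq H \cap J[\alpha, \kappa-1]$ and $Y \subseteq H \cap J[\kappa, \beta]$), the overall max lateness equals $\max\{L^0, L^A(\kappa, p(X)), L^B(\kappa, p(Y))\}$, where $L^0$ is the constant max lateness over $J[1, \alpha-1] \cup J[\beta+1, n]$ and $L^A, L^B$ are the max-lateness contributions of the two middle segments. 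The two subproblems are independent of each other, except for the budget constraint $p(X) + p(Y) \ge p(J[\alpha, \beta]) - K^r$.

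Concretely, I would define two dynamic programs DP5 and DP6 analogous to DP1 and DP2 in the proof of \Cautoref{lem:twc_nP2}. DP5 processes $j \in J[\alpha, \beta]$ in EDD order with state $(j, \varrho)$, where $\varrho = p(X \cap J[\alpha, j])$, and records the minimum over choices of $X \cap J[\alpha, j]$ of the max lateness of jobs in $J[\alpha, j]$. Completion times are given exactly by the formulas illustrated in \Cautoref{fig:seqjDP1}: $C_j = t_\alpha + \varrho$ when $j \in X$, and $C_j = p(J[1, j]) + \rho - \varrho$ when $j \in M$ (with $\rho$ the fixed total $p(X)$). The recursion thus updates a running maximum of $C_j - d_j$ rather than a running sum, and reading off the value at state $(\kappa-1, \rho)$ yields $L^A(\kappa, \rho)$. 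DP6 is constructed symmetrically for $Y$ using \Cautoref{fig:seqjDP2}. The combining step --- for each $\kappa$, picking $(\rho_1, \rho_2)$ with $\rho_1 + \rho_2 \ge p(J[\alpha, \beta]) - K^r$ that minimizes $\max\{L^A(\kappa, \rho_1), L^B(\kappa, \rho_2)\}$ --- is handled by the dominance scan from the last paragraph of the proof of \Cautoref{lem:twc_nP2}.

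The main obstacle is attaining the stated $O(nP)$ bound rather than the $O(nP^2)$ that would arise from iterating DP5 once per $\rho$. The key observation is that, for the max-lateness objective, the DP value at state $(j, \varrho)$ factors as $\max\{\mu_X, \mu_M + \varrho\}$, where $\mu_X = \max_{j' \in X \cap J[\alpha, j]} (t_\alpha + p(X \cap J[\alpha, j']) - d_{j'})$ and $\mu_M = \max_{j' \in M \cap J[\alpha, j]} (p(J[1, j']) - d_{j'} - p(X \cap J[\alpha, j']))$ depend on the choice of $X \cap J[\alpha, j]$ but not on the eventual $\rho$ itself. Propagating the pair $(\mu_X, \mu_M)$ instead of a single numerical value, together with Pareto dominance to keep only non-dominated states, lets a single DP sweep recover $L^A(\kappa, \rho)$ for all $(\kappa, \rho)$ in $O(nP)$ total time; the symmetric argument applies to $L^B$ via DP6, and together with the $O(nP)$ dominance scan for combining this yields the overall $O(nP)$ complexity.
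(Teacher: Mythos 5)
Your overall architecture coincides with the paper's: reduce to the five-block form via \Cautoref{lem:lmax-opt-structure}, run one DP for the left segment and one for the right, and combine over $(\kappa,\rho_1,\rho_2)$ with the dominance scan from \Cautoref{lem:twc_nP2}. You also correctly isolate the one genuinely new difficulty, namely avoiding the extra factor of $P$ that DP1 incurs from fixing the target $\rho$ in advance. The problem is that your proposed resolution of that difficulty does not deliver the claimed bound. You propagate pairs $(\mu_X,\mu_M)$ and appeal to ``Pareto dominance to keep only non-dominated states,'' but you never bound the size of the Pareto frontier at a state $(j,\varrho)$: distinct subsets $X'\subseteq J^o[\alpha,j]$ with $p(X')=\varrho$ can produce many incomparable pairs, and since an intermediate state feeds into end states with different final values of $\rho$, no single pair can be selected locally. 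A priori the frontier can have size $\Theta(P)$, which puts you back at $\mathrm{O}(nP^2)$. So the step you yourself flag as ``the main obstacle'' is exactly where the argument is incomplete.

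The repair is a structural observation that makes the pair collapse to a single scalar, and it is what the paper proves. Because jobs are numbered by EDD and $\alpha=\min J^r$ is an r-job, $\alpha$ belongs to the middle block, satisfies $d_\alpha\le d_{j'}$ for every $j'\in X$, and completes no earlier than every job of $X$ in $\sigma_{X,\emptyset}$; hence the maximum lateness over $J[\alpha,\kappa-1]$ is always attained at a job \emph{outside} $X$. In your notation, $\mu_X\le\mu_M+\varrho$ holds at every state, so only $\mu_M$ (equivalently, the running maximum over middle-block jobs) needs to be stored. The deferred right-shift of the middle block is then handled exactly as in \Cautoref{claim:structure-maxl-X}: adding $\kappa$ to $X$ increases the stored value by $p_\kappa$, while leaving $\kappa$ out takes a max with $p(J[1,\kappa])-d_\kappa$. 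This gives an $\mathrm{O}(1)$ update per state and hence $\mathrm{O}(nP)$ overall; the backward recursion for the $Y$-side (\Cautoref{claim:structure-maxl-Y}) needs no such trick because, processed from $\beta$ downward, all relevant completion times are already determined. With that one claim added, your proof becomes essentially the paper's.
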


\begin{proof}
Based on \Cautoref{lem:lmax-opt-structure}, it suffices to find $X^*, Y^* \subseteq H$ with $\max X^* < \min Y^*$ such that $\sigma_{X^*, Y^*}$ is optimal.
As before, we define $$\mathcal{X}_{\kappa,\rho} = \{X \subseteq H : \max X < \kappa, p(X) = \rho \}\text{ and}$$ $$\mathcal{Y}_{\kappa,\rho} = \{Y \subseteq H : \min Y \ge \kappa, p(Y) = \rho \}.$$ Furthermore, we define
	\begin{align*}
	f'_\kappa(X) & = \max_{j \in J[\alpha,\kappa - 1]} \left\lbrace C^{\sigma_{X,\emptyset}}_{j} 	- d_j \right\rbrace \quad \text{ and } \quad
	g'_\kappa(Y) = \max_{j \in J[\kappa, \beta]} \left\lbrace C^{\sigma_{\emptyset,Y}}_{j} - d_j \right\rbrace 
	\end{align*}
	and let $$X_{\kappa, \rho} \in \argmin_{X \in \mathcal{X}_{\kappa,\rho}} f'_{\kappa}(X) \quad \text{and} \quad Y_{\kappa, \rho} \in \argmin_{Y \in \mathcal{Y}_{\kappa,\rho}} g'_{\kappa}(Y)$$ 
	for each $\kappa \in J[\alpha, \beta]$ and $\rho \leq P$.
	
	Let $\theta_3(\kappa, \rho) = \min_{X \in \mathcal{X}_{\kappa, \rho}} f'_\kappa(X)$ and $\theta_4(\kappa, \rho) = \min_{Y \in \mathcal{Y}_{\kappa, \rho}} g'_\kappa(Y)$. We show that $\theta_3$ and $\theta_4$ can again be expressed by simple recursions, giving ways to dynamic programs for computing $X_{\kappa, \rho}$ and $Y_{\kappa, \rho}$ for all pairs $(\kappa, \rho)$, respectively. Once $X_{\kappa, \rho}$ and $Y_{\kappa, \rho}$ for all pairs $(\kappa, \rho)$ are computed, $\sigma_{X^*, Y^*}$ is obtained as $X^* = X_{\kappa^*,\rho_1^*}$ and $Y^* = {Y}_{\kappa^*,\rho_2^*}$, where 
	\begin{align*}
	(\kappa^*,\rho_1^*,\rho_2^*) \in \argmin_{(\kappa,\rho_1,\rho_2) \in \Xi} \left \lbrace \max \{f'_\kappa(X_{\kappa,\rho_1}), g'_\kappa(Y_{\kappa,\rho_2})\} \right\rbrace
	\end{align*}
	and $$\Xi = \{(\kappa,\rho_1,\rho_2) \; | \; {\cal{X}}_{\kappa,\rho_{1}},{\cal{Y}}_{\kappa,\rho_2} \neq \emptyset, p(J[\alpha,\beta]) - \rho_1 - \rho_2 \le K^r \}.$$
	The sequence $\sigma_{X^*, Y^*}$ minimizes maximum lateness among all such sequences because $C^{\sigma_{X^*, Y^*}}_j = C^{\sigma_{X^*, \emptyset}}_j$ for all $j \in J[\alpha, \kappa^* - 1]$, $C^{\sigma_{X^*, Y^*}}_j = C^{\sigma_{\emptyset, Y^*}}_j$ for all $j \in J[\kappa^*, \beta]$, and the completion time of all jobs $j \in J \setminus J[\alpha, \beta]$ is independent from the choice of $X^*$ and $Y^*$. Note that $\kappa^*,\rho_1^*,\rho_2^*$ can again be determined in time $\mathrm{O}(nP)$ (see the final paragraph in the proof of \Cautoref{lem:twc_nP2}).

  In order to obtain the recursion of $\theta_3$, we first prove the following two claims on the structure of the function $f'_\kappa$.	
	\begin{claim}
	For all $X \subseteq H$ and $\kappa \in J[\alpha, \beta]$, we have
	  $$f'_\kappa(X) = \max_{j \in J[\alpha,\kappa - 1] \setminus X} \left\lbrace C^{\sigma_{X,\emptyset}}_{j} - d_j \right\rbrace.$$ 
  \end{claim} 
  \begin{proof}
    Note that for all $j \in X$, we have $C^{\sigma_{X,\emptyset}}_{j} \leq C^{\sigma_{X,\emptyset}}_{\alpha}$ and $d_\alpha \leq d_j$. Thus $C^{\sigma_{X,\emptyset}}_{\alpha} - d_\alpha  \ge C^{\sigma_{X,\emptyset}}_{j} - d_j $ for any $j \in X$.
  \end{proof}
  \begin{claim}\label{claim:structure-maxl-X}
	  Let $X \subseteq H$ and $\kappa > \max X$. Then $f'_{\kappa+1}(X \cup \{\kappa\}) = f'_{\kappa}(X) + p_\kappa$ and $f'_{\kappa+1}(X) = \max \{f'_\kappa(X),\; p(J[1, \kappa]) - d_\kappa\}$.
	\end{claim}
	\begin{proof}
	  The first equality follows immediately from the preceding claim because $C^{\sigma_{X \cup \{\kappa\},\emptyset}}_j = C^{\sigma_{X,\emptyset}}_j + p_\kappa$ for all $j \in J[\alpha, \kappa-1] \setminus X$. The second equality follows from the definition of $f'_\kappa$ and the fact that $\kappa > \max X$, and hence job $\kappa$ precedes each job $j > \kappa$ in $\sigma_{X, \emptyset}$.
	\end{proof}
  From \Cautoref{claim:structure-maxl-X} we can deduce the following recursion for $\theta_3$:
	\begin{align*}
	\theta_3(\alpha + 1, \rho) = &\left\lbrace \begin{array}{ll}
	p(J[1, \alpha]) - d_\alpha & \text{ if } \rho = 0 \\
	\infty & \text{ otherwise}
	\end{array} \right., \\
	\theta_3(\kappa + 1, \rho) = \min & \left\lbrace 
	\begin{array}{l}
  	\max \left\lbrace \begin{array}{l}
  	\theta_3(\kappa, \rho), \\
  	p(J[1, \kappa]) - d_\kappa 
  	\end{array} \right\rbrace, \\
  	\left\lbrace 
	  \begin{array}{ll}
  	  \theta_3(\kappa, \rho - p_\kappa) + p_\kappa
	    & \text{if } \kappa \in J^o\\
	  \infty & \text{if } \kappa \in J^r
	\end{array}	\right. \\
	\end{array} \!\!\!\!\!\! \right\rbrace.
	\end{align*}
	
	In order to obtain the recursion of $\theta_4$, we first prove the following claim on the structure of the function $g'$.	
	\begin{claim}\label{claim:structure-maxl-Y}
	  Let $Y \subseteq H$ and $\kappa < \min Y$. Then 
	  \begin{align*}
	    g'_\kappa(Y) & = \max \{g'_{\kappa + 1}(Y),\; p(J[1, \kappa]) - d_\kappa\} \text{ and }\\
	    g'_\kappa(Y \cup \{\kappa\}) & = \max \{g'_{\kappa + 1}(Y),\; p(J[1, \beta]) - p(Y) - d_\kappa\}.
	  \end{align*}
  \end{claim} 
  \begin{proof}
    The first identity follows from the definition of $g'_\kappa$ and the fact that all jobs $j < \kappa$ precede $\kappa$ in $\sigma_{\emptyset, Y}$ because $\kappa < \min Y$.
    
    Now let $j^* \in J[\kappa, \beta]$ be such that $ g'_\kappa(Y \cup \{\kappa\}) = C^{\sigma_{\emptyset, Y \cup \{\kappa\}}}_{j^*} - d_j^*$. 
    Note that $d_{j^*} \geq d_\kappa$ because jobs are ordered according to EDD.
    Therefore, our choice of $j^*$ implies $C^{\sigma_{\emptyset, Y \cup \{\kappa\}}}_{j^*} \geq C^{\sigma_{\emptyset, Y \cup \{\kappa\}}}_{\kappa}$. We can thus conclude that $j^* \in Y \cup \{\kappa\}$. The second identity then follows from the observation that $C^{\sigma_{\emptyset, Y \cup \{\kappa\}}}_{j} = C^{\sigma_{\emptyset, Y}}_{j}$ for all $j \in Y$ and $C^{\sigma_{\emptyset, Y \cup \{\kappa\}}}_{\kappa} = p(J[1, \kappa] \setminus Y)$, by construction of $\sigma_{\emptyset, Y \cup \{\kappa\}}$ and $\kappa < \min Y$.
  \end{proof}
  
  \Cautoref{claim:structure-maxl-Y} implies the following recursion for $\theta_4$:
	\begin{align*}
  	\theta_4(\beta,\rho) =  &\left\lbrace \begin{array}{ll}
  	p(J[1, \beta]) - d_\beta & \text{ if } \rho = 0 \\
  	\infty & \text{ otherwise}
  	\end{array} \right.,\\
  	\theta_4(\kappa,\rho) = \min &\left\lbrace 
  	\begin{array}{l}
      \max \left\lbrace \begin{array}{l}
    	  \theta_4(\kappa  + 1, \rho), \\
    	  p(J[1, \kappa]) - d_\kappa
	     \end{array} \right\rbrace\\
    \left\lbrace \begin{array}{ll}
    	\max \left\lbrace \begin{array}{l}
    	\theta_4(\kappa + 1, \rho - p_\kappa ), \\
    	J[1, \beta] - (\rho - p_\kappa) - d_\kappa 
    	\end{array} \right\rbrace  & \text{if } \kappa \in J^o\\
    	\infty & \text{if } \kappa \in J^r
    	\end{array}	\right. \\
  	\end{array} \!\!\!\!\!\! \right\rbrace.
	\end{align*}
	
	From the above recursions for $\theta_3$ and $\theta_4$, it is easy to see that we can compute $X_{\kappa, \rho}$ and $Y_{\kappa, \rho}$ for all $\kappa \in J[\alpha, \beta]$ and all $\rho \leq P$ combined in time $\mathrm{O}(nP)$. This concludes the proof of the theorem.
%	It is easy to see that the for $j \in J \setminus J[\alpha, \beta]$, completion time $C^{\sigma_{X^*, Y^*}}_j$ is independent from $X^*$ and $Y^*$. Hence, it is sufficient to choose $X^*$ and $Y^*$ such that they minimize $\max_{j \in J[\alpha, \beta]} L^{\sigma_{X^*, Y^*}}_j$.
\end{proof}

\subsection{Weighted number of tardy jobs}

\Cautoref{thm:maxlateness} implies that even minimizing the unweighted number of tardy jobs is NP-hard.

\begin{corollary}
  $1|er|\sum U_j$ is NP-hard.
\end{corollary}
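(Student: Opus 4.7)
The plan is to reuse, without modification, the reduction from \textsc{Partition} constructed in the proof of \Cautoref{thm:maxlateness}. That reduction builds an $m+2$-job instance of $1|er|\max L_j$ in which the decision question ``is there a feasible schedule with $\max L_j \le 0$?'' is shown to be equivalent to the yes/no answer of the underlying \textsc{Partition} instance. The key observation I would use is that in the constructed instance every due date lies in $[0,2B+2]$ and the makespan equals $2B+2$, so any feasible schedule has $\max L_j \ge 0$; consequently, $\max L_j \le 0$ if and only if every job completes by its due date, i.e., if and only if $\sum U_j = 0$.

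Given this equivalence, I would simply re-interpret the output of the reduction: the \textsc{Partition} instance is a yes-instance if and only if the constructed $1|er|$-instance admits a feasible schedule with $\sum U_j = 0$. Since $\sum U_j = 0$ is the threshold question ``is $\sum U_j \le 0$?'', polynomial-time solvability of $1|er|\sum U_j$ would decide \textsc{Partition} in polynomial time, yielding NP-hardness of $1|er|\sum U_j$. There is no real obstacle here; the only item to double-check is the non-negativity of $\max L_j$ in the constructed instance, which has already been argued inside the proof of \Cautoref{thm:maxlateness}, so the corollary follows immediately.
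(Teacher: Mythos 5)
Your proposal is correct and matches the paper's intent exactly: the paper derives this corollary from \Cautoref{thm:maxlateness} without further argument, and the implicit reasoning is precisely the equivalence you spell out, namely that in the constructed instance a feasible schedule has $\max L_j \le 0$ if and only if $\sum U_j = 0$, so the same reduction from \textsc{Partition} decides the $\sum U_j$ threshold question. Your additional check that $\max L_j \ge 0$ always holds is not even needed for this equivalence, but it does no harm.
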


%Note that this is in contrast to the classic version of the problem without external resource, where $1||\sum U_j$ which can be solved in polynomial time and only the weighted version $1||\sum w_jU_j$ is NP-hard~ \citep{Lawler1969}.

In the following, we describe a pseudo-polynomial algorithm for solving $1|er|\sum w_jU_j$. Again, we start with an observation on the structure of an optimal solution.

As in the previous section, we assume the jobs to be numbered according to EDD, i.e., $d_1 \leq \dots \leq d_n$. For disjoint sets $X, Y, Z \subseteq J$, let $\sigma_{X,Y,Z}$ be the sequence consisting of the following five blocks with each block internally ordered according to EDD: The first block consists of the jobs in $X$; the second block consists of the jobs in $Y$; the third block consists of the jobs in $J^r \setminus Y$; the fourth block consists of the jobs in $Z$; the fifth block consists of all remaining jobs.
%We will construct such a sequence in such a way that the jobs in $X$, $Y$, and $Z$ are exactly those that are not tardy and furthermore $Y$ contains all non-tardy r-jobs.

\begin{lemma}\label{lem:wU-opt-structure}
For each instance of $1|er|\sum w_jU_j$ there exist disjoint sets $X^*, Y^*, Z^* \subseteq J$ such that the sequence $\sigma_{X^*, Y^*, Z^*}$ is optimal and, moreover,
\begin{enumerate}
  \item all jobs in $X^* \cup Y^* \cup Z^*$ are non-tardy 
  %and all the remaining jobs are tardy 
  in $\sigma_{X^*, Y^*, Z^*}$,
  \item $(X^* \cup Z^*) \cap J^r = \emptyset$, and
  \item $\max (X^* \cup (Y^* \cap J^o))< \min Z^*$.
\end{enumerate} 
\end{lemma}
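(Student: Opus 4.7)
The plan is to start from an arbitrary optimal sequence $\sigma$ of $1|er|\sum w_jU_j$ and transform it through several cost-preserving rearrangements into a sequence of the form $\sigma_{X^*,Y^*,Z^*}$ satisfying the three listed properties. Let $N=\{j:C_j^\sigma\leq d_j\}$ and $T=J\setminus N$ denote the non-tardy and tardy jobs of $\sigma$; the objective $\sum w_jU_j$ depends only on $T$, so maintaining the partition $(N,T)$ and the feasibility of the rental period suffices to preserve optimality.

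For properties (1) and (2), I would first move every job of $T\cap J^o$ to the end of the sequence in any order; non-tardy jobs only shift to earlier positions and remain non-tardy, and the rental period can only shrink (any tardy o-job originally sitting between two r-jobs leaves the rental window). Let $a,b$ be the positions of the first and last r-job in the resulting sequence and $S$ the set of jobs in positions $a,\ldots,b$, so $J^r\subseteq S$, $p(S)\leq K^r$, and $S$ contains no tardy o-job. Then reorder the jobs in positions $1,\ldots,a-1$ in EDD, reorder the jobs in positions $b+1,\ldots,n-|T\cap J^o|$ in EDD, and within $S$ place the non-tardy jobs first in EDD order followed by the tardy r-jobs in EDD order. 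None of these rearrangements changes $S$, so the rental period is preserved; EDD within a block minimises the maximum lateness inside the block, so non-tardy jobs remain non-tardy; and moving the non-tardy jobs of $S$ to the front only decreases their completion times. Setting $X^0$, $Y^0$, $Z^0$ to be the non-tardy o-jobs before $S$, the non-tardy jobs inside $S$, and the non-tardy o-jobs after $S$, respectively, yields $\sigma_{X^0,Y^0,Z^0}$ satisfying (1) and (2) with the same cost as $\sigma$.

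The main obstacle is condition (3): $\max(X^0\cup(Y^0\cap J^o))<\min Z^0$. If this fails, pick $j\in X^0\cup(Y^0\cap J^o)$ with the largest EDD index and $j'\in Z^0$ with the smallest, so $d_{j'}<d_j$. I would then move $j'$ into $X^0$ and $j$ into $Z^0$, leaving $Y^0$ either unchanged (if $j\in X^0$) or reduced by the single o-job $j$ (if $j\in Y^0\cap J^o$). In either case the rental period cannot grow and the set of r-jobs inside it is unchanged, so feasibility is preserved. The workhorse for non-tardiness is the inequality $p(X^0)+p_{j'}\leq d_{j'}$ that follows from $j'$ being non-tardy at its old position in $Z^0$ (after dropping non-negative summands); combined with the non-tardiness of the $Z^0$-job with the largest due date not exceeding $d_j$ (or $j'$ itself, if no such job exists), this bounds the new completion times of $j'$, of $j$ in its new position in $Z^0$, and of those jobs in $X^0$ whose positions shift when $j'$ is inserted. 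Verifying that the jobs of $Y^0$ with due date below $d_{j'}$ remain non-tardy once the $Y^0$-block shifts by $p_{j'}-p_j$ is the delicate technical step, handled by chaining the bound above with the EDD ordering inside $Y^0$ and the inequality $p(X^0)+p(Y^0)+p(C)+p_{j'}\leq d_{j'}$. Each exchange strictly reduces the number of pairs $(x,z)\in (X^0\cup(Y^0\cap J^o))\times Z^0$ violating $d_x<d_z$, so finitely many exchanges yield the desired $(X^*,Y^*,Z^*)$.
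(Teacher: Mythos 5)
Your first stage (pushing the tardy o-jobs to the end, consolidating the tardy r-jobs at the back of the rental window, and EDD-sorting the resulting blocks) is essentially the paper's argument and correctly yields properties 1 and 2. The exchange step you propose for property 3, however, does not work. Moving $j'=\min Z^0$ \emph{earlier} into $X^0$ delays every job of $X^0$ between $j'$ and $j$ by $p_{j'}$, and shifts the entire $Y^0$-block and the r-jobs by $p_{j'}-p_j$, which may be positive; your workhorse inequality $p(X^0)+p(Y^0)+p(J^r\setminus Y^0)+p_{j'}\le d_{j'}$ only bounds the shifted completion times by $d_{j'}$, whereas a delayed job $y$ with $y<j'$ must meet the \emph{smaller} deadline $d_y\le d_{j'}$, so the chain of inequalities runs in the wrong direction and cannot be closed. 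Indeed the step is false: take four jobs in EDD order, job $1$ ($p=1$, $d=11$, o-job), job $2$ ($p=10$, $d=12$, the only r-job), job $3$ ($p=5$, $d=17$, o-job), job $4$ ($p=1$, $d=1000$, o-job), with $K^r=10$. For $X^0=\{4\}$, $Y^0=\{1,2\}$, $Z^0=\{3\}$ the sequence $4,1,2,3$ has completion times $1,2,12,17$, so all jobs are non-tardy (hence optimal), while property 3 fails since $4>3$. Your swap produces $3,1,2,4$ with $C_2=16>12=d_2$, so the r-job becomes tardy and the objective strictly increases.

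The paper circumvents this by moving jobs in one direction only. It sets $k=\min Z$, collects $S=\{j\in J^o\cap(X\cup Y): j>k\}$, and moves all of $S$ to the position immediately before $k$: no job outside $S$ is delayed, the rental window can only shrink (o-jobs leave it), and each $j\in S$ satisfies $C_j\le C_k\le d_k\le d_j$ at its new position, so it stays non-tardy; re-sorting $S\cup Z$ by EDD then gives $\sigma_{X^*,Y^*,Z^*}$ with $X^*=X\setminus S$, $Y^*=Y\setminus S$, $Z^*=Z\cup S$. To repair your proof, replace the symmetric pairwise swap by this one-sided relocation of the offending jobs into $Z$.
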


\begin{proof}
  Let $\sigma$ be an optimal feasible sequence.
  Let $i_1 := \min \{i : \sigma(i) \in J^r\}$ and $i_2 := \max \{i : \sigma(i) \in J^r\}$ be the first and last occurrence, respectively, of an r-job in the sequence.  
  Let $E = \{j \in J \;|\; U^{\sigma}_j = 0\}$ be the set of non-tardy jobs in $\sigma$.
  We define $X = \{j \in E \;|\; \sigma^{-1}(j) < i_1\}$, $Y = \{j \in E \;|\; i_1 \leq \sigma^{-1}(j) \leq i_2\}$, and $Z = \{j \in E \;|\; i_2 < \sigma^{-1}(j)\}$, with $\sigma^{-1}(j)$ being the position of job $j$ in sequence $\sigma$. 
  
  It is easy to see that $X, Y, Z$ are disjoint and fulfill property 2, and that $\sigma_{X, Y, Z}$ is feasible because $Y \cap J^o$ is a subset of  o-jobs processed between the first and the last r-job in $\sigma$, which means 
\begin{align*}
&&p(Y \cup (J^r\setminus Y))\\
&&=p(J^r\cup (Y \cap J^o))\\
&&=p(J^r)+p(Y \cap J^o)\\
&&\leq p(J^r)+p(\{j \in J^o \;|\; i_1 \leq \sigma^{-1}(j) \leq i_2\})\\
&&\leq K^r
\end{align*}   
%$$p(Y \cap J^o) \leq p(\{j \in J^o \;|\; i_1 \leq \sigma^{-1}(j) \leq i_2\}) 
%  \leq K^r - p(J^r)$$ 
by feasibility of $\sigma$.
   We now show the following claim, which immediately implies that $X,Y,Z$ fulfills property 1 and that $\sigma_{X,Y,Z}$ is optimal.
   
   \begin{claim}
     Every job in $X \cup Y \cup Z = E$ is non-tardy in $\sigma_{X,Y,Z}$.
   \end{claim}
   \begin{proof}
We obtain $\sigma_{X,Y,Z}$ from $\sigma$ by appliying the following three modifications.

First, we delay all tardy jobs in $J^o$ to the end of the schedule (keeping their relative order). No non-tardy job is delayed by this. Let $i_3$ be the last slot holding a job in $J^r$ after this modification.

Second, we delay all tardy jobs in $J^r$ such that they are sequenced consecutively and the last of them is in position $i_3$ (keeping their relative order). No non-tardy job is delayed by this.

We now have five blocks in the current sequence holding jobs from $X$, $Y$, $J^r\setminus Y$, $Z$, and $J^o\setminus(X,Y,Z)$. Finally, by having each block in EDD we do not cause any currently non-tardy job to be tardy. 
\end{proof}

%\begin{proof}
%     For $j \in J$, let $r_j = p(X)$ if $j \in Y$, $r_j = p(X \cup Y \cup J^r)$ if $j \in Z$ and $r_j = 0$ otherwise. Furthermore let $d'_j = d_j$ if $j \in X \cup Y \cup Z$ and let $d'_j = P$ otherwise.
%     Now consider the problem of scheduling the jobs in $J$ with release dates $r_j$ and due dates $d'_j$.
%  It is well-known that there exists a (possibly preemptive) schedule that completely processes each job between its release date and its due date if and only if the same is true for the (possibly preemptive) sequence that always processes a job with lowest EDD-index among all non-completed released jobs.
%  Observe that $\sigma$ completes all jobs in $X$ before it starts any job in $Y$ and it finished all jobs in $Y \cup J^r$ before it starts any job in $Z$. Hence $\sigma$ is a sequence respecting the release dates $r_j$ and due dates $d'_j$ of all jobs. 
%  Furthermore, $\sigma_{X,Y,Z}$ always processes a job with earliest due date among all released jobs. Therefore $C^{\sigma_{X,Y,Z}}_j \leq d'_j = d_j$ for all $j \in X \cup Y \cup Z$.
%  \end{proof}
  
  To establish property 3, let $k = \min Z$ and $S = \{j \in J^o \cap (X \cup Y) \;|\; j > k\}$. Now consider the sequence $\sigma'$ that arises from $\sigma_{X,Y,Z}$ by moving all jobs in $S$ to the position right before job $k$ (in arbitrary order). Note that $C^{\sigma'}_j \leq C^{\sigma_{X,Y,Z}}_j$ for all $j \in J \setminus S$, and $C^{\sigma_{X,Y,Z}}_j \leq C^{\sigma_{X,Y,Z}}_k \leq d_k \leq d_j$ for all $j \in S$. 
  Furthermore, resorting the jobs of $S \cup Z$ in $\sigma'$ according to EDD does not cause any job to become tardy. The resulting sequence is $\sigma_{X^*,Y^*,Z^*}$ for $X^* = X \setminus S$, $Y^* = Y^* \setminus S$, and $Z^* = Z \cup S$ and fulfills all requirements of the lemma.
\end{proof}

\begin{theorem}
	$1|er|\sum w_jU_j$ can be solved in $\mathrm{O}(n P^{4})$ time.
\end{theorem}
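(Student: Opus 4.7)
By \Cautoref{lem:wU-opt-structure}, we restrict attention to sequences $\sigma_{X,Y,Z}$ satisfying properties~1--3 and $er\le K^r$, and equivalently seek to maximize $w(X\cup Y\cup Z)$. I plan to build the triple $(X,Y,Z)$ by a dynamic program that iterates over the jobs in EDD order and decides, for each job, whether it is placed in $X$, $Y$, $Z$, or dropped into the tardy set.

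The central structural observation is that each block of $\sigma_{X,Y,Z}$ is internally in EDD order, so the completion time of a non-tardy job $j$ in block $b$ equals the start time of block $b$ (which is determined by the totals $p_X$, $p_{Y^o}=p(Y\cap J^o)$, $p_{Y^r}=p(Y\cap J^r)$, $p_Z$) plus the running processing-time sum within $b$ up to and including $j$. I would therefore use DP states of the form $(j,s_X,s_{Y^o},s_{Y^r},s_Z)$, where each $s_*$ is the cumulative processing time committed to the corresponding set among the jobs $1,\ldots,j$. The state space has size $O(nP^4)$, and each state admits at most four transitions (one per placement of job $j+1$) of $O(1)$ cost, yielding a total running time of $O(nP^4)$. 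Property~3, $\max(X\cup(Y\cap J^o))<\min Z$, is enforced with a single binary flag, flipped when an o-job first enters $Z$, which subsequently forbids $X$- and $Y\cap J^o$-placements. The resource budget $er\le K^r$ splits into three sub-cases according to whether $Y\cap J^r$ and $J^r\setminus Y$ are empty; in each sub-case $er$ is a simple function of the terminal totals together with the processing time $A$ of block~2 preceding the first r-job of $Y$, which can be updated in $O(1)$ per step and checked at the end.

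The main obstacle is that the tardiness conditions for $Y$- and $Z$-placements read $p_X+s_{Y^o}+s_{Y^r}+p_j\le d_j$ and $p_X+p_{Y^o}+p(J^r)+s_Z+p_j\le d_j$ respectively, both of which nominally depend on the \emph{final} values of $p_X$ and $p_{Y^o}$, not on the running sums available at intermediate steps. The standard remedy in this kind of block-structured DP, which I would adopt here, is to treat $s_X$ and $s_{Y^o}$ in the DP state as parameterizing the prospective terminal totals: a placement is accepted at an intermediate step against these parameters (read off the state coordinates), and the trajectory is retained only if, at the terminal state $j=n$, the running sums are consistent with the committed parameters. This effectively runs all $O(P^2)$ candidate pairs $(p_X,p_{Y^o})$ in parallel inside the same state coordinates, without enlarging the state space beyond the $O(P^4)$ coordinate-product already present, and gives the claimed $O(nP^4)$ bound.
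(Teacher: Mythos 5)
There is a genuine gap, and it sits exactly at the point you flag as ``the main obstacle.'' Your state $(j,s_X,s_{Y^o},s_{Y^r},s_Z)$ stores only \emph{running} sums, but the deadline test for placing $j$ in $Y$ needs the \emph{final} total $p(X)$ (and the test for $Z$ needs the final $p(X)+p(Y\cap J^o)$), and since $X$ and the part of $Y$ below $\min Z$ interleave in EDD order, $s_X$ is genuinely not final when those tests must be performed. Your proposed remedy --- letting $s_X$ and $s_{Y^o}$ ``parameterize the prospective terminal totals'' while also serving as running sums --- is not a well-defined DP: a coordinate that is fixed along a trajectory (a committed guess) cannot simultaneously be a quantity that increases as jobs are added, and checking at $j=n$ that ``the running sums are consistent with the committed parameters'' presupposes that both the guess and the running sum are stored, i.e., an extra factor of $P$ per guessed total. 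Done honestly, your monolithic DP needs run-$X$, guess-$p(X)$, run-$Y$, run-$(Y\cap J^o)$ and run-$Z$, which is $\mathrm{O}(nP^5)$, not $\mathrm{O}(nP^4)$.

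The paper avoids this with two moves you are missing. First, it \emph{does} pay for the guess: the value $t=p(X)$ is an explicit additional index of the recursion $\theta_{5,t}$, and consistency is enforced at the end by evaluating the table only at states whose run-$X$ coordinate equals $t$. Second, it does not run one monolithic DP: it splits the construction at $\kappa=\min Z^*$ (property 3 of \Cautoref{lem:wU-opt-structure} guarantees all of $X^*$ and $Y^*\cap J^o$ lie below $\kappa$, so $Y^*[\kappa,n]\subseteq J^r$), handles $Z^*$ and the r-job suffix of $Y^*$ by two separate classic $1||\sum w_jU_j$-type DPs that depend only on their start offsets $\rho_2$ and $\rho_1$, and glues the pieces together by enumerating $(\kappa,\rho_1,\rho_2)$. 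This removes run-$Z$ (and the $Y$-suffix) from the main DP, whose indices are then $t,\varrho,\varrho',\varrho''$ --- four $P$-valued quantities --- which is what yields $\mathrm{O}(nP^4)$. No guess for $p(Y\cap J^o)$ is needed because, by the split at $\kappa$, that total is already final when the first $Z$-decision is taken. To repair your proof you would have to adopt both the explicit guess and this decomposition; with only the former you get $\mathrm{O}(nP^5)$, and with neither the deadline tests cannot be evaluated at all. (Your refined three-case computation of the exact renting period is unnecessary: the sufficient condition $p(Y\cap J^o)+p(J^r)\le K^r$ is what \Cautoref{lem:wU-opt-structure} certifies for an optimal solution, so it can be used directly.)
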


\begin{proof}
By \Cautoref{lem:wU-opt-structure}, it is sufficient to identify appropriate sets $X^*, Y^*, Z^*$ as described in the lemma. 
We do so using three dynamic programs: one for constructing candidates for $X^*$ and a prefix of $Y^*$, one for constructing candidates for a suffix of $Y^*$ only containing r-jobs, and one for constructing candidates for $Z^*$.

More precisely, let $\kappa^* = \min Z^* \cup \{n+1\}$, $\rho_1^* = p(X^* \cup (Y^*[1, \kappa-1]))$, and $\rho_2^* = p(X^* \cup Y^* \cup J^r)$.
We enumerate all possible values $\kappa, \rho_1, \rho_2$ for $\kappa^*, \rho_1^*, \rho_2^*$ and determine candidates $X_{\kappa, \rho_1, \rho_2}$ for $X^*$, $Y'_{\kappa, \rho_1, \rho_2}$ for $Y^*[1, \kappa - 1]$, $Y''_{\kappa, \rho_1}$ for $Y^*[\kappa, n]$, and $Z_{\kappa, \rho_2}$ for $Z^*$.
We now describe the three DPs for computing the pair $(X_{\kappa, \rho_1, \rho_2}, Y'_{\kappa, \rho_1, \rho_2})$, and the sets $Y''_{\kappa, \rho_1}$ and $Z_{\kappa, \rho_2}$, respectively. Our goal is to make sure that the jobs in the respective sets will not be tardy while maximizing the total weight of jobs contained the set.

For $\kappa \in \{1, \dots, n+1\}$ and $\rho_1, \rho_2 \in \{0, \dots, P\}$, define
\begin{align*}
%  \mathcal{X}_{\kappa,\rho_1,\rho_2,\rho_3} = \{ & (X, Y) \;|\; X, Y \subseteq J[1, \kappa-1],\ X \cap Y = \emptyset = X \cap J^r, \\
%  & C^{\sigma_{X,Y,\emptyset}}_j \leq d_j \text{ for all } j \in X \cup Y,\\
%  & p(X) = \rho_1,\ p(Y) = \rho_2,\ p(Y \cap J^o) = \rho_3\}\\[5pt]
  \mathcal{X}_{\kappa,\rho_1,\rho_2} &= \left\lbrace (X, Y') \;\middle|\; 
  \begin{array}{l}
   X, Y' \subseteq J[1, \kappa-1],\ X \cap Y' = \emptyset = X \cap J^r, \\
   \sum_{j' \in X \cup Y' : j' \leq j} p_{j'} \leq d_j, \forall j \in X \cup Y',\\
   %C^{\sigma_{X,Y',\emptyset}}_j \leq d_j \text{ for all } j \in X \cup Y',\\
   p(Y' \cap J^o) \leq K^r - p(J^r), \\
   p(X \cup Y') = \rho_1,\ \rho_1 + p(J^r \setminus Y') = \rho_2
  \end{array}
  \right\rbrace \\
  %& \\
  %& \}\\[5pt]
  \mathcal{Y}''_{\kappa,\rho_1} &= \left\lbrace  Y'' \;\middle|\; Y'' \subseteq J^r[\kappa, n], \  \rho_1 + \!\!\! \sum_{j' \in Y'' : j' \leq j}\!\!\! p_{j'} \leq d_j, \forall  j \in Y''\right\rbrace \\
  \mathcal{Z}_{\kappa,\rho_2} &= \left\lbrace  Z \;\middle|\; Z \subseteq J^o[\kappa, n], \  \rho_2 + \!\!\! \sum_{j' \in Z : j' \leq j} \!\!\! p_{j'} \leq d_j, \forall  j \in Z \right\rbrace
\end{align*}
and let 
\begin{align*}
(X_{\kappa,\rho_1,\rho_2},Y'_{\kappa,\rho_1,\rho_2}) \in &\argmax_{(X,Y') \in \mathcal{X}_{\kappa,\rho_1,\rho_2}} w(X), \\
Y''_{\kappa,\rho_1} \in &\argmax_{Y'' \in \mathcal{Y}''_{\kappa,\rho_1}} w(Y''), \text{ and } \\
Z_{\kappa,\rho_2} \in &\argmax_{Z \in \mathcal{Z}_{\kappa,\rho_2}} w(Z)
\end{align*}

Note that $Y''_{\kappa,\rho_1}$ and $Z_{\kappa,\rho_2}$ can be computed in time $\mathrm{O}(nP)$ for all choices of $\kappa, \rho_1, \rho_2$ by a dynamic program for the classic problem $1||\sum w_jU_j$, see \citet{Sahni1976}.

In order to construct $X_{\kappa,\rho_1,\rho_2}$ and $Y'_{\kappa,\rho_1,\rho_2}$, we guess $t = p(X)$ and construct two sets $X, Y'$ by iterating through the jobs from $1$ to $\kappa$ in EDD order, keeping track of the processing time of the jobs added to $X$ so far (denoted by $\varrho$), the processing time of the jobs added to $Y'$ so far ($\varrho'$) and the processing time of the o-jobs added to $Y'$ so far ($\varrho''$). 
To this end, we define
\begin{align*}
  &\theta_{5,t}(k, \varrho, \varrho', \varrho'') = \\
  &\max \left\lbrace  w(X \cup Y')  \;\middle|\;
  \begin{array}{l}
    X, Y' \subseteq J[1, k],\ X \cap Y' = \emptyset = X \cap J^r,   \\
   \sum\limits_{j' \in X : j' \leq j} p_j \leq d_j, \forall  j \in X,\\
   t + \sum\limits_{j' \in Y' : j' \leq j} p_j \leq d_j, \forall j \in Y',\\
   p(X) = \varrho,\ p(Y') = \varrho',\ p(Y' \cap J^o) = \varrho''      
  \end{array}
  \right\rbrace
\end{align*}
and observe that $\theta_{5,t}$ can be computed by the following recursion
\begin{align*}
	&\theta_{5,t}(0, \varrho, \varrho', \varrho'') = \left\lbrace \begin{array}{ll}
	0 & \text{ if } \varrho, \varrho', \varrho'' = 0 \\
	-\infty & \text{ otherwise}
	\end{array} \right., \\
	&\theta_{5,t}(j, \varrho, \varrho', \varrho'') =  \\
	&\max\left\lbrace 
	\begin{array}{l}
   \!\!\! \theta_{5,t}(j-1, \varrho, \varrho', \varrho'')\\
   \!\!\!	\left\lbrace 
	 \begin{array}{ll}
    	\!\!\!\theta_{5,t}(j-1, \varrho - p_j, \varrho', \varrho'') + w_j & \text{ if } \varrho \leq d_j, j \in J^o\\
    	\!\!\!-\infty & \text{ otherwise }
	  \end{array}	\right.\\
  \!\!\!	\left\lbrace 
	  \begin{array}{ll}
  	  \!\!\!\theta_{5,t}(j - \! 1, \varrho, \varrho' -\! p_j, \varrho'' -\! p_j) + w_j & \!\!\text{ if } t + \varrho' \leq d_j, j \in J^o\\
  	  \!\!\!-\infty & \text{ otherwise }
	  \end{array}	\right.\\
  \!\!\!	\left\lbrace 
	  \begin{array}{ll}
  	  \!\!\!\theta_{5,t}(j - 1, \varrho, \varrho' - p_j, \varrho'') + w_j & \text{ if } t + \varrho' \leq d_j, j \in J^r\\
  	  \!\!\!-\infty & \text{ otherwise }
	  \end{array}	\right. 
	\end{array} \!\!\!\!\!\!\!\! \right\rbrace.
\end{align*}
We can thus compute the values $\theta_{5,t}(j, \varrho, \varrho', \varrho'')$ for all choices of $j \in J$ and $t, \varrho, \varrho', \varrho'' \in \{0, \dots, P\}$ in time $\mathrm{O}(nP^4)$. Note that
\begin{align*}
  &w(X_{\kappa,\rho_1,\rho_2}) + w(Y'_{\kappa,\rho_1,\rho_2}) = \\ 
  &\max \left\lbrace \theta_{5,t}(\kappa-1, t, \varrho', \varrho'') \;\middle|\;
  \begin{array}{l} 
  t + \varrho' = \rho_1, \\  \varrho'' \leq K^r - p(J^r), \\
   \rho_1 + p(J^r) - \varrho' + \varrho'' = \rho_2
   \end{array}
  \right\rbrace.
\end{align*}
Hence we can obtain $X_{\kappa,\rho_1,\rho_2}$ and $Y'_{\kappa,\rho_1,\rho_2}$ by iterating through all combinations of $t, \varrho' \in \{0, \dots, P\}$ and $\varrho'' \in \{0, \dots, K^r - p(J^r)\}$. 

After constructing $X_{\kappa,\rho_1,\rho_2}$, $Y'_{\kappa,\rho_1,\rho_2}$, $Y''_{\kappa,\rho_1}$, and $Z_{\kappa,\rho_2}$ for all choices of $\kappa \in \{1, \dots, n+1\}$ and $\rho_1, \rho_2 \in \{0, \dots, P\}$, we can find $\kappa^*, \rho_1^*, \rho_2^*$ so as to maximize $w(X_{\kappa^*,\rho_1^*,\rho_2^*}) + w(Y'_{\kappa^*,\rho_1^*,\rho_2^*}) + w(Y''_{\kappa^*,\rho^*_1}) + w(Z_{\kappa^*,\rho_2^*})$ in time $\mathrm{O}(nP^2)$.
\end{proof}

\section{Complexity results for $1|\gamma|er$ and $1||(\gamma,er)$}
\label{sec:other}

\subsection{Complexity results for $1|\gamma|er$}
\hl{
\begin{theorem}
	\label{totalcompletion2}
	Both $1|\sum C_j|er$ and $1|\max L_j|er$ are NP-hard.
\end{theorem}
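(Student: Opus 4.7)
The plan is to observe that the NP-hardness reductions used for Theorems~\ref{totalcompletion} and \ref{thm:maxlateness} are essentially \emph{symmetric} in the two cost measures ($\sum C_j$/$L_{\max}$ and $er$), so they can be reused almost verbatim to establish NP-hardness for the ``swapped'' decision versions corresponding to $1|\sum C_j|er$ and $1|\max L_j|er$.

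For $1|\sum C_j|er$, I would reduce from \textsc{Even-Odd-Partition} using the \emph{same} instance construction as in the proof of Theorem~\ref{totalcompletion} ($2m+2$ jobs, two r-jobs $2m+1, 2m+2$ with the same processing times, and the constants $C$ and $D$ defined identically). The only change is how the two parameters are interpreted: instead of treating $K^r = p_{2m+2} + mB^2 + B$ as the renting-period budget and $2(C+D)+1$ as the target objective value, I would set the scheduling-cost budget to $K^{\gamma} = 2(C+D)+1$ and ask whether there exists a sequence with $\sum C_j \le K^{\gamma}$ and renting period at most $p_{2m+2} + mB^2 + B$. The proof of Theorem~\ref{totalcompletion} already establishes the biconditional ``there is a schedule satisfying both bounds simultaneously iff the \textsc{Even-Odd-Partition} instance has a yes-certificate'' (no part of that argument depends on which of the two quantities is designated as the objective and which as the budget). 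This gives NP-hardness of the decision version of $1|\sum C_j|er$ immediately.

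For $1|\max L_j|er$, I would analogously recycle the reduction from \textsc{Partition} constructed in the proof of Theorem~\ref{thm:maxlateness}, keeping all processing times, due dates, and the job set unchanged. This time the scheduling-cost budget is set to $K^{\gamma} = 0$ (so ``$\max L_j \le K^{\gamma}$'' becomes ``every job is on time'') and we ask whether there is a sequence achieving renting period at most $B+2$. As argued in Theorem~\ref{thm:maxlateness}, job $m+2$ must be last, job $m+1$ must start exactly at $B$ to simultaneously satisfy $d_{m+1}=B+1$ and the feasibility of the renting window, and this happens iff the \textsc{Partition} instance is a yes-instance. Again, the biconditional is agnostic to which of the two bounds is called the ``budget''.

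The main ``obstacle'' here is essentially cosmetic: one must verify carefully that the equivalences proved earlier really do simultaneously constrain both $er$ and the scheduling cost, so that swapping roles does not create an asymmetry. In the Even-Odd-Partition reduction, for instance, the two claims establishing ``$\sigma(m+1)=2m+1$'' and ``$\sigma(2m+2)=2m+2$'' are derived from the total-completion-time bound, while the $K^r$-bound is used to force $\delta=0$; both are genuinely needed for the forward direction, so both directions of the biconditional survive the role swap. Once this is checked, the reductions transfer without modification, proving the theorem.
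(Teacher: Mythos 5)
Your proposal is correct and matches the paper's intent: the paper simply states that \Cautoref{totalcompletion2} ``follows easily'' from the NP-hardness of $1|er|\sum C_j$ and $1|er|\max L_j$, precisely because the underlying decision problem (does a schedule exist meeting both the renting-period bound and the scheduling-cost bound simultaneously?) is identical for both role assignments, and both reductions genuinely use both bounds in the forward direction. Your explicit check that the biconditionals are symmetric in the two constraints is exactly the detail the paper leaves implicit.
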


%\begin{proof}
%Suppose we have an algorithm $A^{er}$ that solves $1|\sum C_j|er$ in polynomial time. We, then, can solve $1|er|\sum C_j$ by applying $A^{er}$ in a binary search scheme. Since $\sum_{j\in J} C_j\leq n\sum_{j\in J} p_j$ we have to solve at most $O(\log n+\log (\sum p_j))$ instances of $1|\sum C_j|er$ and, thus, can solve $1|er|\sum C_j$ in polynomial time. However, \Cautoref{totalcompletion} states that this is not possible (unless $P=NP$).
%\end{proof}

% \begin{theorem}
% 	\label{thm:maxlateness2}
% 	$1|\max L_j|er$ is NP-hard.
% \end{theorem}

%\begin{proof}
%As in the proof of \Cautoref{totalcompletion2} we can conclude NP-hardness %of $1|\max L_j|er$ from NP-hardness of $1|er|\max L_j$, see %\Cautoref{thm:maxlateness}.
%\end{proof}

We abstain from formal proofs for \Cautoref{totalcompletion2} since they follow easily from NP-hardness of $1|er|\sum C_j$ and $1|er|\max L_j$.
}{R2C3}
\begin{theorem}
	\label{thm:Ar}
	Given $A^\gamma$ as an algorithm that solves $1|er|\gamma$ in $\mathrm{O}(T(n, P,W))$, there is an algorithm ${A}^r$ that solves $1|\gamma|er$ in $\mathrm{O}(T(n,P,W) \log P)$.
\end{theorem}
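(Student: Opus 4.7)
The plan is to apply binary search over the renting-period budget $K^r$, using $A^\gamma$ as a subroutine to decide feasibility. Observe first that any optimal renting period is an integer in the range $[p(J^r), P]$, since processing times are integers and any feasible sequence has $er \geq p(J^r)$ (the r-jobs must all be processed while the resource is rented) and $er \leq P$. So there are at most $P - p(J^r) + 1 = \mathrm{O}(P)$ candidate values.

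The key structural property needed is monotonicity: for two budgets $K^r_1 \leq K^r_2$, the minimum scheduling cost achievable subject to $er \leq K^r_1$ is at least the minimum achievable subject to $er \leq K^r_2$, because every sequence feasible for the tighter budget is feasible for the looser one. Define $\gamma^*(K^r)$ to be the minimum scheduling cost subject to $er \leq K^r$ (or $+\infty$ if infeasible). Then $\gamma^*$ is non-increasing in $K^r$, so the set $\{K^r : \gamma^*(K^r) \leq K^\gamma\}$ is an up-set, and its minimum element (if it exists) is exactly the optimal value of $1|\gamma|er$.

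The algorithm $A^r$ thus proceeds as follows. First invoke $A^\gamma$ with $K^r = P$; if even then the resulting scheduling cost exceeds $K^\gamma$, declare infeasibility. Otherwise, binary search on $K^r \in \{p(J^r), \ldots, P\}$: at each step evaluate $\gamma^*(K^r)$ by calling $A^\gamma$ and compare against $K^\gamma$, recursing into the lower or upper half accordingly. After $\mathrm{O}(\log P)$ iterations this returns the smallest $K^r$ with $\gamma^*(K^r) \leq K^\gamma$, together with a corresponding sequence $\sigma$ whose scheduling cost is at most $K^\gamma$. The renting period of this $\sigma$ is exactly the optimal value of $1|\gamma|er$, by the monotonicity argument. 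The total running time is $\mathrm{O}(T(n,P,W) \log P)$.

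The only technical point to verify carefully is that the $\sigma$ returned by $A^\gamma$ at the smallest feasible budget indeed attains renting period equal to $K^r$, not something strictly smaller; but this is not strictly necessary, since what matters is simply that the returned $\sigma$ is feasible for $1|\gamma|er$ and that $K^r$ is a lower bound on the optimum via the preceding binary search step. I expect no real obstacle here; the main thing to be careful about in writing up is handling the boundary cases (empty $J^r$, or no feasible sequence at all) and the precise binary-search endpoints.
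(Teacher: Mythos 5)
Your proposal is correct and follows essentially the same approach as the paper: binary search over the renting-period budget, using $A^\gamma$ as a feasibility oracle, with $P$ as an upper bound giving $\mathrm{O}(\log P)$ iterations. The monotonicity observation and boundary-case discussion you add are sound elaborations of the same argument.
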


\begin{proof}  
Note that we can employ the algorithm $A^\gamma$ to check whether there exists a feasible schedule with scheduling cost of at most $K^{\gamma}$ and with a renting period of at most $K^r$. Hence, we can perform binary search to determine the minimum length of the renting period such that there exists a feasible schedule with scheduling cost of at most $K^r$. As $P$ is a natural upper bound on the renting period the binary search takes at most $\log P$ steps.
\end{proof}

In particular, \Cautoref{thm:Ar} implies that we can use any pseudo-polynomial algorithm for $1|er|\gamma$ to obtain a pseudo-polynomial algorithm for $1|\gamma|er$ with the runtime increasing only by a factor of $\log P$.

The generic result of \Cautoref{thm:Ar} suggests that $1|\sum w_j C_j|er$ and $1|\max L_j|er$ are solvable in $\mathrm{O}(nP\min\{W,P\}\log P)$ and $\mathrm{O}(nP\log P)$, respectively. We now show that these two problems can be solved more efficiently. 
\begin{theorem}
$1|\sum w_j C_j|er$ can be solved in $\mathrm{O}(nP\min\{W,P\})$ time.
\end{theorem}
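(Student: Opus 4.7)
The plan is to recycle, without modification, the dynamic programs built in the proofs of \Cautorefs{lem:twc_nP2}{Lemmas} and \ref{lem:twc_nPW}, and only retool the final scan. The crucial observation is that DP1/DP2 (resp.\ DP3/DP4) already compute, for every $\kappa \in J[\alpha,\beta]$ and every $\rho \le P$, the values $f_\kappa(X_{\kappa,\rho})$ and $g_\kappa(Y_{\kappa,\rho})$, and these values depend only on the target processing time $\rho$ and not on which side of the two-objective trade-off is being budgeted. Thus the same $\mathrm{O}(nP\min\{P,W\})$ tabulation serves both $1|er|\sum w_jC_j$ and $1|\sum w_jC_j|er$.

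First, I would observe that \Cautoref{lem:twc-opt-structure} transfers verbatim: given any optimal sequence $\sigma^*$ for $1|\sum w_jC_j|er$, the same WSPT-rearrangement and merge construction yields a five-block sequence $\sigma_{X^*,Y^*}$ with the identical set of "middle" r-jobs (hence identical renting period $er$) and $\mathrm{TWC}(\sigma_{X^*,Y^*}) \le \mathrm{TWC}(\sigma^*) \le K^\gamma$, so $\sigma_{X^*,Y^*}$ is feasible and optimal. Next, I would run DP1/DP2 or DP3/DP4 (whichever is faster on the given instance) to build the full table $\{f_\kappa(X_{\kappa,\rho}),\ g_\kappa(Y_{\kappa,\rho})\}$ in $\mathrm{O}(nP\min\{P,W\})$ time.

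Finally, I would invert the roles of objective and constraint in the scan of \Cautoref{lem:twc_nP2}. Since $C_j$ is fixed for every $j \notin J[\alpha,\beta]$, set $K' := K^\gamma - \sum_{j \in J\setminus J[\alpha,\beta]} w_jC_j$ and seek
\begin{align*}
  (\kappa^*,\rho_1^*,\rho_2^*) \in \argmax\bigl\{\rho_1 + \rho_2 :\ f_\kappa(X_{\kappa,\rho_1}) + g_\kappa(Y_{\kappa,\rho_2}) \le K',\ \mathcal{X}_{\kappa,\rho_1}, \mathcal{Y}_{\kappa,\rho_2} \ne \emptyset\bigr\},
\end{align*}
after which $er^* = p(J[\alpha,\beta]) - \rho_1^* - \rho_2^*$. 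Building the non-dominated chains $\mathcal{X}'_\kappa$ and $\mathcal{Y}'_\kappa$ (sorted in decreasing $\rho$) exactly as in \Cautoref{lem:twc_nP2}, the very definition of non-dominated forces $f_\kappa$ to be strictly increasing along $\mathcal{X}'_\kappa$ (and $g_\kappa$ along $\mathcal{Y}'_\kappa$). Hence, as we scan $\mathcal{X}'_\kappa$ in decreasing order of $\rho_1$, the residual budget $K' - f_\kappa(X_{\kappa,\rho_1})$ grows monotonically, and so the largest feasible $\rho_2$ in $\mathcal{Y}'_\kappa$ only grows: a single two-pointer sweep suffices, costing $\mathrm{O}(P)$ per $\kappa$ and $\mathrm{O}(nP)$ overall. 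The total runtime is therefore dominated by the DP tabulation, yielding $\mathrm{O}(nP\min\{P,W\})$. The only delicate point is verifying the monotonicity that validates the two-pointer sweep; this is immediate from the definition of domination and is the sole step requiring attention.
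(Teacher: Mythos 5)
Your proposal is correct and follows essentially the same route as the paper: reuse the DP tables $f_\kappa(X_{\kappa,\rho})$, $g_\kappa(Y_{\kappa,\rho})$ from the $1|er|\sum w_jC_j$ algorithms (computed in $\mathrm{O}(nP\min\{P,W\})$ time) and swap the roles of objective and constraint in the final $\mathrm{O}(nP)$ scan over $(\kappa,\rho_1,\rho_2)$, minimizing the renting period $p(J[\alpha,\beta])-\rho_1-\rho_2$ subject to $f_\kappa+g_\kappa$ staying within the cost budget. Your explicit justification of the two-pointer sweep over the non-dominated chains, and your adjustment of the budget to $K'$ to account for the fixed contribution of jobs outside $J[\alpha,\beta]$, only make explicit details the paper leaves implicit.
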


\begin{proof}
Let us obtain subsets ${X}_{\kappa,\rho}$ and ${Y}_{\kappa,\rho}$ for all $\kappa$ with $\alpha < \kappa \leq \beta$ and all $\rho$ with $0 < \rho \le K^r$, as described in the proofs of \Cautoref{lem:twc_nP2} and \Cautoref{lem:twc_nPW}. Then we compute $X^* = X_{\kappa^*,\rho_1^*}$ and $Y^* = Y_{\kappa^*,\rho_2^*}$, where 
\begin{align*}
(\kappa^*,\rho_1^*,\rho_2^*) \in \argmin_{(\kappa,\rho_1,\rho_2) \in \Xi'} \{p(J[\alpha,\beta]) - \rho_1 - \rho_2\}
\end{align*}
and 
\begin{align*}
\Xi' = \{(\kappa,\rho_1,\rho_2) \;|\; {\cal{X}}_{\kappa,\rho_{1}},{\cal{Y}}_{\kappa,\rho_2} \neq \emptyset, \; f_\kappa(X_{\kappa,\rho_1})+ g_\kappa(Y_{\kappa,\rho_2}) \le K^\gamma \}.
\end{align*}

Obtaining all subsets ${X}_{\kappa,\rho}$ and ${Y}_{\kappa,\rho}$ requires $\mathrm{O}(nP\min\{W,P\})$ and finding $(\kappa^*,\rho_1^*,\rho_2^*)$ can be done in $\mathrm{O}(nP)$, using a very similar approach to that in the last paragraph of the proof of \Cautoref{lem:twc_nP2}.
\end{proof}

\begin{theorem}
$1|\max L_j|er$ can be solved in $\mathrm{O}(nP)$ time.
\end{theorem}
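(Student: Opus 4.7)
The plan is to reuse the machinery developed for $1|er|\max L_j$ in Theorem~\ref{thm:lmax_nP2} and only change the optimization criterion in the final step. First I would argue that the structural result of \Cautoref{lem:lmax-opt-structure} still applies to $1|\max L_j|er$: if $\sigma$ is an optimal feasible sequence (i.e., $L_{\max}(\sigma)\le K^{\gamma}$), the construction in the proof of \Cautoref{lem:lmax-opt-structure} produces a sequence $\sigma_{X^*, Y^*}$ with $L_{\max}(\sigma_{X^*, Y^*})\le L_{\max}(\sigma)\le K^{\gamma}$ and with the same set of jobs between the first and last r-job. In particular, $er(\sigma_{X^*, Y^*}) = er(\sigma)$, so we may restrict attention to sequences of the five-block form with $X^*, Y^* \subseteq H$ and $\max X^* < \min Y^*$.

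Next I would invoke the two dynamic programs from the proof of \Cautoref{thm:lmax_nP2} that compute the tables $\theta_3(\kappa, \rho)$ and $\theta_4(\kappa, \rho)$ in total time $\mathrm{O}(nP)$, where $\theta_3(\kappa, \rho) = \min_{X \in \mathcal{X}_{\kappa, \rho}} f'_\kappa(X)$ and $\theta_4(\kappa, \rho) = \min_{Y \in \mathcal{Y}_{\kappa, \rho}} g'_\kappa(Y)$. A short observation (which I would spell out briefly) is that because $\min Y \ge \kappa > \max X$, the completion times of all jobs in $J[\alpha, \kappa-1]$ in $\sigma_{X, Y}$ coincide with those in $\sigma_{X, \emptyset}$, and symmetrically for $J[\kappa, \beta]$ in $\sigma_{\emptyset, Y}$. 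Moreover, the completion times of jobs in $J \setminus J[\alpha, \beta]$ do not depend on $X, Y$ at all, so their maximum lateness contributes a constant $L_0$. Consequently
\begin{equation*}
L_{\max}(\sigma_{X, Y}) = \max\{L_0,\; f'_\kappa(X),\; g'_\kappa(Y)\}.
\end{equation*}
If $L_0 > K^\gamma$ the instance is infeasible; otherwise, feasibility of $\sigma_{X, Y}$ is equivalent to $f'_\kappa(X) \le K^\gamma$ and $g'_\kappa(Y) \le K^\gamma$.

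Since $er(\sigma_{X, Y}) = p(J[\alpha, \beta]) - p(X) - p(Y)$, minimizing $er$ is equivalent to maximizing $\rho_1 + \rho_2$ subject to the existence of $X \in \mathcal{X}_{\kappa, \rho_1}$ with $f'_\kappa(X) \le K^\gamma$ and $Y \in \mathcal{Y}_{\kappa, \rho_2}$ with $g'_\kappa(Y) \le K^\gamma$. For fixed $\kappa$ the two constraints decouple, so I would define
\begin{align*}
\rho_1^{\max}(\kappa) &:= \max\{\rho \le P : \theta_3(\kappa, \rho) \le K^\gamma\},\\
\rho_2^{\max}(\kappa) &:= \max\{\rho \le P : \theta_4(\kappa, \rho) \le K^\gamma\},
\end{align*}
and pick $\kappa^* \in \argmax_\kappa \{\rho_1^{\max}(\kappa) + \rho_2^{\max}(\kappa)\}$. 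The corresponding $X^* = X_{\kappa^*, \rho_1^{\max}(\kappa^*)}$ and $Y^* = Y_{\kappa^*, \rho_2^{\max}(\kappa^*)}$ (retrieved from the DP tables by traceback) yield an optimal sequence $\sigma_{X^*, Y^*}$.

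For the running time, computing $\theta_3$ and $\theta_4$ takes $\mathrm{O}(nP)$ by \Cautoref{thm:lmax_nP2}. Scanning the two tables to obtain $\rho_1^{\max}(\kappa)$ and $\rho_2^{\max}(\kappa)$ for each $\kappa \in J[\alpha, \beta]$ is another $\mathrm{O}(nP)$, and selecting the best $\kappa^*$ is $\mathrm{O}(n)$. The total is $\mathrm{O}(nP)$, strictly better than the generic $\mathrm{O}(nP \log P)$ bound from \Cautoref{thm:Ar}. The only real obstacle is the (routine) verification that the structural lemma and the decomposition of $L_{\max}(\sigma_{X, Y})$ remain valid when the roles of objective and budget are swapped; both points I would handle in a couple of sentences as indicated above.
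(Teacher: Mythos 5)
Your proof is correct and follows essentially the same route as the paper: reuse the DP tables $\theta_3,\theta_4$ from \Cautoref{thm:lmax_nP2} and then, instead of minimizing the lateness over tuples $(\kappa,\rho_1,\rho_2)$, maximize $\rho_1+\rho_2$ subject to the lateness budget. Your explicit decoupling of feasibility into the two independent conditions $f'_\kappa(X)\le K^\gamma$ and $g'_\kappa(Y)\le K^\gamma$ (a maximum rather than a sum, which is the right combination rule for $\max L_j$) is in fact slightly cleaner than the paper's formulation and makes the $\mathrm{O}(nP)$ final search immediate.
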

\begin{proof}
Let us obtain subsets ${X}_{\kappa,\rho}$ and ${Y}_{\kappa,\rho}$ for all $\kappa$ with $\alpha < \kappa \leq \beta$ and all $\rho$ with $0 < \rho \le K^r$, as described in the proof of \Cautoref{thm:lmax_nP2}. Then we compute $X^* = X_{\kappa^*,\rho_1^*}$ and $Y^* = Y_{\kappa^*,\rho_2^*}$, where 
\begin{align*}
(\kappa^*,\rho_1^*,\rho_2^*) \in \argmin_{(\kappa,\rho_1,\rho_2) \in \Xi'} \{p(J[\alpha,\beta]) - \rho_1 - \rho_2\}
\end{align*}
and 
\begin{align*}
\Xi' = \{(\kappa,\rho_1,\rho_2) \;|\; {\cal{X}}_{\kappa,\rho_{1}},{\cal{Y}}_{\kappa,\rho_2} \neq \emptyset, \; f'_\kappa(X_{\kappa,\rho_1})+ g'_\kappa(Y_{\kappa,\rho_2}) \le K^\gamma \}.
\end{align*}

Obtaining all subsets ${X}_{\kappa,\rho}$ and ${Y}_{\kappa,\rho}$ requires $\mathrm{O}(nP)$ (see the proof of \Cautoref{thm:lmax_nP2}) and finding $(\kappa^*,\rho_1^*,\rho_2^*)$ can be done in $\mathrm{O}(nP)$.
\end{proof}

\subsection{Complexity results for $1||(\gamma,er)$}

\hl{
\begin{theorem}\label{totalcompletion3maxlateness3}
	%If $1|er|\gamma$ is NP-hard, then $1||(\gamma,er)$ is NP-hard.
	Both, $1||(\sum C_j,er)$ and $1||(\max L_j,er)$, are NP-hard.
\end{theorem}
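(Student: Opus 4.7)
The plan is to derive NP-hardness of both bi-objective problems as an immediate consequence of the already-established NP-hardness of their constrained single-objective counterparts (\Cautoref{totalcompletion} and \Cautoref{thm:maxlateness}). My key observation will be that any optimum $\sigma^*$ of $1|er|\gamma$ with budget $K^r$ lies on the Pareto-front of $1||(\gamma,er)$: if some sequence $\sigma'$ dominated $\sigma^*$ in the bi-objective sense, then $\sigma'$ would itself be feasible for the constrained problem while attaining strictly smaller scheduling cost, contradicting optimality of $\sigma^*$.

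I would then formalize this into a reduction as follows. Given an instance $I$ of $1|er|\sum C_j$ with renting budget $K^r$, I would use the very same jobs to form an instance $I'$ of $1||(\sum C_j, er)$. Suppose a polynomial-time algorithm returned the Pareto-front of $I'$. Because $er$ is a non-negative integer bounded by $P$, the front contains at most $P+1$ points, and in polynomial time one can scan it for the Pareto-optimal sequence with smallest $\sum C_j$ among those whose renting period does not exceed $K^r$. By the observation above, that sequence would be optimal for $I$, contradicting \Cautoref{totalcompletion}. The identical argument with $\max L_j$ in place of $\sum C_j$ and \Cautoref{thm:maxlateness} in place of \Cautoref{totalcompletion} settles the second claim.

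The main conceptual subtlety, and essentially the only obstacle, is pinning down which notion of NP-hardness one adopts for a bi-objective problem, since conventions differ in the literature. The two most common choices, namely hardness of enumerating the Pareto-front and hardness of the joint feasibility decision problem ``does there exist a sequence with $\gamma \le K^\gamma$ and $er \le K^r$?'', are both handled by the reduction above; the latter is in fact even more immediate, as the stated decision problem coincides literally with the decision version of $1|er|\gamma$. In view of this, and consistent with the authors' practice of abstaining from a formal proof following \Cautoref{totalcompletion2}, I would record only the reduction sketch above rather than a fully formal write-up.
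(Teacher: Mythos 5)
Your proposal is correct and follows exactly the route the paper takes: the paper itself gives no formal proof and simply notes that the result follows from the NP-hardness of $1|er|\sum C_j$ and $1|er|\max L_j$, which is precisely the reduction you sketch (a polynomial-size Pareto front would let one solve the budgeted problem by scanning for the cheapest point with renting period at most $K^r$). The only quibble is that a constrained optimum need not itself be Pareto-optimal (it may be weakly dominated via a shorter renting period at equal cost), but some Pareto point with $er\le K^r$ attains the same optimal $\gamma$, so your scanning argument goes through unchanged.
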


%\begin{proof}
%Suppose we have an algorithm $A^{(\gamma,er)}$ that solves $1||(\gamma,er)$ in polynomial time. Note that the Pareto set has at most $n$ members. We, thus, can sort its member in increasing objective value with respect to $\gamma$ in $O(n\log n)$ time. Then, we can solve $1|er|\gamma$ by applying binary search inspecting the members of the Pareto set.
%\end{proof}

Again, we take a pass on a formal proof for \Cautoref{totalcompletion3maxlateness3} since it follows from NP-hardness of $1|er|\sum C_j$ and $1|er|\max L_j$.
}{R2C3}
\begin{theorem}
\label{thm:Ay}
	Given $A^\gamma$ as an algorithm that solves $1|er|\gamma$ in $\mathrm{O}(T(n, P,W))$, there is an algorithm ${A}^{r,\gamma}$ that solves $1||(\gamma,er)$ in  $\mathrm{O}(T(n, P,W) P)$.
\end{theorem}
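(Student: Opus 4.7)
The plan is to invoke $A^\gamma$ repeatedly for each possible value of the renting budget. Specifically, for each integer $k \in \{0, 1, \ldots, P\}$, I would run $A^\gamma$ with $K^r := k$ to obtain a sequence $\sigma_k$ that minimizes the scheduling cost among all feasible sequences with renting period at most $k$. Let $\gamma_k := \gamma(\sigma_k)$ and $er_k := er(\sigma_k) \leq k$. This yields a candidate set $\mathcal{C} := \{(\gamma_k, er_k) : 0 \leq k \leq P\}$ of at most $P+1$ points, produced in total time $\mathrm{O}(T(n,P,W) \cdot P)$.

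The central claim is that every Pareto-optimal cost pair is represented in $\mathcal{C}$. Indeed, because processing times are integer, the renting period of any sequence lies in $\{0, 1, \ldots, P\}$. So fix a Pareto-optimal pair $(\gamma^*, er^*)$ attained by some sequence $\sigma^*$. Running $A^\gamma$ with $K^r = er^*$ returns $\sigma_{er^*}$ with $\gamma_{er^*} \leq \gamma(\sigma^*) = \gamma^*$ by optimality of $A^\gamma$; conversely, $\gamma_{er^*} \geq \gamma^*$ holds because $\sigma_{er^*}$ is itself a feasible sequence whose renting period is at most $er^*$, and if $\gamma_{er^*} < \gamma^*$ then $(\gamma_{er^*}, er_{er^*})$ would dominate $(\gamma^*, er^*)$. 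Hence $\gamma_{er^*} = \gamma^*$, and $\mathcal{C}$ contains a point equivalent to (or dominating and equivalent to) $(\gamma^*, er^*)$.

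The final step is to filter $\mathcal{C}$ down to the genuine Pareto front. Since $\gamma_k$ is non-increasing in $k$ and $\mathcal{C}$ is already enumerated in order of $k$, a single linear scan suffices: process the candidates from $k = P$ down to $k = 0$, keeping a running minimum $er_{\min}$ of the renting periods seen so far, and retain $(\gamma_k, er_k)$ only when $er_k < er_{\min}$ (updating $er_{\min}$ accordingly). This filtering runs in $\mathrm{O}(P)$ time and isolates exactly the non-dominated points.

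The overall complexity is dominated by the $P+1$ invocations of $A^\gamma$, giving $\mathrm{O}(T(n,P,W) \cdot P)$ as required. The only subtlety worth emphasising, and the main conceptual point, is the reduction of a seemingly two-dimensional Pareto problem to $\mathrm{O}(P)$ one-dimensional constrained minimisations; this is enabled precisely by integrality of processing times, which makes $er$ take only polynomially many distinct integer values in $\{0, \ldots, P\}$.
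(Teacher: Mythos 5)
Your proposal is correct and matches the paper's intended argument: the paper merely remarks that the proof is ``similar to'' the binary-search reduction of \Cautoref{thm:Ar}, and the factor $P$ in the stated runtime corresponds exactly to your enumeration of all $P+1$ integer budget values $K^r \in \{0,\dots,P\}$ followed by a linear-time filter. One small caveat: because $A^\gamma$ need not minimize $er$ among cost-optimal sequences, your backward scan can retain a dominated candidate when two budgets yield the same scheduling cost but different renting periods; this is repaired by a standard tie-aware Pareto filter and affects neither correctness of the theorem nor the claimed complexity.
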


\begin{proof}
	The proof is similar to the proof of \Cautoref{thm:Ar}.
\end{proof}

The generic result of \Cautoref{thm:Ay} suggests that $1||(\sum w_j C_j,er)$ is solvable in \linebreak $\mathrm{O}(nP^2\min\{W,P\})$. In the following, however, we show that this problem only requires $\mathrm{O}(nP^2)$ to be solved. 

\begin{theorem}
$1||(\sum w_j C_j,er)$ can be solved in $\mathrm{O}(nP^2)$ time.
\end{theorem}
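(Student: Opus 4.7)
My plan is to reuse the dynamic programs DP1 and DP2 from the proof of \Cautoref{lem:twc_nP2} and read the Pareto front directly off their output. By \Cautoref{lem:twc-opt-structure} applied for every possible renting budget, each Pareto-optimal sequence has the form $\sigma_{X^*, Y^*}$ for some $X^*, Y^* \subseteq H$ with $\max X^* < \min Y^*$, so it is witnessed by a triple $(\kappa, \rho_1, \rho_2)$ with $\kappa \in J[\alpha, \beta]$, $\mathcal{X}_{\kappa, \rho_1} \neq \emptyset$, $\mathcal{Y}_{\kappa, \rho_2} \neq \emptyset$, and gives the candidate point
\[
\bigl(\, f_\kappa(X_{\kappa, \rho_1}) + g_\kappa(Y_{\kappa, \rho_2}) + \mu,\ \ p(J[\alpha, \beta]) - \rho_1 - \rho_2 \,\bigr),
\]
where $\mu$ is the total weighted completion time contribution of the jobs in $J[1, \alpha-1] \cup J[\beta+1, n]$, a constant independent of the triple.

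The algorithm I propose has three steps. First, I would run DP1 and DP2 to compute $X_{\kappa, \rho}$ and $Y_{\kappa, \rho}$ together with the values $f_\kappa(X_{\kappa, \rho})$ and $g_\kappa(Y_{\kappa, \rho})$ for every $\kappa \in J[\alpha, \beta]$ and every $\rho \in \{0, 1, \dots, P\}$; this costs $\mathrm{O}(nP^2)$ as already shown in \Cautoref{lem:twc_nP2}. Second, I would allocate an array $\mathrm{best}[0..P]$ initialized to $+\infty$ and enumerate all $\mathrm{O}(nP^2)$ triples $(\kappa, \rho_1, \rho_2)$, performing for each the $\mathrm{O}(1)$ bucket update $\mathrm{best}[r] \leftarrow \min\{\mathrm{best}[r],\ f_\kappa(X_{\kappa, \rho_1}) + g_\kappa(Y_{\kappa, \rho_2})\}$ with $r = p(J[\alpha, \beta]) - \rho_1 - \rho_2$; this step is again $\mathrm{O}(nP^2)$. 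Third, I would sweep $r$ from $0$ to $P$ while maintaining the running minimum, and whenever $\mathrm{best}[r]$ strictly improves over the running minimum I would output $(\mathrm{best}[r] + \mu,\ r)$ as a Pareto-optimal point; this sweep takes $\mathrm{O}(P)$. The overall running time is $\mathrm{O}(nP^2)$.

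The main subtlety to justify is that, whenever $\max X < \kappa \leq \min Y$, the quantity $f_\kappa(X) + g_\kappa(Y) + \mu$ really equals $\mathrm{TWC}(\sigma_{X, Y})$. This relies on the observation that under this condition the jobs of $Y$ occur strictly after every job of $J[\alpha, \kappa-1]$ in both $\sigma_{X, \emptyset}$ and $\sigma_{X, Y}$, so completion times of jobs in $J[\alpha, \kappa-1]$ coincide across the two sequences; a symmetric argument using $\sigma_{\emptyset, Y}$ handles jobs in $J[\kappa, \beta]$. This decomposition is precisely what allows DP1 and DP2, originally developed to handle a single fixed renting budget, to suffice for reading off the entire Pareto front with only an $\mathrm{O}(nP^2)$ enumeration of triples on top; no extra factor of $P$ (and no logarithmic overhead) beyond that is needed, which is why this is faster than the generic $\mathrm{O}(nP^2\min\{W,P\})$ bound of \Cautoref{thm:Ay}.
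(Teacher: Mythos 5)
Your proposal is correct and follows essentially the same route as the paper: compute all sets $X_{\kappa,\rho}$ and $Y_{\kappa,\rho}$ via DP1 and DP2 in $\mathrm{O}(nP^2)$ time, then aggregate over triples $(\kappa,\rho_1,\rho_2)$ grouped by the induced renting length $p(J[\alpha,\beta])-\rho_1-\rho_2$ to read off the Pareto front. The only difference is cosmetic --- you bucket all $\mathrm{O}(nP^2)$ triples directly into an array indexed by renting length, while the paper performs, for each fixed length $L$, an $\mathrm{O}(nP)$ scan over non-dominated candidates; both yield $\mathrm{O}(nP^2)$ overall, and your justification of the decomposition $f_\kappa(X)+g_\kappa(Y)+\mu = \mathrm{TWC}(\sigma_{X,Y})$ is sound.
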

\begin{proof}
Again, let us obtain subsets ${X}_{\kappa,\rho}$ and ${Y}_{\kappa,\rho}$ for all $\kappa$ with $\alpha < \kappa \leq \beta$ and all $\rho$ with $0 < \rho \le K^r$, as described in the proof of \Cautoref{lem:twc_nP2}. Then for each fixed value $L\le K^r$, we compute $X^*_L = X_{\kappa^*_L,\rho_{1,L}^*}$ and $Y^*_L = Y_{\kappa^*_L,\rho_{2,L}^*}$, where 
\begin{align*}
(\kappa^*_L,\rho_{1,L}^*,\rho_{2,L}^*) \in \argmin_{(\kappa,\rho_1,\rho_2) \in \Xi_L} \{f_\kappa(X_{\kappa,\rho_1})+ g_\kappa(Y_{\kappa,\rho_2})\}
\end{align*}
and 
\begin{align*}
\Xi_L = \{(\kappa,\rho_1,\rho_2) \;|\; {\cal{X}}_{\kappa,\rho_{1}},{\cal{Y}}_{\kappa,\rho_2} \neq \emptyset, \;  p(J[\alpha,\beta]) - \rho_1 - \rho_2 = L \}.
\end{align*}

The Pareto front is obtained by considering all sequences $\sigma_{X^*_{L},Y^*_{L}}$. Obtaining all subsets ${X}_{\kappa,\rho}$ and ${Y}_{\kappa,\rho}$ requires $\mathrm{O}(nP^2)$ and finding $(\kappa^*_L,\rho_{1,L}^*,\rho_{2,L}^*)$ for each $L\le K^r$ requires $\mathrm{O}(nP)$ (see the last paragraph of the proof of \Cautoref{lem:twc_nP2}), which combined requires $\mathrm{O}(nP^2)$ time.
\end{proof}

\section{\hl{Complexity results for $1||\gamma+er$}{R2C1}}
\label{sec:comb}
In this section we consider the composite objective $1||\gamma+er$.
Here, we are given a parameter $\lambda \geq 0$ representing the renting cost per time unit and we want to find a sequence minimizing $\gamma(\sigma) + \lambda K(\sigma)$, where $\gamma(\sigma)$ is the cost of $\sigma$ for objective $\gamma$ and $K(\sigma)$ is the length of the rental period. The value $\lambda$ can also be thought of as a parameter chosen by the decision maker to control the trade-off between renting time and the scheduling objective.

We can use the structural results established in \Cautoref{subsec:twc} to solve $1||\sum w_j C_j+er$ in polynomial time. In fact, we can give an explicit description of optimal sequences for any value of $\lambda\geq 0$. This is discussed in \Cautoref{sec:twc-combined}.

We further observe that any optimal solution to $1||\gamma+er$ is included in the Pareto front of the corresponding instance of the bi-objective problem $1||(\gamma,er)$. Therefore, our results from \Cautoref{sec:other} immediately imply pseudo-polynomial algorithms for the objectives $\max L_j$ and $\sum w_j U_j$, as we can find a solution minimizing the composite objective among all Pareto-optimal solutions by simply enumerating the Pareto front.

\begin{observation}
  If $1||(\gamma,er)$ can be solved in $\mathrm{O}(T(n,P,W))$ time, then $1||\gamma+er$ can be solved in $\mathrm{O}(T(n,P,W))$ time.
\end{observation}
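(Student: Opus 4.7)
The plan is to reduce the composite-objective problem to the enumeration of the Pareto front computed by the assumed algorithm for $1||(\gamma,er)$. The starting point is the simple observation that every optimal sequence $\sigma^*$ for $1||\gamma+er$ must be Pareto-optimal for $1||(\gamma,er)$: if there existed $\sigma'$ with $\gamma(\sigma')\le\gamma(\sigma^*)$ and $er(\sigma')\le er(\sigma^*)$, at least one inequality being strict, then combining these with $\lambda\ge 0$ would give $\gamma(\sigma')+\lambda\cdot er(\sigma')<\gamma(\sigma^*)+\lambda\cdot er(\sigma^*)$, contradicting the choice of $\sigma^*$. Hence it suffices to inspect the Pareto front.

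The algorithm is then straightforward. First, invoke the assumed algorithm for $1||(\gamma,er)$ to compute the entire Pareto front $\mathcal{P}$ in $\mathrm{O}(T(n,P,W))$ time. Second, iterate over all $\sigma\in\mathcal{P}$, compute $\gamma(\sigma)+\lambda\cdot er(\sigma)$ for each, and return a minimizer. The optimality follows from the observation above.

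For the runtime, I would note that the length of the rental period is an integer in $\{0,1,\dots,P\}$, so the Pareto front contains at most $P+1$ points. Evaluating $\gamma(\sigma)+\lambda\cdot er(\sigma)$ on each point takes $\mathrm{O}(n)$ time (it can in fact be maintained incrementally during the construction of the Pareto front). Thus the enumeration adds at most $\mathrm{O}(nP)$ to the runtime, which is absorbed by $\mathrm{O}(T(n,P,W))$ for all the algorithms listed in \Cautoref{tbl:summery}, yielding the claimed bound.

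There is essentially no obstacle here; the main thing to be careful about is to observe that the size of the Pareto front (and hence the enumeration overhead) is never larger than the complexity $T(n,P,W)$ of producing it, which is immediate from $|\mathcal{P}|\le P+1\le T(n,P,W)$. Because the statement is an observation rather than a theorem, a brief justification along these lines is all that is required.
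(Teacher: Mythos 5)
Your proposal is correct and matches the paper's own argument: the paper likewise observes that any optimal solution to $1||\gamma+er$ lies on the Pareto front of $1||(\gamma,er)$ (since $\lambda \geq 0$) and obtains the minimizer of the composite objective by enumerating that front. Your additional remark that the front has at most $P+1$ points, so the enumeration overhead is absorbed into $T(n,P,W)$, is a welcome explicit justification of the runtime bound that the paper leaves implicit.
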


We complement this observation in \Cautoref{sec:Lmax-combined} by observing that the reduction used in the proof of \Cautoref{thm:maxlateness} implies NP-hardness of $1||\max L_j+er$ and $1||\sum U_j+er$.

\subsection{Strongly polynomial time algorithm for $1||\sum w_j C_j+er$}
\label{sec:twc-combined}

As in \Cautoref{subsec:twc}, we assume that the jobs are indexed according to WSPT (i.e., $w_1 / p_1 \geq \dots \geq w_n / p_n$) and let $\alpha = \min J^r$, $\beta = \max J^r$. Recall the definition of the sequence $\sigma_{X,Y}$ for disjoint sets $X, Y \subseteq H := J^o[\alpha, \beta]$ introduced in \Cautoref{subsec:twc}: The sequence consists of five blocks and each block is sorted internally according to WSPT. The first block is $J[1, \alpha-1]$; the second block is $X$; the third block is $J[\alpha, \beta] \setminus (X \cup Y)$; the fourth block is $Y$; the fifth block is $J[\beta+1, n]$.

Due to \Cautoref{lem:twc-opt-structure} for every bound on the length of the renting period there exist $X^*, Y^* \subseteq H$ with $\max X^* < \min Y^*$ such that $\sigma_{X^*, Y^*}$ is optimal for the corresponding instance of $1|er|\sum w_jC_j$. Hence, for each point in the Pareto front of the corresponding instance of $1||(\sum w_jC_j,er)$ there are two such sets. Since we find optimum solutions to $1||\sum w_jC_j+er$ among the points of the Pareto front the following lemma is implied.

\begin{lemma}
\label{lem:combined-opt-structure}
	For each instance of $1||er+\sum w_jC_j$ there exists $X^*, Y^* \subseteq H$ with $\max X^* < \min Y^*$ such that $\sigma_{X^*, Y^*}$ is optimal.
\end{lemma}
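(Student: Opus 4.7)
The plan is to reduce the claim to Lemma~\ref{lem:twc-opt-structure} (the structural result for $1|er|\sum w_jC_j$) by a simple budget-calibration argument. The guiding idea is that if we already know every budgeted instance admits an optimum of the five-block form $\sigma_{X,Y}$, then any optimal solution to the composite problem can be matched in composite cost by such a five-block sequence, because we can pick the budget to be exactly the renting period of the optimum we want to compare against.

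Concretely, I would start by fixing an arbitrary optimal sequence $\sigma^\star$ for $1||\sum w_jC_j+er$ and letting $L^\star := er(\sigma^\star)$ denote its renting-period length. Then $\sigma^\star$ is a feasible sequence for the instance of $1|er|\sum w_jC_j$ in which the renting budget is set to $K^r = L^\star$. Invoking Lemma~\ref{lem:twc-opt-structure} on this budgeted instance yields disjoint sets $X^\star,Y^\star \subseteq H$ with $\max X^\star < \min Y^\star$ such that $\sigma_{X^\star,Y^\star}$ is optimal for the budgeted problem. Optimality in the budgeted problem, together with feasibility with respect to the budget $L^\star$, immediately gives
\[
\mathrm{TWC}(\sigma_{X^\star,Y^\star}) \;\le\; \mathrm{TWC}(\sigma^\star) \quad \text{and} \quad er(\sigma_{X^\star,Y^\star}) \;\le\; L^\star \;=\; er(\sigma^\star).
\]

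To conclude, I would multiply the second inequality by $\lambda\ge 0$ and add it to the first, obtaining
\[
\mathrm{TWC}(\sigma_{X^\star,Y^\star}) + \lambda\cdot er(\sigma_{X^\star,Y^\star}) \;\le\; \mathrm{TWC}(\sigma^\star) + \lambda\cdot er(\sigma^\star).
\]
Since $\sigma^\star$ is optimal for the composite objective, this inequality must hold with equality, so $\sigma_{X^\star,Y^\star}$ is also optimal for $1||\sum w_jC_j+er$, proving the lemma. I do not expect a genuine obstacle: the whole argument is a one-line reduction once the correct budget $L^\star$ is identified. The only conceptual care needed is to observe that Lemma~\ref{lem:twc-opt-structure} applies for \emph{every} choice of renting budget; a detour through the Pareto front of $1||(\sum w_jC_j,er)$ (as alluded to in the text preceding the lemma) would work equally well but is strictly longer than the direct budget-calibration argument sketched above.
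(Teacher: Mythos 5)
Your proof is correct and follows essentially the same route as the paper: both arguments are immediate reductions to Lemma~\ref{lem:twc-opt-structure}, obtained by observing that a composite-optimal sequence is also optimal for the budgeted problem $1|er|\sum w_jC_j$ with budget equal to its own renting period. The paper phrases this via the Pareto front of $1||(\sum w_jC_j,er)$ while you calibrate the budget directly, but the substance is identical.
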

%\begin{proof}
%Let $\sigma$ be any optimal solution for the combined objective and let $K^r := K(\sigma)$ be the length of the renting period for $\sigma$.
%By \Cautoref{lem:twc-opt-structure} there exists of $X^*, Y^* \subseteq H$ with $\max X^* < \min Y^*$ such that $\sigma^* := \sigma_{X^*, Y^*}$ is optimal for the corresponding instance of the budgeted problem $1|er|\sum w_jC_j$ with the length of the renting period bounded by $K^r$.
%As $\sigma$ is also feasible for the budgeted problem, we obtain $\sum_{j \in J} w_jC^{\sigma^*}_j \leq \sum_{j \in J} w_jC^{\sigma}_j$. As $K(\sigma^*) \leq K^r = K(\sigma)$, we conclude that $\sigma^*$ is also optimal for the combined objective.\marginpar{\tiny DB: Do we need a proof for that? We established it for $1|er|\sum w_jC_j$ and, thus, for each point in the Pareto front. As said before, we find our solution among the points in the Pareto front...} %By construction of $\sigma^* := \sigma_{X^*, Y^*}$, all jobs in $J^r$ are ordered according to their WSPT index.
%\end{proof}

In the following, we will establish that we can restrict ourselves to sets $X$ and $Y$ with a particular structure when aiming for optimal solutions to $1||\sum w_j C_j+er$. This restriction, then, enables us to solve the problem in polynomial time, contrasting the NP-hardness of $1|er|\sum C_j$ established in \Cautoref{totalcompletion}.

It remains to construct corresponding sets $X, Y \subseteq H$ such that $\sigma_{X,Y}$ is optimal. We do this as follows. For $j \in H$ let $A_j := J^r[\alpha, j]$ and $B_j = J^r[j, \beta]$, i.e., $A_j$ and $B_j$ are the sets of r-jobs with lower or higher index than $j$, respectively. We define 
\begin{align*}
X_{\lambda} & := \left\{ j \in H : \frac{w_j}{p_j} > \frac{w(A_j) - \lambda}{p(A_j)} \text{ and } \frac{w_j}{p_j} \geq \frac{w(J^r)}{p(J^r)}\right\},\\
Y_{\lambda} & := \left\{ j \in H :  \frac{w_j}{p_j} < \frac{w(B_j) + \lambda}{p(B_j)} \text{ and } \frac{w_j}{p_j} < \frac{w(J^r)}{p(J^r)} \right\}.
\end{align*}
We will show in \Cautoref{lem:Xlambda-optimal} that the sequence $\sigma_{X_{\lambda}, Y_{\lambda}}$ is optimal for the composite objective with unit rental cost $\lambda$. Before we can establish this optimality, we derive two additional structural results.

\begin{lemma}\label{lem:combined-monotonicity}
  If $j \in X_{\lambda}$, then $J^o[\alpha, j] \subseteq X_{\lambda}$. If $j \in Y_{\lambda}$, then $J^o[j, \beta] \subseteq Y_{\lambda}$.
\end{lemma}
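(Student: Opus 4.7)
My plan is to prove both statements by leveraging the WSPT indexing together with a simple algebraic manipulation; the key observation is that any r-job whose index falls between $j'$ and $j$ has a WSPT ratio sandwiched between $w_j/p_j$ and $w_{j'}/p_{j'}$.

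For the first statement, I would fix $j \in X_\lambda$ and $j' \in J^o[\alpha,j]$ with $j' \le j$. Writing $r_k = w_k/p_k$, the WSPT ordering gives $r_{j'} \ge r_j$, so the second defining inequality $r_{j'} \ge w(J^r)/p(J^r)$ is immediate from $r_j \ge w(J^r)/p(J^r)$. For the first defining inequality, I would decompose $A_j = A_{j'} \cup R$ with $R := A_j \setminus A_{j'} \subseteq J^r \cap \{i : j' < i < j\}$. Since every $i \in R$ satisfies $i \le j$, WSPT yields $r_i \ge r_j$, hence
\begin{equation*}
  r_j\, p(R) - w(R) \;=\; \sum_{i \in R} p_i (r_j - r_i) \;\le\; 0.
\end{equation*}
Starting from the hypothesis $r_j\, p(A_j) > w(A_j) - \lambda$ and substituting $p(A_j) = p(A_{j'}) + p(R)$ and $w(A_j) = w(A_{j'}) + w(R)$ then gives
\begin{equation*}
  r_j\, p(A_{j'}) \;>\; w(A_{j'}) - \lambda + \bigl(w(R) - r_j\, p(R)\bigr) \;\ge\; w(A_{j'}) - \lambda.
\end{equation*}
Combined with $r_{j'} \ge r_j$ and the fact that $p(A_{j'}) > 0$ (note $j' > \alpha$ since $j' \in J^o$ and $\alpha \in J^r$, so $\alpha \in A_{j'}$), this yields $r_{j'} > (w(A_{j'}) - \lambda)/p(A_{j'})$, establishing $j' \in X_\lambda$.

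The second statement is symmetric: for $j \in Y_\lambda$ and $j' \in J^o[j,\beta]$, WSPT gives $r_{j'} \le r_j$, so the condition $r_{j'} < w(J^r)/p(J^r)$ is inherited. Decomposing $B_j = R \cup B_{j'}$ with $R \subseteq J^r \cap \{i : j \le i < j'\}$, WSPT now forces $r_i \le r_j$ for each $i \in R$, so $r_j\, p(R) - w(R) \ge 0$. Substituting into $r_j\, p(B_j) < w(B_j) + \lambda$ leads to $r_j\, p(B_{j'}) < w(B_{j'}) + \lambda - (r_j\, p(R) - w(R)) \le w(B_{j'}) + \lambda$, and since $r_{j'} \le r_j$, we conclude $r_{j'} < (w(B_{j'}) + \lambda)/p(B_{j'})$.

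The only subtlety I foresee is boundary bookkeeping: one must check that $A_{j'}$ and $B_{j'}$ are nonempty (which follows from $\alpha < j' < \beta$ whenever $j' \in H$) and that strict inequalities are preserved through the algebra, which they are because the original hypothesis is strict and the extra term added or subtracted has the favorable sign. No deeper structural argument is needed.
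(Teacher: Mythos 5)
Your proof is correct and follows essentially the same route as the paper's: both arguments reduce the first defining inequality for $j'$ to that for $j$ by splitting off $R = A_j \setminus A_{j'}$, using the WSPT ordering to bound $w_j\,p(R) \le w(R)\,p_j$, and then combining with $w_{j'}/p_{j'} \ge w_j/p_j$ (and symmetrically for $Y_\lambda$). Your explicit checks that $A_{j'}$ and $B_{j'}$ are nonempty are a small bit of bookkeeping the paper leaves implicit, but nothing substantive differs.
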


\begin{proof}
  We only show the first statement of the lemma. The second follows by a symmetric argument.
  
  Let $j' \in J^o[\alpha, j]$.
  Note that $j' \leq j$ and hence $A_{j'} \subseteq A_{j}$ and $w_j / p_j \leq w_{j''}/p_{j''}$ for all $j'' \in A_{j} \setminus A_{j'}$. We obtain $w_{j}p(A_j \setminus A_{j'}) \leq w(A_j \setminus A_{j'}) p_{j}$. Note further that $w_{j} p(A_j) > p_{j} (w(A_j) - \lambda)$ because $j \in X_{\lambda}$. Subtracting the former inequality from the latter, we obtain
  $$w_j p(A_{j'}) = w_{j} (p(A_j) - p(A_j \setminus A_{j'})) > p_{j} (w(A_j) - \lambda - w(A_j \setminus A_{j'})) = p_j (w(A_{j'}) - \lambda).$$
  Note that this implies $w_{j'} / p_{j'} \geq w_j / p_j > (w(A_{j'}) - \lambda) / p(A_{j'})$. Because, furthermore, $w_{j'} / p_{j'} \geq w_j / p_j \geq w(J^r) / p(J^r)$, we conclude $j' \in X_{\lambda}$.
\end{proof}

\Cautoref{lem:combined-monotonicity} reveals the particular structure of sets $X$ and $Y$ we restrict ourselves to: there are jobs $j^X\in J^o$ and $j^Y\in J^o$ such that $X_{\lambda}=J^o[\alpha,j^X]$ and $Y_{\lambda}=J^o[j^Y,\beta]$. Intuitively speaking, the first conditions in the definitions of $X_{\lambda}$ and $Y_{\lambda}$ check whether it is beneficial to have a job $j\in H$ moved before the rental period and after the rental period, respectively, rather than having it in the rental period (as implied by WSPT). Note that both might be beneficial, that is $(w(B_j) + \lambda)/p(B_j)>w_j/p_j> (w(A_j) - \lambda)/p(A_j)$ is possible. If there is such a job $j$ fulfilling both inequalities above, then $X_{\lambda}\cup Y_{\lambda}=H$ and it is not immediately clear whether $j$ should go before or after the rental period. In this case the second condition becomes relevant: since only jobs in $J^r$ are in the rental period, a simple comparison with $w(J^r)/p(J^r)$ serves as a tiebreaker.

It remains to show that $\sigma_{X_{\lambda},Y_{\lambda}}$ is indeed optimal which is accomplished by the next two lemmas.

\begin{lemma}\label{lem:combined-containment}
  If $\sigma_{X^*, Y^*}$ is an optimal sequence for some $X^*, Y^* \subseteq H$, then $X_{\lambda} \subseteq X^*$ and $Y_{\lambda} \subseteq Y^*$.
\end{lemma}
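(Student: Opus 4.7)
The plan is to prove both inclusions by a local exchange argument; I will spell out $X_\lambda \subseteq X^*$ in detail, since $Y_\lambda \subseteq Y^*$ follows by a symmetric argument that swaps the roles of the pre- and post-rental blocks and invokes the two strict defining inequalities of $Y_\lambda$ instead. Arguing by contradiction, I would assume $X_\lambda \setminus X^*$ to be non-empty and pick $j$ to be its element of smallest WSPT index. The first step is to invoke \Cautoref{lem:combined-monotonicity} together with the minimality of $j$ to conclude that $H[\alpha, j-1] \subseteq X^*$, which structurally forces the middle block and $Y^*$ of $\sigma_{X^*, Y^*}$ to contain no o-job of index smaller than $j$.

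The core of the argument is to exhibit a reassignment $\sigma' := \sigma_{X^* \cup \{j\}, Y^* \setminus \{j\}}$ that strictly decreases the composite objective, contradicting optimality of $\sigma_{X^*, Y^*}$. Splitting on whether $j$ currently sits in the middle block of $\sigma_{X^*, Y^*}$ or in $Y^*$, I would carefully compute the cost difference $\Delta := (\sum_i w_i C_i^{\sigma'} + \lambda\, er(\sigma')) - (\sum_i w_i C_i^{\sigma_{X^*, Y^*}} + \lambda\, er(\sigma_{X^*, Y^*}))$. In the middle-block sub-case, the reassignment shortens the rental period by $p_j$ and shifts only the jobs of $A_j$ and of $X^* \cap J[j+1, n]$ relative to $j$. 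Tracking completion-time changes, the difference reduces to
\begin{align*}
  \Delta = \bigl(p_j\,w(A_j) - w_j\,p(A_j) - \lambda p_j\bigr) + \sum_{i \in X^* \cap J[j+1, n]} (p_j w_i - w_j p_i),
\end{align*}
whose parenthesised term is strictly negative by the strict first defining inequality $w_j/p_j > (w(A_j) - \lambda)/p(A_j)$ of $X_\lambda$, and whose summands are non-positive by WSPT since $i > j$. In the sub-case $j \in Y^*$, the middle block (and hence the rental period) is unchanged, and an analogous computation yields a $\Delta$ that decomposes into $p_j\,w(J^r) - w_j\,p(J^r)$ plus WSPT-style terms over indices exceeding $j$, all non-positive thanks to the second defining inequality $w_j/p_j \geq w(J^r)/p(J^r)$ of $X_\lambda$.

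The main obstacle I anticipate is securing strict negativity in the $j \in Y^*$ sub-case, where the second defining inequality is only non-strict and so $\Delta$ could in principle vanish under uniform WSPT ratios. My plan to overcome this is to chain the direct exchange with an auxiliary passage through the middle block: because the strict first inequality of $X_\lambda$ embeds $\lambda$, routing $j$ via the intermediate sequence $\sigma_{X^*, Y^* \setminus \{j\}}$ before pushing it forward into $X^*$ exposes the $-\lambda p_j$ contribution and restores the strict improvement needed for the contradiction. Once $X_\lambda \subseteq X^*$ is established, I would obtain $Y_\lambda \subseteq Y^*$ by the symmetric argument, selecting the largest-index element of $Y_\lambda \setminus Y^*$ and exchanging it rightward; because both defining inequalities of $Y_\lambda$ are strict, no degeneracy arises and the analysis is cleaner.
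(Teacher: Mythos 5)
Your main case---$j = \min X_\lambda \setminus X^*$ sitting in the middle block---is exactly the paper's argument: the paper also uses \Cautoref{lem:combined-monotonicity} plus minimality of $j$ to see that $j$ is immediately preceded by precisely the block $A_j$, moves $j$ in front of $\alpha$, and obtains $TWC(\sigma') - TWC(\sigma^*) = w(A_j)p_j - w_jp(A_j) < \lambda p_j$ while the rental period shrinks by $p_j$. Your variant, which additionally jumps $j$ over $X^* \cap J[j+1,n]$, only adds the non-positive WSPT terms you list, so that case is fine.

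The genuine gap is your repair of the $j \in Y^*$ sub-case. The composite objective is a function of the final sequence alone, so the total change from $\sigma_{X^*,Y^*}$ to $\sigma_{X^*\cup\{j\},\,Y^*\setminus\{j\}}$ is independent of the route: the $-\lambda p_j$ you ``expose'' by pushing $j$ out of the middle block is exactly cancelled by the $+\lambda p_j$ you must pay in the first leg, when inserting $j$ into the middle block lengthens the rental period by $p_j$. Chaining through $\sigma_{X^*, Y^*\setminus\{j\}}$ therefore yields the same $\Delta \leq 0$ as the direct exchange, with equality possible when $w_j/p_j = w(J^r)/p(J^r)$ and all jumped o-jobs share that ratio---and then there is no contradiction with optimality. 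To be fair, the paper's own proof silently assumes $j$ lies in the middle block and never treats $j \in Y^*$ at all, and the universally quantified statement genuinely fails in such degenerate instances: take $w_i = p_i$ for all $i$ and $\lambda > 0$; then $X_\lambda = H$, yet $\sigma_{H\setminus\{j\},\{j\}}$ has the same total weighted completion time and the same rental period $p(J^r)$ as $\sigma_{H,\emptyset}$, hence is also optimal while violating $X_\lambda \subseteq X^*$. What your $\Delta \le 0$ computation does establish---and what is all that \Cautoref{lem:Xlambda-optimal} actually needs---is that \emph{some} optimal pair satisfies the containments: whenever $\Delta = 0$, perform the exchange anyway and recurse. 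So the right move is to weaken the lemma to an existential statement (or to restrict attention to optimal pairs maximizing $|X^* \cap X_\lambda| + |Y^* \cap Y_\lambda|$), not to hunt for a strict improvement that does not exist.
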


\begin{proof}
  By contradiction assume that $X_{\lambda} \setminus X^* \neq \emptyset$ and let $j = \min X_{\lambda} \setminus X^*$. 
  Note that $J^o[\alpha, j] \subseteq X_{\lambda}$ by \Cautoref{lem:combined-monotonicity} and therefore $J^o[\alpha, j] \setminus \{j\} \subseteq X^*$ by choice of $j$. Hence $j$ is directly preceded in $\sigma^*$ by a block consisting exactly of the jobs in $A_j = J^r[\alpha, j]$ (ordered by their index). Let $\sigma'$ be the sequence obtained from $\sigma^*$ by moving $j$ to the position directly before $\alpha$. Observe that this decreases the renting cost by $\lambda p_j$ and that further
  \begin{align*}
    TWC(\sigma') - TWC(\sigma^*) =  w(A_j)p_j - w_jp(A_j) < \lambda p_j
  \end{align*}
  where the final inequality follows from $j \in X_{\lambda}$. This contradicts the optimality of $\sigma^*$.
  Hence $X_{\lambda} \subseteq X^*$. The statement $Y_{\lambda} \subseteq Y^*$ follows by a symmetric argument.
\end{proof}

\begin{lemma}\label{lem:Xlambda-optimal}
$\sigma_{X_{\lambda},Y_{\lambda}}$ is an optimal sequence for $1||\sum w_j C_j+er$ with unit rental cost $\lambda$.
\end{lemma}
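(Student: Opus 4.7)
The plan is to show directly that $D(X, Y) \geq D(X_\lambda, Y_\lambda)$ for every feasible pair $(X, Y)$ with $X, Y \subseteq H$ and $\max X < \min Y$, where $D(X, Y) := \mathrm{TWC}(\sigma_{X,Y}) + \lambda K(\sigma_{X,Y})$; by \Cautoref{lem:combined-opt-structure} this suffices. The first step is a pairwise swap analysis comparing $\sigma_{X, Y}$ with the pure WSPT sequence $\sigma_{\emptyset, \emptyset}$. Setting $M := J[\alpha, \beta] \setminus (X \cup Y)$ and using that each block of $\sigma_{X,Y}$ is internally WSPT-ordered and that $\max X < \min Y$, one finds that a pair of jobs in $J[\alpha, \beta]$ reverses its relative order between the two sequences in exactly the following two situations: (i) $j \in X$ and $k \in M$ with $k < j$, or (ii) $j \in Y$ and $k \in M$ with $k > j$. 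Combining the swap contributions with the rental saving $\lambda p_j$ for every $j \in X \cup Y$ yields
\begin{align*}
D(X, Y) - D(\emptyset, \emptyset) &= \sum_{j \in X}\Bigl[\sum_{k \in M \cap J[\alpha, j-1]}(w_k p_j - w_j p_k) - \lambda p_j\Bigr] \\
&\quad + \sum_{j \in Y}\Bigl[\sum_{k \in M \cap J[j+1, \beta]}(w_j p_k - w_k p_j) - \lambda p_j\Bigr].
\end{align*}

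The second step is to split each inner sum according to whether $k$ is an r-job or an o-job. The r-job parts collapse exactly to $\psi_X(j) := w(A_j)p_j - w_j p(A_j) - \lambda p_j$ for $j \in X$ and to $\psi_Y(j) := w_j p(B_j) - w(B_j) p_j - \lambda p_j$ for $j \in Y$. The o-job parts form an ``interaction term'' $I(X, Y)$ whose summands are of the form $w_k p_j - w_j p_k$ (or its symmetric counterpart) with $j, k \in H$ and $k < j$; each such summand is nonnegative because the WSPT numbering guarantees $w_k / p_k \geq w_j / p_j$ whenever $k < j$. Hence
\[
D(X, Y) - D(\emptyset, \emptyset) \;\geq\; \sum_{j \in X} \psi_X(j) + \sum_{j \in Y} \psi_Y(j) \;\geq\; \sum_{j \in H} \min\{\psi_X(j), \psi_Y(j), 0\},
\]
where the last inequality is obtained by assigning each $j \in H$ independently to whichever of $X$, $Y$, or $M$ (contributing $0$) yields the smallest value.

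Third, I would verify that $(X_\lambda, Y_\lambda)$ attains this lower bound. Since $j \in J^o$ makes $A_j$ and $B_j$ disjoint with union $J^r$, a direct calculation gives $\psi_X(j) - \psi_Y(j) = w(J^r)p_j - w_j p(J^r)$, so the second inequality in the definitions of $X_\lambda$ and $Y_\lambda$ (comparing $w_j/p_j$ with $w(J^r)/p(J^r)$) matches the sign of $\psi_X(j) - \psi_Y(j)$. Together with the first inequalities (equivalent to $\psi_\bullet(j) < 0$), a short case distinction then shows that $X_\lambda$ and $Y_\lambda$ place each $j \in H$ into the option realizing $\min\{\psi_X(j), \psi_Y(j), 0\}$, with ties broken in favour of $X$. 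Moreover, \Cautoref{lem:combined-monotonicity} yields $X_\lambda = J^o[\alpha, j^X]$ and $Y_\lambda = J^o[j^Y, \beta]$ for some $j^X < j^Y$, so $H \cap J[\alpha, j-1] \setminus X_\lambda = \emptyset$ for every $j \in X_\lambda$ and symmetrically for $Y_\lambda$; consequently $I(X_\lambda, Y_\lambda) = 0$. Both inequalities above are therefore tight at $(X_\lambda, Y_\lambda)$, which proves optimality.

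The main obstacle is the bookkeeping in the swap analysis. Case (i), where an o-job in $X$ overtakes preceding r-jobs in $A_j$, is easy to notice, but case (ii) — where an o-job in $Y$ is overtaken by a later-indexed job in $M$, which may be an r-job in $B_j$ or an o-job in $H \setminus (X \cup Y)$ — is easily missed. One must also check that no other pair reverses order (this uses $\max X < \min Y$ together with the WSPT ordering within each block) and that every relevant pair is counted exactly once. Once this decomposition is in place, the remaining work reduces to the algebraic identity for $\psi_X - \psi_Y$ and the appeal to \Cautoref{lem:combined-monotonicity}.
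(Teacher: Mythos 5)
Your argument is correct, and it takes a genuinely different route from the paper. The paper proves the lemma by a local exchange argument: it picks, among all optimal structured sequences $\sigma_{X^*,Y^*}$, one minimizing $|X^*\setminus X_\lambda|+|Y^*\setminus Y_\lambda|$, invokes \Cautoref{lem:combined-containment} for $X_\lambda\subseteq X^*$, and then derives a contradiction from $j=\max X^*\setminus X_\lambda$ by showing that moving $j$ out of $X^*$ (past the block $A_j\cup H'$) does not increase the composite objective. You instead compute the objective of \emph{every} structured sequence in closed form relative to the pure WSPT sequence, via an inversion count: the difference decomposes into the per-job terms $\psi_X(j)=w(A_j)p_j-w_jp(A_j)-\lambda p_j$ and $\psi_Y(j)=w_jp(B_j)-w(B_j)p_j-\lambda p_j$ plus a nonnegative interaction term $I(X,Y)$ coming from o-job/o-job inversions. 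I checked the bookkeeping: with $\max X<\min Y$ the only reversed pairs are indeed $(j\in X,\,k\in M,\,k<j)$ and $(j\in Y,\,k\in M,\,k>j)$, the r-job parts collapse exactly to $\psi_X$ and $\psi_Y$, the sign equivalences $\psi_X(j)<0\Leftrightarrow w_j/p_j>(w(A_j)-\lambda)/p(A_j)$ and $\psi_X(j)-\psi_Y(j)=w(J^r)p_j-w_jp(J^r)$ match the two conditions defining $X_\lambda$ and $Y_\lambda$, and \Cautoref{lem:combined-monotonicity} gives both $\max X_\lambda<\min Y_\lambda$ and $I(X_\lambda,Y_\lambda)=0$. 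What your approach buys: it bypasses \Cautoref{lem:combined-containment} entirely, it makes the role of the tiebreaker $w(J^r)/p(J^r)$ transparent (it is exactly the sign of $\psi_X-\psi_Y$), it yields an explicit formula for the optimal value (useful for the subsequent remark on sweeping over $\lambda$), and it proves the slightly stronger statement that $(X_\lambda,Y_\lambda)$ minimizes the composite objective over \emph{all} admissible pairs, not merely that it ties an optimum. The cost is the heavier inversion bookkeeping, which the paper's local-move argument avoids.
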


\begin{proof}
  By \Cautoref{lem:combined-opt-structure} there exist $X^*, Y^* \subseteq H$ such that $\sigma^* := \sigma_{X^*, Y^*}$ is optimal. Without loss of generality, we choose $X^*, Y^*$ so as to minimize $|X^* \setminus X_{\lambda}| + |Y^* \setminus Y_{\lambda}|$. 
  
  We first show that $X^* = X_{\lambda}$. Note that $X_{\lambda} \subseteq X^*$ by \Cautoref{lem:combined-containment}. By contradiction assume that $X^* \setminus X_{\lambda} \neq \emptyset$ and let $j = \max X^* \setminus X_{\lambda}$ (see \Cautoref{fig:erplusgammaa}).
  
  \begin{claim}\label{claim:Aj-property}
    $w_j / p_j \leq (w(A_j) - \lambda) / p(A_j)$
  \end{claim}
  \begin{proof}
    By contradiction assume $w_j / p_j > (w(A_j) - \lambda) / p(A_j)$. Note that this implies $w_j / p_j < w(J^r) / p(J^r)$ because $j \notin X_\lambda$.
  Further note that $j \notin Y_{\lambda}$ because $j \in X^* \subseteq H \setminus Y^*$ and $Y_{\lambda} \subseteq Y^*$ by \Cautoref{lem:combined-containment}. Hence $w_j / p_j \geq (w(B_j) + \lambda) / p(B_j)$.
  However, this implies that $w_j / p_j > (w(A_j) + w(B_j)) / (p(A_j) + p(B_j)) = w(J^r) / p(J^r)$, a contradiction.
  \end{proof}
  
  \begin{claim}\label{claim:better-sequence}
    $\sigma_{X^* \setminus \{j\}, Y^*}$ is an optimal sequence, i.e., $TWC(\sigma_{X^* \setminus \{j\}, Y^*}) = TWC(\sigma^*)$.
  \end{claim}
  \begin{proof}
   Let $H' := J^o[\alpha, j - 1] \setminus X^*$ and consider the sequence $\sigma_{X^* \setminus \{j\}, Y^*}$, which arises from $\sigma^*$ by moving $j$ to the position directly after the block $A_j \cup H'$ that it precedes in $\sigma^*$ (see \Cautoref{fig:erplusgammab}).
  Note that $w_{j'} / p_{j'} \geq w_j / p_j$ for all $j' \in H'$ and hence $w(H') p_j \geq p(H') w_j$. Combining this with \Cautoref{claim:Aj-property}, we obtain $w_j (p(A_j) + p(H')) \leq (w(H') + w(A_j) - \lambda) p_j$.
  Hence
  \begin{align*}
    TWC(\sigma_{X^* \setminus \{j\}, Y^*}) - TWC(\sigma^*) =  w_j (p(A_j) + p(H')) - (w(A_j) + w(H'))p_j \leq -\lambda p_j.
  \end{align*}
  Because the renting periods for the two sequences fulfill $K(\sigma_{X^* \setminus \{j\}, Y^*}) - K(\sigma_{X^*, Y^*}) = p_j$, we conclude that $\sigma_{X^* \setminus \{j\}, Y^*}$ is also an optimal sequence.
  \end{proof}
  
  Note that \Cautoref{claim:better-sequence} yields a contradiction to the choice of $X^*, Y^*$ as minimizer of $|X^* \setminus X_{\lambda}| + |Y^* \setminus Y_{\lambda}|$.
  We thus conclude that $X^* = X_{\lambda}$. By a symmetric argument, we can show $Y^* = Y_{\lambda}$ and hence the sequence $\sigma_{X_{\lambda}, Y_{\lambda}}$ is optimal.
\end{proof}

	\begin{figure}
	\centering
		\begin{subfigure}[b]{\linewidth}
		\centering
		\begin{tikzpicture}[
		%We set the scale and define some styles
		axis/.style={thick, ->, >=stealth'},
		important line/.style={very thick},
		dashed line/.style={dashed,  thick},
		every node/.style={color=black,}
		]

		\node[fill=green!10, rectangle, draw=black, minimum height=.5cm,minimum width=2cm] at (1.5,0.5) {\scriptsize$X^* \setminus \{j\}$};
		\node[preaction={fill=green!10}, fill=green!10,  pattern=north west lines, pattern color=red!10, rectangle, draw=black, minimum height=.5cm,minimum width=2cm] at (4,0.5) {};
		\node[fill=green!10, rectangle, draw=black, minimum height=.5cm,minimum width=2.5cm] at (6.25,0.5) {\scriptsize $Y^*$};
		\node[fill=green!10, rectangle, draw=black, minimum height=.5cm,minimum width=.5cm] at (2.75,0.5) {\scriptsize$j$};

		\draw[dotted] (.5,1) -- (.5,-0.5) node(yline)[below] {\scriptsize$t_{\alpha}$};
		\draw[dotted] (7.5,1) -- (7.5,-0.5) node(yline)[below] {\scriptsize$t_{\beta}$};

		\draw[decoration={brace,raise=5pt},decorate]
		(.5,.7) -- node[above=6pt] {\scriptsize $X^*$} (3,.7);

		\draw[axis] (-.1,0)  -- (8.5,0);
		\end{tikzpicture}
		\caption{The position of job $j$ in sequences $\sigma_{X^*, Y^*}$.}
		\label{fig:erplusgammaa}
	\end{subfigure}
	\vfill\vfill

	\begin{subfigure}[b]{\linewidth}
		\centering
		\begin{tikzpicture}[
		%We set the scale and define some styles
		axis/.style={thick, ->, >=stealth'},
		important line/.style={very thick},
		dashed line/.style={dashed,  thick},
		every node/.style={color=black,}
		]

		\node[fill=green!10, rectangle, draw=black, minimum height=.5cm,minimum width=2cm] at (1.5,0.5) {\scriptsize$X^* \setminus \{j\}$};
		\node[preaction={fill=green!10}, fill=green!10,  pattern=north west lines, pattern color=red!10, rectangle, draw=black, minimum height=.5cm,minimum width=2.5cm] at (3.75,0.5) {};
		\node[fill=green!10, rectangle, draw=black, minimum height=.5cm,minimum width=2.5cm] at (6.25,0.5) {\scriptsize $Y^*$};
		\node[fill=green!10, rectangle, draw=black, minimum height=.5cm,minimum width=.5cm] at (4,0.5) {\scriptsize$j$};

		\draw[dotted] (.5,1) -- (.5,-0.5) node(yline)[below] {\scriptsize$t_{\alpha}$};
		\draw[dotted] (7.5,1) -- (7.5,-0.5) node(yline)[below] {\scriptsize$t_{\beta}$};

		\draw[decoration={brace,raise=5pt},decorate]
		(2.5,.7) -- node[above=6pt] {\scriptsize $H'\cup A_j$} (3.75,.7);

		\draw[axis] (-.1,0)  -- (8.5,0);
		\end{tikzpicture}
		\caption{The position of job $j$ in sequences $\sigma_{X^* \setminus \{j\}, Y^*}$.}
		\label{fig:erplusgammab}
	\end{subfigure}

	\caption{The position of job $j = \max X^* \setminus X_\lambda$ in sequences $\sigma_{X^*, Y^*}$ and $\sigma_{X^* \setminus \{j\}, Y^*}$ in \Cautoref{lem:Xlambda-optimal}.}
	\label{fig:erplusgamma}
\end{figure}
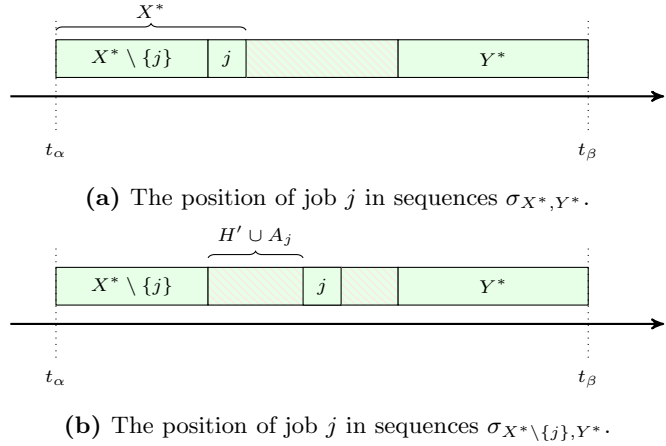

\begin{theorem}
  $1||er+\sum w_jC_j$ can be solved in time $\mathrm{O}(n \log n)$.
\end{theorem}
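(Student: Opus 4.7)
My plan is to invoke \Cautoref{lem:Xlambda-optimal}, which establishes that the sequence $\sigma_{X_\lambda, Y_\lambda}$ is optimal, and then show that both the sets $X_\lambda, Y_\lambda$ and the corresponding sequence can be constructed in $\mathrm{O}(n \log n)$ time.

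First, I would sort the jobs in WSPT order, which takes $\mathrm{O}(n \log n)$ time and dominates the running time of the whole algorithm; after this step, identifying the indices $\alpha = \min J^r$ and $\beta = \max J^r$ as well as computing $w(J^r)/p(J^r)$ requires only a single pass of cost $\mathrm{O}(n)$.

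Next, I would compute, for every $j \in H$, the quantities $w(A_j)$ and $p(A_j)$ by a single left-to-right scan through the jobs in $J[\alpha,\beta]$ in WSPT order, maintaining running sums over r-jobs encountered so far; analogously, $w(B_j)$ and $p(B_j)$ are obtained by a single right-to-left scan. Thus, for each $j \in H$ the three ratios $w_j/p_j$, $(w(A_j)-\lambda)/p(A_j)$, $(w(B_j)+\lambda)/p(B_j)$, and $w(J^r)/p(J^r)$ are available in constant time, and we can test the two conditions in the definition of $X_\lambda$ and $Y_\lambda$ in constant time per job. This produces the sets $X_\lambda$ and $Y_\lambda$ in total time $\mathrm{O}(n)$.

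Finally, I would assemble the sequence $\sigma_{X_\lambda, Y_\lambda}$ by concatenating the five blocks $J[1,\alpha-1]$, $X_\lambda$, $J[\alpha,\beta]\setminus(X_\lambda\cup Y_\lambda)$, $Y_\lambda$, $J[\beta+1,n]$. Since the jobs are already in WSPT order from the initial sort and each block inherits this order, no further sorting is required and the concatenation takes $\mathrm{O}(n)$. The correctness follows directly from \Cautoref{lem:Xlambda-optimal}. The only potential subtlety is that $\max X_\lambda < \min Y_\lambda$ is needed for $\sigma_{X_\lambda,Y_\lambda}$ to be well defined; but this is guaranteed by the tiebreaker in the definitions, since $j \in X_\lambda$ requires $w_j/p_j \geq w(J^r)/p(J^r)$ while $j \in Y_\lambda$ requires the strict opposite, so $X_\lambda \cap Y_\lambda = \emptyset$, and combined with \Cautoref{lem:combined-monotonicity} ($X_\lambda$ is a WSPT-prefix of $H$ and $Y_\lambda$ is a WSPT-suffix of $H$) this yields the required ordering. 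The total running time is $\mathrm{O}(n \log n)$, dominated by the initial sort.
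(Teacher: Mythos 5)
Your proposal is correct and follows the same route as the paper: sort by WSPT in $\mathrm{O}(n\log n)$, compute $X_\lambda$ and $Y_\lambda$ in linear time via running sums, and invoke \Cautoref{lem:Xlambda-optimal} for optimality. The paper states this more tersely, but your added details (prefix/suffix scans for $A_j$, $B_j$, and the argument via the tiebreaker and \Cautoref{lem:combined-monotonicity} that $\max X_\lambda < \min Y_\lambda$) are all consistent with its reasoning.
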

\begin{proof}
By the above analysis, it suffices to compute the sets $X_{\lambda}, Y_{\lambda}$ and construct the induced sequence. This can be done in linear time once the jobs are ordered according to WSPT. Sorting the jobs can be done in time $\mathrm{O}(n \log n)$.
\end{proof}

\begin{remark}
In fact, our analysis implies that we can obtain optimal solutions for \emph{all} values of $\lambda$ without increasing the computational effort: Because $X_{\lambda} \subseteq X_{\lambda'}$ and $Y_{\lambda} \subseteq Y_{\lambda'}$ for $\lambda < \lambda'$, there are at most $n$ values of $\lambda$ for which the solution computed by the algorithm changes. The corresponding threshold values for $\lambda$ can be obtained in linear time once the jobs are sorted. We can thus efficiently enumerate solutions representing the entire lower convex envelope of the Pareto front.
\end{remark}

\subsection{Complexity results for $1||er+\max L_j$ and $1||er+\sum U_j$}
\label{sec:Lmax-combined}

The hardness for these two variants follows immediately from the construction for the hardness result in \Cautoref{thm:maxlateness}. 

\begin{theorem}
$1||er+\max L_j$ and $1||er+\sum U_j$ are NP-hard.
\end{theorem}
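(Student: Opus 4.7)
The plan is to reuse the Partition reduction from the proof of \Cautoref{thm:maxlateness} verbatim, replacing the hard rental-budget constraint by a soft penalty with an appropriately small per-unit rental cost $\lambda$. Given a Partition instance with target $B$, I would take the same $m+2$ jobs, processing times, due dates, and resource set $J^r = \{m+1, m+2\}$ as in \Cautoref{thm:maxlateness}, and set $\lambda = 1/(B+3)$ (any rational $\lambda \in (0, 1/(B+2))$ would do). The aim is then to show that the optimal composite value of $1||\max L_j+er$ (respectively $1||\sum U_j+er$) equals $\lambda(B+2)$ if and only if the Partition instance is a yes-instance.

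The forward direction is immediate from the construction in \Cautoref{thm:maxlateness}: any partition yields the schedule that places half the $a_i$'s first, then $m+1$ finishing at $B+1$, then the other half, then $m+2$ finishing at $2B+2$, giving $\max L_j = 0$ (equivalently $\sum U_j = 0$) and $er = B+2$, hence composite value exactly $\lambda(B+2)$.

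For the reverse direction, I would perform a case analysis over every schedule $\sigma$ based on whether $m+2$ is last, and on the total processing time $S$ of the jobs preceding $m+1$ in $\sigma$. If $m+2$ is not last, some other job, whose due date is at most $2B+1$, completes at time $2B+2$, so $\max L_j \geq 1$ and at least one job is tardy; the composite cost is then at least $1 > \lambda(B+2)$ by the choice $\lambda < 1/(B+2)$. If $m+2$ is last and $S > B$, then $L_{m+1} = S - B \geq 1$ and $U_{m+1} = 1$, so again the cost exceeds $\lambda(B+2)$. Finally, if $m+2$ is last and $S < B$ (the equality $S = B$ being ruled out by the no-instance assumption), then all jobs are on time but $er = 2B+2 - S \geq B+3$, so the rental cost alone is at least $\lambda(B+3) > \lambda(B+2)$. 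Thus in every schedule the composite value strictly exceeds $\lambda(B+2)$, which establishes the reverse direction and hence NP-hardness of both variants.

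The step I expect to require the most care is verifying that the case analysis above is truly exhaustive and that no exotic schedule, such as one placing $m+1$ very late to shrink $er$, slips below the threshold. The argument stands because, once $m+2$ is fixed as the last job, both $C_{m+1}$ and $er = 2B+2-S$ depend only on $S$, reducing the problem to a one-dimensional trade-off; and the choice $\lambda < 1/(B+2)$ is calibrated so that a single unit of lateness (or a single tardy job) already costs more than the entire $B+2$ units of rental period that a feasible zero-lateness schedule would consume. The same analysis handles both objectives simultaneously, since zero maximum lateness coincides with zero tardy jobs, and any $\max L_j \geq 1$ forces $\sum U_j \geq 1$.
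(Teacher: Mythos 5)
Your proposal is correct and takes essentially the same route as the paper: it reuses the \textsc{Partition} instance from \Cautoref{thm:maxlateness} with unit rental cost $\lambda = \frac{1}{B+3}$, calibrated so that a single unit of lateness (or a single tardy job), and likewise any rental period of length at least $B+3$, already costs at least $1 > \lambda(B+2)$. Your explicit case analysis on whether $m+2$ is last and on the load $S$ preceding job $m+1$ is just a more detailed write-up of the same separation argument the paper sketches.
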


\begin{proof}
  Consider the instance of $1||\max L_j$ constructed in the proof of \Cautoref{thm:maxlateness}. Recall that for this instance $\max_j L_j^{\sigma} \geq 0$ for any sequence $\sigma$ as no due date exceeds the makespan. Furthermore, the corresponding instance of \textsc{Partition} is a yes instance if and only if there exists a sequence $\sigma$ with $\max_j L_j^{\sigma} = 0$ and $K(\sigma) \leq K^r = B+2$.
  Note that solving the corresponding instance of $1||er+\max L_j$ with unit rental cost $\lambda = \frac{1}{B + 3}$ will yield such as sequence if it exists: Indeed, $L_j^{\sigma} = 0$ and $K(\sigma) \leq K^r = B+2$ implies $\max_j L^j + \lambda K(\sigma) \leq \frac{B+2}{B+3} < 1$. As all processing times are integer, any optimal solution for the composite objective must have maximum lateness zero and minimize the length of the rental period among all such sequences. An identical argument applies to the $\sum U_j$-objective.
\end{proof}

\section{Summary and conclusion}
\label{sec:conclusion}

We study \hlim{four} classes of single machines scheduling problems with an external resource: a class of problems where the length of the renting period is budgeted and the scheduling cost needs to be minimized, a class of problems where the scheduling cost is budgeted and the length of the renting period needs to be minimized, a class of two-objective problems where both the length of the renting period and the scheduling cost are to be minimized\hlim{, and, finally, a class of problems where total costs, that is scheduling costs plus rental costs, is to be minimized}. For each class, we consider total (weighted) completion time, maximum lateness, or weighted number of tardy jobs as the scheduling cost function. We show that all discussed problems \hlim{but one} are NP-hard in ordinary sense. \hlim{The remaining problem, namely $1||\sum w_j C_j+er$, can be solved in polynomial time.} \Cautoref{tbl:summery} provides a summary of the complexity of the proposed pseudo-polynomial algorithms in this paper. \hl{It remains open whether problems with scheduling costs reflecting total tardiness can be solved in pseudo-polynomial time.}{R1C3}

A natural generalization considers the case where rental intervals have to be determined for multiple distinct resources and each job can only be scheduled when all its required resources are available. This setting constitutes a generalization of the \emph{linear arrangement problem} (LAP; \citet{Adolphson1973,Liu1995}) to hypergraphs: The jobs correspond to the nodes and each set of jobs requiring a specific resource corresponds to a hyperedge. A schedule corresponds to an ordering of the nodes, where each hyperedge incurs a cost proportional to the difference of the latest completion time and the earliest start time of a job within the hyperedge. The LAP is notorious for being computationally challenging both in theory and practice. Still, devising exponential-time exact methods or efficient approximation algorithms for this setting are interesting directions of future research. \hl{Another interesting generalization is to allow for interrupted multiple rental intervals by considering a fixed rental cost (in addition to the time-sensitive rental cost) each time the equipment is rented.}{R2C1} 

\hli{\paragraph{Acknowledgements} We thank two anonymous referees for their helpful comments, in particular, for suggesting the composite objective discussed in \Cautoref{sec:comb}.}

%% The Appendices part is started with the command \appendix;
%% appendix sections are then done as normal sections
%% \appendix

%% \section{}
%% \label{}

%% If you have bibdatabase file and want bibtex to generate the
%% bibitems, please use
%%
%%  \bibliographystyle{elsarticle-harv} 
%%  \bibliography{<your bibdatabase>}

%% else use the following coding to input the bibitems directly in the
%% TeX file.

\bibliographystyle{plainnat}
\bibliography{scheduling}

\begin{thebibliography}{22}
\providecommand{\natexlab}[1]{#1}
\providecommand{\url}[1]{\texttt{#1}}
\expandafter\ifx\csname urlstyle\endcsname\relax
  \providecommand{\doi}[1]{doi: #1}\else
  \providecommand{\doi}{doi: \begingroup \urlstyle{rm}\Url}\fi

\bibitem[Adolphson and Hu(1973)]{Adolphson1973}
Adolphson, D. and Hu, T.~C.
\newblock Optimal linear ordering.
\newblock \emph{SIAM Journal on Applied Mathematics}, 25\penalty0 (3):\penalty0
  403--423, 1973.

\bibitem[Amb{\"u}hl and Mastrolilli(2009)]{AM09}
Amb{\"u}hl, C. and Mastrolilli, M.
\newblock Single machine precedence constrained scheduling is a vertex cover
  problem.
\newblock \emph{Algorithmica}, 53\penalty0 (4):\penalty0 488--503, 2009.

\bibitem[Brown and Wilson(2005)]{Brown2005}
Brown, D. and Wilson, S.
\newblock \emph{The Black Book Outsourcing: How to Manage the Changes,
  Challenges, and Opportunities.}
\newblock Wiley, 2005.

\bibitem[Chen et~al.(2018)Chen, Megow, Rischke, Stougie, and
  Verschae]{Chen2018}
Chen, L., Megow, N., Rischke, R., Stougie, L., and Verschae, J.
\newblock Optimal algorithms for scheduling under time-of-use tariffs.
\newblock Technical report, arXiv preprint arXiv:1811.12657, 2018.

\bibitem[Drexl and Kimms(2001)]{Drexl2001}
Drexl, A. and Kimms, A.
\newblock Optimization guided lower and upper bounds for the resource
  investment problem.
\newblock \emph{Journal of the Operational Research Society}, 52\penalty0
  (3):\penalty0 340--351, 2001.

\bibitem[Garey and Johnson(1979)]{Garey1979}
Garey, M. and Johnson, D.
\newblock \emph{{Computers and Intractability: A Guide to the Theory of
  {NP}-Completeness}}.
\newblock W. H. Freeman, 1979.

\bibitem[Garey et~al.(1988)Garey, Tarjan, and Wilfong]{Garey1988}
Garey, M., Tarjan, R., and Wilfong, G.
\newblock One-processor scheduling with symmetric earliness and tardiness
  penalties.
\newblock \emph{Mathematics of Operations Research}, 13\penalty0 (2):\penalty0
  330--348, 1988.

\bibitem[Karp(1972)]{Karp:72}
Karp, R.
\newblock Reducibility among combinatorial problems.
\newblock In \emph{Complexity of computer computations (Proc. Sympos., IBM
  Thomas J. Watson Res. Center, Yorktown Heights, N.Y., 1972)}, pages 85--103.
  Plenum, 1972.

\bibitem[Kerkhove et~al.(2017)Kerkhove, Vanhoucke, and Maenhout]{Kerkhove2017}
Kerkhove, L.~P., Vanhoucke, M., and Maenhout, B.
\newblock On the resource renting problem with overtime.
\newblock \emph{Computers \& Industrial Engineering}, 111:\penalty0 303--319,
  2017.

\bibitem[Lawler(1973)]{Lawler1973}
Lawler, E.
\newblock Optimal sequencing of a single machine subject to precedence
  constraints.
\newblock \emph{Management Sci.}, 19:\penalty0 544--546, 1973.

\bibitem[Lawler(1978)]{Lawler1978}
Lawler, E.
\newblock Sequencing jobs to minimize total weighted completion time subject to
  precedence constraints.
\newblock \emph{Annals of Discrete Mathematics}, 2:\penalty0 75--90, 1978.

\bibitem[Lawler and Moore(1969)]{Lawler1969}
Lawler, E. and Moore, J.
\newblock A functional equation and its application to resource allocation and
  sequencing problems.
\newblock \emph{Management Science}, 16\penalty0 (1):\penalty0 77--84, 1969.

\bibitem[Lenstra et~al.(1977)Lenstra, Rinnooy~Kan, and Brucker]{Lenstra1977}
Lenstra, J., Rinnooy~Kan, A., and Brucker, P.
\newblock Complexity of machine scheduling problems.
\newblock \emph{Ann. of Discrete Math.}, 1:\penalty0 343--362, 1977.

\bibitem[Liu and Vannelli(1995)]{Liu1995}
Liu, W. and Vannelli, A.
\newblock Generating lower bounds for the linear arrangement problem.
\newblock \emph{Discrete Applied Mathematics}, 59\penalty0 (2):\penalty0
  137--151, 1995.

\bibitem[M\"{o}hring(1984)]{Mohring1984b}
M\"{o}hring, R.~H.
\newblock Minimizing costs of resource requirements in project networks subject
  to a fixed completion time.
\newblock \emph{Operations Research}, 32\penalty0 (1):\penalty0 89--120, 1984.

\bibitem[N{\"u}bel(2001)]{Nubel2001}
N{\"u}bel, H.
\newblock The resource renting problem subject to temporal constraints.
\newblock \emph{OR-Spektrum}, 23\penalty0 (3):\penalty0 359--381, Aug 2001.

\bibitem[Rodrigues and Yamashita(2010)]{Rodrigues2010}
Rodrigues, S.~B. and Yamashita, D.~S.
\newblock An exact algorithm for minimizing resource availability costs in
  project scheduling.
\newblock \emph{European Journal of Operational Research}, 206\penalty0
  (3):\penalty0 562--568, 2010.

\bibitem[Sahni(1976)]{Sahni1976}
Sahni, S.~K.
\newblock Algorithms for scheduling independent tasks.
\newblock \emph{J. ACM}, 23\penalty0 (1):\penalty0 116–127, 1976.

\bibitem[Smith(1956)]{Smith1956}
Smith, W.~E.
\newblock Various optimizers for single-stage production.
\newblock \emph{Naval Research Logistics Quarterly}, 3\penalty0
  (1‐2):\penalty0 59--66, 1956.

\bibitem[Vandenheede et~al.(2016)Vandenheede, Vanhoucke, and
  Maenhout]{Vandenheede2016}
Vandenheede, L., Vanhoucke, M., and Maenhout, B.
\newblock A scatter search for the extended resource renting problem.
\newblock \emph{International Journal of Production Research}, 54\penalty0
  (16):\penalty0 4723--4743, 2016.

\bibitem[Wan and Qi(2010)]{Wan2010}
Wan, G. and Qi, X.
\newblock Scheduling with variable time slot costs.
\newblock \emph{Naval Research Logistics (NRL)}, 57\penalty0 (2):\penalty0
  159--171, 2010.

\bibitem[Wikum et~al.(1994)Wikum, Llewellyn, and Nemhauser]{Wikum1994}
Wikum, E.~D., Llewellyn, D.~C., and Nemhauser, G.~L.
\newblock One-machine generalized precedence constrained scheduling problems.
\newblock \emph{Operations Research Letters}, 16\penalty0 (2):\penalty0 87--99,
  1994.

\end{thebibliography}
\end{document}